\newtheorem{conjecture}{Conjecture}[section]
\newtheorem{theorem}{Theorem}[section]
\newtheorem{proposition}{Proposition}[section]
\newtheorem{lemma}{Lemma}[section]
\newtheorem{corollary}{Corollary}[section]
\theoremstyle{definition}
\newtheorem{example}{Example}[section]
\newtheorem{definition}{Definition}[section]
\newcommand{\myparagraph}[1]{{\textbf{#1}.}}
\newcommand{\db}{\mathbf{db}}
\newcommand{\FO}{$\mathbf{FO}$}
\newcommand{\PTIME}{$\mathbf{PTIME}$}
\newcommand{\LSPACE}{$\mathbf{L}$}
\newcommand{\coNP}{$\mathbf{coNP}$}
\newcommand{\NL}{$\mathbf{NL}$}
\newcommand{\LFPL}{$\mathbf{LFP}$}
\newcommand{\sjf}[1]{{#1}^{\mathsf{sjf}}}
\newcommand{\attacks}[1]{\stackrel{#1}{\leadsto}}
\newenvironment{jefenv}{\mbox{}\newline\color{olive}}{}
\newtheorem{claim}{Claim}[section]
\newcommand{\defeq}{\vcentcolon=}
\newcommand{\formula}[1]{\left({#1}\right)}
\newcommand{\card}[1]{|{#1}|}
\newcommand{\adom}[1]{{\mathsf{adom}}({#1})}
\newcommand{\cqa}[1]{\mathsf{CERTAINTY}({#1})}
\newcommand{\certaintrace}[1]{\mathsf{CERTAIN}_{\mathsf{tr}}({#1})}
\newcommand{\rootedAt}[2]{{#2}^{{#1}}_{{\vartriangle}}}
\newcommand{\queryvars}[1]{{\mathbf{vars}}({#1})}
\newcommand{\leqhomo}[2]{\leq_{{#1}\rightarrow{#2}}}
\newcommand{\qcopy}[3]{\langle{#1}, {[#2 \rightarrow #3]}\rangle}
\newcommand{\atoms}[1]{\mathsf{atoms}({#1})}
\newcommand{\emptyword}{\varepsilon}
\newcommand{\cbranch}{\mathsf{C}_{\mathsf{branch}}}
\newcommand{\cone}{\mathsf{C}_{1}}
\newcommand{\ctwo}{\mathsf{C}_{2}}
\newcommand{\cfactor}{\mathsf{C}_{\mathsf{factor}}}
\newcommand{\cprefix}{\mathsf{C}_{\mathsf{prefix}}}
\newcommand{\starttwo}[2]{\mathsf{start}({#1}, {#2})}
\newcommand{\sset}[3]{{\mathsf{FrugalSet}}_{#3}({#1},{#2})}
\newcommand{\treecfg}[1]{{\mathsf{CFG^{\clubsuit}}}({#1})}
\newcommand{\streecfg}[2]{{\mathsf{S\text{-}CFG^{\clubsuit}}}({#1},{#2})}
\newcommand{\derive}{\stackrel{*}{\rightarrow}}
\newcommand{\vardetermines}[1]{\rightarrow_{{#1}}}
\newcommand{\vardisjoint}[1]{\parallel_{{#1}}}
\newcommand{\vars}[1]{\mathsf{vars}({#1})}
\newcommand{\constants}[1]{\mathsf{const}({#1})}
\newcommand{\rep}{\mathbf{r}}
\newcommand{\normalformula}[1]{({#1})}
\newcommand{\pair}[2]{\langle{#1},{#2}\rangle}
\newcommand{\atomAt}[3]{{#1}[{#2}]}
\newcommand{\rewind}[4]{{#1}^{{#4}:{#2}\looparrowright{#3}}}
\newcommand{\precedes}[1]{<_{{#1}}}
\newcommand{\diffbranch}[1]{\parallel_{{#1}}}
\newcommand{\mytotalorder}[1]{\preceq_{#1}}
\newcommand{\factformulapred}{\mathsf{fact}}
\newcommand{\factformula}[4]{\factformulapred({#1}(\underline{{#2}},{#3}),{#4})}
\newcommand{\bcq}{\mathsf{BCQ}}
\newcommand{\ibcq}{\mathsf{GraphBCQ}}
\newcommand{\aibcq}{\mathsf{Graph_{Berge}BCQ}}
\newcommand{\rtbcq}{\mathsf{TreeBCQ}}
\newcommand{\qgraph}[1]{{\mathcal{G}}({#1})}
\author[1]{Paraschos Koutris}
\author[1]{Xiating Ouyang}
\author[2]{Jef Wijsen}
\affil[1]{University of Wisconsin-Madison, USA}
\affil[2]{University of Mons, Belgium}
\date{}
\title{Consistent Query Answering for Primary Keys on Rooted Tree Queries}
\begin{document}
\maketitle
 
\begin{abstract}
We study the data complexity of consistent query answering (CQA) on databases that may violate the primary key constraints. 
A repair is a maximal subset of the database satisfying the primary key constraints. 
For a Boolean query~$q$, the problem $\cqa{q}$ takes a database as input, and asks whether or not each repair satisfies~$q$. 
The computational complexity of $\cqa{q}$ has been established whenever $q$ is a self-join-free Boolean conjunctive query, or a (not necessarily self-join-free) Boolean path query.
In this paper, we take one more step towards a general classification for all Boolean conjunctive queries by considering the class of rooted tree queries. 
In particular, we show that for every rooted tree query~$q$, $\cqa{q}$ is in \FO, \mbox{\NL-hard $\cap$ \LFPL}, or \coNP-complete, and it is decidable (in polynomial time), given~$q$, which of the three cases applies. 
We also extend our classification to larger classes of queries with simple primary keys.
Our classification criteria rely on query homomorphisms and our polynomial-time fixpoint algorithm is based on a novel use of context-free grammar (CFG). 
\end{abstract}

\section{Introduction}\label{sec:introduction}









A relational database is \emph{inconsistent} if it violates one or more integrity constraints that are supposed to be satisfied. 
Database inconsistency is a common issue when integrating datasets from heterogeneous sources.
In this paper, we focus on what are probably the most commonly imposed integrity constraints on relational databases: primary keys.
A primary key constraint enforces that no two distinct tuples in the same relation agree on all primary key attributes.

A \emph{repair} of such an inconsistent database instance is naturally defined as a maximal consistent subinstance of the database.
Two approaches are then possible.
In \emph{data cleaning}, the objective is to single out the ``best'' repair, which however may not be practically possible.
In \emph{consistent query answering} (CQA)~\cite{10.1145/303976.303983}, instead of cleaning the inconsistent database instance, we attempt to query \emph{every} possible repair of the database and obtain the \emph{consistent} (or \emph{certain}) answers that are returned across all repairs.
In computational complexity studies, consistent query answering is commonly defined as the following decision problem, for a fixed Boolean query~$q$ and fixed primary keys for all relation names occurring in~$q$:  

\begin{itemize}
	\item[] \textbf{PROBLEM} $\cqa{q}$
	\item[] \textbf{Input}: A database instance $\db$.
	\item[] \textbf{Question}: Does $q$ evaluate to true on every repair of $\db$?
\end{itemize}
The CQA problem for queries $q(\vec{x})$ with free variables is to find all sequences of constants $\vec{c}$, of the same length as~$\vec{x}$, such that $q(\vec{c})$ is true in every repair. We often do not need separate treatment for different constants, in which case we can handle $q(\vec{x})$ as Boolean by treating free variables as if they were constants~\cite{DBLP:journals/pacmmod/FanKOW23,10.1145/3340531.3411911}.

The problem $\cqa{q}$ is obviously in \coNP\ for every Boolean first-order query~$q$. 
It has been extensively studied for $q$ in the class of Boolean conjunctive queries, denoted $\bcq$.
Despite significant research efforts (see Section~\ref{sec:related}), the following dichotomy conjecture remains notoriously open.

\begin{conjecture}\label{conj:dichotomy}
For every query $q$ in $\bcq$, $\cqa{q}$ is 
either in \PTIME\ or \coNP-complete.
\end{conjecture}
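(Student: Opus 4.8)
The conjecture is still open, so what follows is not a complete argument but the line of attack that the present paper pursues and that looks most promising in general. The guiding idea is to climb a ladder of progressively richer syntactic fragments of $\bcq$ --- self-join-free queries, then Boolean path queries, then (this paper) rooted tree queries, then arbitrary acyclic queries, and eventually cyclic ones --- and at each level to pin down a purely \emph{combinatorial criterion} on~$q$ that separates the tractable instances from the intractable ones, backed by matching upper and lower bounds. Since $\cqa{q}$ is always in \coNP, the task at each level splits cleanly into (i)~a polynomial-time algorithm when the criterion holds, (ii)~a \coNP-hardness reduction when it fails, and (iii)~a proof that the criterion is decidable from~$q$.

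For the tractability side the plan is to generalize the attack-graph machinery: for self-join-free~$q$, $\cqa{q}$ is in \PTIME\ exactly when the attack graph $\qgraph{q}$ is acyclic, the algorithm being a cascade of consistent first-order rewritings that peel off unattacked atoms. Self-joins destroy this, because two atoms sharing a relation name can no longer be resolved in isolation: the valuation chosen at one occurrence constrains the others. The remedy used here is to replace one-shot rewriting by a \emph{fixpoint} computation that propagates, block by block along the tree, which partial valuations can still be salvaged, with the admissible ways of gluing per-block choices encoded by a context-free grammar. Extending this to the full conjecture would require such an automaton- or grammar-driven fixpoint for queries of arbitrary shape, which is exactly where acyclicity is being used.

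For the intractability side the plan is to show that whenever the combinatorial criterion fails, $\cqa{q}$ is \coNP-hard via a reduction from a monotone satisfiability or hypergraph-covering problem: one plants the offending configuration (a strong attacking cycle, or an irreducible self-join pattern) inside the input database so that repairs correspond to assignments and a satisfying assignment is read off a repair falsifying~$q$. The subtle points are uniformity --- the reduction must be produced from~$q$ alone --- and effectiveness of the criterion, for which the plan is homomorphism tests between~$q$ and its ``pumped'' variants, as in the \FO\ versus \NL\ versus \coNP-complete trichotomy we establish for rooted trees.

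The step I expect to be the main obstacle is the interaction between self-joins and non-acyclicity. Rooted tree queries are precisely the regime where self-joins are present yet the query can still be unfolded into a bounded family of paths, which is what lets the grammar-based fixpoint go through; dropping either restriction --- permitting a second root, or a genuine cycle --- introduces long-range consistency constraints among repair choices that present attack-graph and pumping arguments do not capture, and it is not inconceivable that some such query currently resists being placed either in \PTIME\ or in \coNP-complete. Resolving that, by either a stronger tractable algorithm or a new hardness construction, is the crux of settling Conjecture~\ref{conj:dichotomy}.
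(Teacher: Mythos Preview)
The statement is a \emph{conjecture}, not a theorem: the paper explicitly calls it ``notoriously open'' and does not prove it. You correctly recognize this and offer a strategic outline rather than a proof, so there is no ``gap'' in the usual sense --- there is nothing to compare against, because the paper contains no proof of Conjecture~\ref{conj:dichotomy}.

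Your outline of the paper's line of attack is broadly accurate: the ladder of fragments (self-join-free $\to$ path $\to$ rooted tree $\to$ Berge-acyclic $\ibcq$), the replacement of one-shot rewriting by a fixpoint driven by a context-free grammar, and the Monotone SAT reductions for hardness are all indeed the ingredients the paper uses on its restricted classes. Two small corrections: the tractability criterion the paper isolates for rooted trees is not an attack-graph acyclicity condition but the homomorphism-based conditions $\cone$/$\ctwo$ on rewindings (Definition~\ref{definition:rewinding}); and the paper's own discussion of obstacles (Section~\ref{sec:beyond}) singles out not just cyclicity but also queries where a variable occurs at the primary-key position of two atoms --- Proposition~\ref{prop:single-pk} shows that even acyclic queries of this shape already exhibit all three complexity levels, so ``self-joins plus non-acyclicity'' is not quite the right framing of the frontier.
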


An ever stronger conjecture is that the dichotomy of Conjecture~\ref{conj:dichotomy} extends to unions of conjunctive queries.
Fontaine~\cite{10.1145/2699912} showed that this stronger conjecture implies the dichotomy theorem for conservative \emph{Constraint Satisfaction Problems} (CSP)~\cite{10.1145/1970398.1970400,DBLP:journals/jacm/Zhuk20}.

On the other hand, for self-join-free queries $q$ in $\bcq$, the complexity of $\cqa{q}$ is well established by the next theorem.


\begin{theorem}[\cite{KoutrisWTOCS20}] \label{thm:sjf-theorem}
For every self-join-free query~$q$ in $\bcq$, $\cqa{q}$ is 
in \FO, \LSPACE-complete, or \coNP-complete, and it is decidable in polynomial time in the size of $q$ which of the three cases applies. 
\end{theorem}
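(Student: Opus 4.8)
\emph{Classification criterion.} I would let the trichotomy turn on the \emph{attack graph} $\directedqgraph{q}$ of the self-join-free query $q$. For an atom $F$ of $q$, let $F^{+,q}$ be the set of variables functionally determined by $\keyvars{F}$ under the dependencies $\{\keyvars{G}\to\vars{G} \mid G\in q\setminus\{F\}\}$ contributed by the \emph{other} atoms, and say $F$ \emph{attacks} $G$, written $F\attacks{q}G$, when $F\neq G$ and some variable of $G$ lying outside $F^{+,q}$ is linked to $\vars{F}$ by a chain of variables that all avoid $F^{+,q}$ and occur in atoms distinct from $F$. Let $\directedqgraph{q}$ have the atoms as vertices and the attacks as directed edges; call an attack $F\attacks{q}G$ \emph{weak} if $\keyvars{F}\vardetermines{q}\keyvars{G}$ (now using \emph{all} dependencies of $q$) and \emph{strong} otherwise, and call a cycle of $\directedqgraph{q}$ \emph{strong} if it uses at least one strong attack. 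The plan is then to prove that $\cqa{q}$ is in \FO{} iff $\directedqgraph{q}$ is acyclic; $\cqa{q}$ is \coNP-complete iff $\directedqgraph{q}$ has a strong cycle; and $\cqa{q}$ is \LSPACE-complete in the remaining case. Since $F^{+,q}$, the graph, the edge labels, and the presence of (strong) cycles are all computable in time polynomial in $|q|$, this simultaneously yields the decidability claim.

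\emph{Acyclic attack graph.} I would argue by induction on the number of atoms of $q$ that $\cqa{q}$ is in \FO. A finite acyclic $\directedqgraph{q}$ has a source, that is, an \emph{unattacked} atom $F$; the structural heart of this case is that, $F$ being unattacked, fixing the block of $F$ in a repair already pins down consistent choices for the remaining atoms. Hence $\cqa{q}$ is captured by a first-order sentence that (i)~existentially guesses the key value $\vec{a}$ that $F$ takes in a certifying block, (ii)~universally ranges over every fact $F(\vec{a},\cdot)$ of that block, and (iii)~for each such fact asserts $\cqa{q'}$, where $q'$ deletes $F$ from $q$ and substitutes that fact's values. Deleting an atom cannot create attacks, so $\directedqgraph{q'}$ is again acyclic and the induction hypothesis makes $\cqa{q'}$ first-order; the work lies in proving this attack-free-atom rewriting correct against all repairs.

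\emph{Lower bounds when $\directedqgraph{q}$ is cyclic.} I would first show that a cyclic attack graph already rules out \FO: invoking the lemma that a cyclic $\directedqgraph{q}$ contains a cycle of length two, I would reduce an \LSPACE-complete problem -- undirected $s$--$t$ connectivity -- into $\cqa{q}$ by threading the connectivity instance through such a $2$-cycle, which gives \LSPACE-hardness in both cyclic cases. If, moreover, some cycle is strong, I would upgrade this to \coNP-hardness: a strong attack lying on a cycle makes the relevant atoms behave like the canonical \coNP-hard query $\exists x\exists y\exists z\,\bigl(R(\underline{x},z)\wedge S(\underline{y},z)\bigr)$ -- the strong direction contributes a key variable of the target atom that is not recoverable from the source's key, hence a free binary choice per element, while the return attack propagates a constraint back around the cycle -- so that composing with a reduction of $3$-satisfiability (equivalently, of set splitting) into that canonical query establishes \coNP-hardness of $\cqa{q}$. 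Membership in \coNP\ is immediate.

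\emph{Upper bound in the middle case (cyclic, no strong cycle) -- the main obstacle.} When every cycle of $\directedqgraph{q}$ is weak, the dependencies of $q$ make it ``saturated'': fixing the block of one atom on a cycle forces consistent blocks for all atoms around that cycle. I would turn this into a monotone fixpoint that marks the key values forced to satisfy $q$, read off $\cqa{q}$ from the fixpoint in polynomial time, and then observe that the whole computation reduces to reachability in an undirected graph built from $\db$, so that it runs in logarithmic space by Reingold's theorem -- establishing the \LSPACE{} upper bound, with correctness again argued repair by repair via the weak-attack structure. I expect the genuinely hard steps to be precisely (a)~this middle-case upper bound, both designing the fixpoint/reachability algorithm and proving that it computes the certain answer, and (b)~making the \coNP-hardness gadgets go through from an \emph{arbitrary} strong cycle rather than a fixed small query; by contrast the \FO{} case is a clean induction, and the remaining difficulty -- isolating the attack graph with its weak/strong edge labels as the governing invariant -- is ``only'' in getting the definitions right.
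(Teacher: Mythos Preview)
This theorem is not proved in the present paper at all: it is stated with a citation to \cite{KoutrisWTOCS20} and used as background, so there is no ``paper's own proof'' to compare against. The paper only recalls the attack-graph machinery in Appendix~\ref{appendix:berge} (for the proof of Proposition~\ref{prop:berge}), and what it recalls there---the closure $F^{+,q}$, attacks along paths sharing variables outside $F^{+,q}$, the weak/strong distinction via $\keyvars{F}\to\keyvars{G}$, and the fact that a (strong) cycle yields \LSPACE-hardness (resp.\ \coNP-hardness)---matches your plan.

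Your sketch is a faithful high-level outline of the proof in \cite{KoutrisW15,KoutrisW17,KoutrisWTOCS20}: the trichotomy is governed precisely by acyclicity of the attack graph and the presence of a strong cycle, the \FO\ case is handled by peeling off an unattacked atom and recursing, hardness is extracted from a $2$-cycle (which any attack-graph cycle can be shortened to), and the \LSPACE\ upper bound in the weak-cycle case is indeed the deepest part. Two caveats worth flagging. First, your phrasing of ``attack'' is slightly nonstandard; the definition used in the literature (and recalled in this paper's appendix) is in terms of a path of \emph{atoms} in the query, consecutive atoms sharing a variable outside $F^{+,q}$---you should make sure your variable-chain formulation is actually equivalent, since the correctness of the \FO\ rewriting hinges on the precise notion. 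Second, the \LSPACE\ upper bound in \cite{KoutrisWTOCS20} is substantially more delicate than ``reduces to undirected reachability'': the algorithm is a symmetric-stratified-Datalog/fixpoint construction whose correctness proof is the technical core of that paper, and your one-sentence appeal to Reingold's theorem understates what is required. You correctly identify (a) and (b) as the hard steps, but (a) in particular is not a routine exercise.
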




Past research has indicated that the tools for proving Theorem~\ref{thm:sjf-theorem} largely fall short in dealing with difficulties caused by self-joins.
A notable example concerns \emph{path queries}, i.e., queries of the form:
\begin{equation*}\label{eq:path}
\exists x_{1}\dotsm\exists x_{k+1}\normalformula{R_{1}(\underline{x_{1}},x_{2})\land 
R_{2}(\underline{x_{2}},x_{3})\land\dotsm\land R_{k}(\underline{x_{k}},x_{k+1})}.
\end{equation*}
If a query of this form is self-join-free (i.e., if  $R_{i}\neq R_{j}$ whenever $i\neq j$), then the ``attack graph'' tool~\cite{KoutrisWTOCS20} immediately tells us that $\cqa{q}$ is in~\FO. However, for path queries~$q$ with self-joins, $\cqa{q}$ exhibits a tetrachotomy between \FO, \NL-complete, \PTIME-complete, and \coNP-complete \cite{DBLP:conf/pods/KoutrisOW21}, and the complexity classification requires sophisticated tools.
Note incidentally that self-join-freeness is a simplifying assumption that is also frequent outside CQA (e.g., \cite{FreireGIM15, berkholz2017answering, FreireGIM20,DBLP:journals/tocl/ArenasBM21}).


A natural question is to extend the complexity classification for path queries to  queries that are syntactically less constrained.
In particular, while path queries are restricted to binary relation names, we aim for unrestricted arities, as in practical database systems, which brings us to the construct of tree queries.

A query~$q$ in $\bcq$ is a \emph{rooted (ordered) tree query} if it is uniquely (up to a variable renaming) representable by a rooted ordered tree in which each non-leaf vertex is labeled by a relation name, and each leaf vertex is labeled by a unary relation name, a constant, or~$\bot$. 
The query~$q$ is read from this tree as follows: each vertex labeled by either a relation name or~$\bot$ is first associated with a fresh variable, and each vertex labeled by a constant is associated with that same constant; then, a vertex labeled with relation name $R$ and associated with variable~$x$ represents the query atom $R(\underline{x},y_{1},\ldots,y_{n})$, where $y_{1},\ldots,y_{n}$ are the symbols (variables or constants) associated with the left-to-right ordered children of the vertex~$x$.
The underlined position is the primary key.
Note that a vertex labeled with a relation name of arity $n+1$ must have $n$ children.
For example, consider the rooted tree in Fig.~\ref{fig:intro-query}(a) and associate fresh variables to its vertices as depicted in Fig.~\ref{fig:intro-query}(b). The rooted tree thus represents a query~$q_{1}$ that contains, among others, the atoms $C(\underline{x},y,z)$ and $R(\underline{y},u_{1},v_{1})$. 
It is easy to see that every path query is a rooted tree query. 
The class of all rooted tree queries is denoted $\rtbcq$.
We can now present our main results.

\begin{figure}
\centering
\subfloat[A rooted ordered tree representing~$q_{1}$.]{
\includegraphics{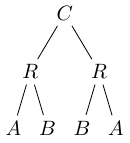}
}  
\qquad 
\subfloat[Each vertex is associated with a fresh variable.]{
\includegraphics{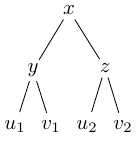}
}
\qquad
\subfloat[A rooted ordered tree representing~$q_{2}$.]{
\includegraphics{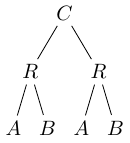}
} 
\caption{The left rooted ordered tree represents (up to a variable renaming) the Boolean conjunctive query $q_{1}$ with atoms  $C(\underline{x}, y, z)$, $R(\underline{y}, u_1, v_1)$, $A(\underline{u_1})$, $B(\underline{v_1})$, $R(\underline{z}, u_2, v_2)$, $B(\underline{u_2})$,  $A(\underline{v_2})$.
The right rooted ordered tree represents $q_2$ with atoms 
$C(\underline{x}, y, z)$, $R(\underline{y}, u_1, v_1)$, $A(\underline{u_1})$, $B(\underline{v_1})$, $R(\underline{z}, u_2, v_2)$, $A(\underline{u_2})$,  $B(\underline{v_2})$.
}
\label{fig:intro-query}
\end{figure}


\begin{theorem}
\label{thm:tree-main}
For every query~$q$ in $\rtbcq$, $\cqa{q}$ is in \FO, \mbox{\NL-hard $\cap$ \LFPL}, or \coNP-complete, and it is decidable in polynomial time in the size of $q$ which of the three cases applies.
\end{theorem}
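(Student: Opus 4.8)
The plan is to follow the high-level template behind Theorem~\ref{thm:sjf-theorem} and the path-query tetrachotomy of~\cite{DBLP:conf/pods/KoutrisOW21}, but with the automata-theoretic machinery replaced by a context-free one. Concretely, I would (i)~isolate a purely syntactic classification of a rooted tree query~$q$ into three classes, stated in terms of homomorphisms among the subtrees of~$q$; (ii)~for each class, prove the matching upper bound --- membership in \FO, membership in \LFPL, or the trivial \coNP\ bound that holds for every first-order~$q$ --- and the matching lower bound --- none, \NL-hardness, or \coNP-hardness; and (iii)~check that class membership is decidable in polynomial time in~$|q|$.

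For the \coNP-complete case I would first pin down the structural obstruction that forces hardness: informally, a tree vertex whose children induce pairwise ``incompatible'' subtrees under the relevant homomorphism preorder, so that over a suitably constructed database one is forced to make, and to propagate consistently, a binary choice at each of many mutually independent blocks --- exactly the phenomenon behind the \coNP-hard path queries of~\cite{DBLP:conf/pods/KoutrisOW21}. Whenever this obstruction occurs I would give a polynomial-time reduction from a known \coNP-complete problem (ultimately from monotone \textsc{sat}), using the obstruction as a gadget and the branching of the query tree to wire the variable/clause structure. The classification then declares $\cqa{q}$ \coNP-hard exactly when $q$ contains such an obstruction.

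The core of the argument concerns the queries that survive the \coNP-hardness test. For these I would build, for an input database~$\db$, an algorithm deciding whether some repair \emph{falsifies}~$q$; equivalently, $\cqa{q}$ holds on~$\db$ iff no such repair exists. The key observation is that, because $q$ is a tree, the collection of partial match attempts that a falsifying repair must block is itself tree-structured, and is therefore captured by a context-free grammar derived from~$q$: branching vertices of~$q$ yield branching productions, non-branching edges yield ``step'' productions, and leaves close off derivations. Intersecting this grammar's language with the facts of~$\db$ and enforcing block-consistency turns the question into a CFL-reachability-style least-fixed-point computation over~$\db$, placing $\cqa{q}$ in~\LFPL. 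For the sub-case that should land in~\FO\ I would show that this fixpoint stabilizes after a number of stages bounded by a function of~$q$ alone --- this is where a query-graph/attack-graph acyclicity condition, in the spirit of the self-join-free case, is invoked --- so that the fixpoint unrolls into a single first-order sentence. Finally, for the middle queries I would prove \NL-hardness by locating a path-like non-\FO\ sub-pattern inside the tree and reducing directed reachability to $\cqa{q}$, reusing the gadgets behind the \NL-hard path queries.

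Polynomial-time decidability of the classification is then routine: each defining condition is a Boolean combination of statements of the form ``there is (no) homomorphism between two subtrees of~$q$'' (possibly into small products thereof), homomorphism from a tree-shaped source into a fixed finite structure is in \PTIME, and $q$ has only polynomially many subtrees. I expect the \textbf{main obstacle} to be twofold and intertwined. First, proving that the grammar-based fixpoint captures \emph{exactly} the existence of a falsifying repair: the delicate point is the interplay between the tree-branching of~$q$ and the block structure of~$\db$, so that the ``rewinding''-style arguments available for path queries genuinely have to be lifted to a context-free setting. Second, calibrating the homomorphism-based classification so that a single criterion is simultaneously sufficient for \FO, sufficient for \coNP-hardness on its complement, strong enough to force \NL-hardness off the \FO\ region, and still polynomial-time checkable --- the closely related pair $q_1, q_2$ from Fig.~\ref{fig:intro-query} already hints at how fine-grained the distinction must be.
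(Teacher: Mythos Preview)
Your proposal is correct and tracks the paper's approach closely: a CFG derived from~$q$, an \LFPL\ fixpoint over it, monotone-\textsc{sat} for \coNP-hardness, and directed reachability for \NL-hardness. Two calibrations are worth noting. First, the \FO\ criterion is not an attack-graph acyclicity condition; it is a root-preserving strengthening~$\cone$ of the \LFPL\ criterion~$\ctwo$ (both phrased via the rewinding operation $\rewind{q}{y}{x}{R}$), and under~$\cone$ the backward productions of the CFG become redundant so the fixpoint unrolls into a fixed-depth formula. Second, the correctness of the fixpoint---in particular the ``enforcing block-consistency'' step you leave abstract---is carried by the existence of a \emph{frugal repair}, a universal repair that simultaneously minimizes, over all blocks, the set of query atoms each chosen fact can serve; this is the device that lets a single repair witness the answer and is where condition~$\cbranch$ (the branching half of~$\ctwo$) is actually used.
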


Here \LFPL\ denotes least fixed point logic as defined in~\cite[p.~181]{DBLP:books/sp/Libkin04} (a.k.a.\ \FO[\LFPL]), and \NL\ denotes the class of problems decidable by a non-deterministic Turing machine using only logarithmic space. 
The classification criteria implied in Theorem~\ref{thm:tree-main} are explicitly stated in Theorem~\ref{thm:trichotomy}.

It will turn out that  subtree homomorphisms play a crucial role in the complexity classification of $\cqa{q}$ for queries~$q$ in $\rtbcq$.
For example, our results show that for the queries $q_{1}$ and~$q_{2}$ represented in, respectively, Fig.~\ref{fig:intro-query}(a) and~(c), $\cqa{q_{1}}$ is \coNP-complete, while $\cqa{q_{2}}$ is in~\FO.
The difference occurs because the two ordered subtrees rooted at $R$ are isomorphic in $q_{2}$ ($A$ precedes $B$ in both subtrees), but not in $q_{1}$.
Another novel and useful tool in the complexity classification is a context-free grammar (CFG) that generalizes the NFA for path queries used in~\cite{DBLP:conf/pods/KoutrisOW21}.

Once Theorem~\ref{thm:tree-main} is proved, it is natural to generalize rooted tree queries further by allowing queries that can be represented by graphs that are not trees.
%

We thereto define $\ibcq$ (Definition~\ref{def:ibcq}), a subclass of $\bcq$ that extends $\rtbcq$.
In $\ibcq$ queries, two distinct atoms can share a variable occurring at non-primary-key positions, which requires representations by DAGs rather than trees. Moreover, $\ibcq$ gives up on the acyclicity requirement that is cooked into $\rtbcq$. 

Significantly, we were able to establish the \FO-boundary in the set $\{\cqa{q}\mid q\in\ibcq\}$. 

\begin{theorem} 
\label{thm:trichotomy:ext:no-berge}
For every query~$q$ in $\ibcq$,
it is decidable whether or not $\cqa{q}$ is in \FO; and when it is, a first-order rewriting can be effectively constructed.
\end{theorem}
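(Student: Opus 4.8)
The plan is to isolate a structural criterion on a query $q\in\ibcq$ --- call it $\focondition_{\mathsf{FO}}(q)$ --- that generalizes the \FO-criterion already obtained for $\rtbcq$ (the one made explicit in Theorem~\ref{thm:trichotomy}), and to prove that it captures membership in \FO\ from both sides: (a) if $\focondition_{\mathsf{FO}}(q)$ holds, exhibit an effective first-order rewriting of $\cqa{q}$; (b) if $\focondition_{\mathsf{FO}}(q)$ fails, give a (quantifier-free) first-order reduction to $\cqa{q}$ from some $\cqa{q_{0}}$ already known not to be in \FO\ --- the canonical choice being a self-join path query $q_{0}$ that is \NL-complete, such as $\exists x\exists y\exists z\,(R(\underline{x},y)\land R(\underline{y},z))$ from the path-query classification. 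Since $\mathbf{FO}\subsetneq\mathbf{NL}$ holds unconditionally and \FO\ is closed under first-order reductions, (b) rules out $\cqa{q}$ being in \FO. The criterion $\focondition_{\mathsf{FO}}(q)$ is phrased in terms of the attacks $F\attacks{q}G$ among the atoms of~$q$ (the relation of Theorem~\ref{thm:sjf-theorem}, adapted to self-joins) together with a homomorphism-collapse condition on the sub-DAGs dangling below each atom --- precisely the phenomenon distinguishing $q_{1}$ from $q_{2}$ in Fig.~\ref{fig:intro-query}. For decidability we observe that, although $\focondition_{\mathsf{FO}}$ a priori quantifies over unboundedly many ``unfoldings'' of~$q$, the relevant unfoldings (and hence the needed homomorphism checks) stabilize after a bound computable from~$q$; an equivalent packaging is to reduce the \FO-membership question for $\cqa{q}$ to the same question for a finite, effectively computable family of rooted tree queries derived from~$q$, and then invoke Theorem~\ref{thm:trichotomy}.

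For the \FO\ upper bound (direction (a)), I would build the rewriting by structural recursion, reusing the recursive ``certainty'' formulas $\cert{q}{\cdot}{\cdot}$ from the proof of Theorem~\ref{thm:tree-main}. The simple-key assumption is the enabling feature: a block of an inconsistent relation is a set of tuples sharing one key value, so ``in every repair, the tuple chosen from this block extends to a homomorphic image of the sub-DAG rooted here'' is a universal quantification over the block, and the recursion descends along the query graph. The two genuinely new features of $\ibcq$ relative to $\rtbcq$ are handled as follows: variable sharing at non-key positions is absorbed by inserting equalities into the recursive formula, turning a tree recursion into a DAG recursion of polynomial size; and cycles in the query graph are exactly where $\focondition_{\mathsf{FO}}(q)$ is used --- when it holds, every cycle of~$q$ must close up homomorphically, so after quotienting by the forced identifications the recursion is well-founded and the resulting formula is first-order. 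Correctness is the usual pair of inclusions: soundness (every repair satisfying $q$ satisfies the rewriting) is routine, while completeness (a repair falsifying $q$ yields a witness falsifying the rewriting) is where the homomorphism-collapse hypothesis is invoked to guarantee that the witness exists.

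For the non-\FO\ direction (b), assume $\focondition_{\mathsf{FO}}(q)$ fails. Then~$q$ contains a ``non-collapsing'' cyclic (or attack-cyclic) fragment, from which I would extract a short chain of atoms whose keys attack one another in a pattern mimicking~$q_{0}$. Given an input $\db_{0}$ to $\cqa{q_{0}}$, build $\db$ by instantiating this fragment with a copy of $\db_{0}$ and instantiating every remaining atom of~$q$ with fixed dummy tuples, chosen so that (i) the dummy part is consistent and matched by a unique homomorphism in every repair, and (ii) no unintended homomorphism of~$q$ into a repair appears that would make $q$ true where $q_{0}$ is false on $\db_{0}$. Establishing (ii) is the delicate step, and it is where self-joins bite: one uses distinct fresh constants at the various atom positions of the dummy part and exploits that a non-collapsing fragment cannot be folded, so no ``short-cut'' homomorphism of~$q$ exists. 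The reduction is quantifier-free, hence first-order, so $\cqa{q}$ is \NL-hard and therefore not in \FO.

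The main obstacle is pinning down $\focondition_{\mathsf{FO}}(q)$ so that it is simultaneously decidable, sufficient for the rewriting to be \emph{correct on cyclic queries}, and necessary, i.e., its failure always yields a clean reduction. Cyclicity is the real novelty over $\rtbcq$: an uncontrolled query cycle can manufacture homomorphisms ``for free'' inside a repair, so the rewriting must be guarded by exactly the right homomorphism-collapse condition --- too weak and soundness or completeness breaks, too strong and the lower bound no longer covers the complement --- and in direction (b) one must certify that the surviving non-collapsing fragment genuinely transports the hardness of~$q_{0}$ without being short-circuited. Making the upper and lower bounds meet at the same criterion, rather than leaving a gap, is where most of the effort goes.
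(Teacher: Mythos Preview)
Your proposal is not a proof but a programme: you explicitly leave the criterion $\focondition_{\mathsf{FO}}(q)$ undefined and name ``pinning it down'' as the main obstacle. More importantly, you are working much harder than necessary because you miss the structural simplification that makes the paper's argument nearly immediate. You attempt to build an \FO\ rewriting that recurses over DAGs and even over query cycles, guarded by a yet-to-be-found homomorphism-collapse condition. But in $\ibcq$ no such rewriting is ever needed for a non-tree shape: if a \emph{minimal connected} query $q\in\ibcq$ is not already a rooted tree query, then some variable occurs at a non-key position in two distinct atoms; in the self-join-free version $\sjf{q}$ these two atoms attack each other, so $\cqa{\sjf{q}}$ is \LSPACE-hard by the known self-join-free classification, and the Bridging Lemma (Lemma~\ref{lemma:bridging-lemma}) transfers this to $\cqa{q}$. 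Hence such a $q$ is never in \FO.

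The paper's criterion is therefore just this: minimize $q$ to $q^{*}$ (which stays in $\ibcq$), and declare $\cqa{q}$ to be in \FO\ iff every connected component of $q^{*}$ lies in $\rtbcq$ and satisfies $\cone$. The upper bound is then literally the tree-query rewriting of Proposition~\ref{coro:ctwo} applied componentwise (via Lemma~\ref{lemma:conn-hard}); the lower bound is either the \LSPACE-hardness above (for a non-tree component, Proposition~\ref{prop:berge}) or the \NL-hardness from Theorem~\ref{thm:trichotomy} (for a tree component violating $\cone$). Decidability and effective construction of the rewriting are immediate. Your attack-graph intuition and your plan to reduce from a hard path query are in the right spirit for the lower bound, but the upper-bound side of your plan --- extending the recursive rewriting to genuine DAGs and cycles --- is attacking a case that simply does not arise.
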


So far, we have not achieved a fine-grained complexity classification of all problems in $\{\cqa{q}\mid q\in\ibcq\}$.
However, we were able to do so for the set of Berge-acyclic queries in $\ibcq$, denoted $\aibcq$. Recall that a conjunctive query is Berge-acyclic if its incidence graph (i.e., the undirected bipartite graph that connects every variable~$x$ to all query atoms in which~$x$ occurs) is acyclic.

\begin{theorem} 
\label{thm:trichotomy:ext}
For every query~$q$ in $\aibcq$, 
the problem $\cqa{q}$ is in \FO, \mbox{\NL-hard $\cap$ \LFPL}, or \coNP-complete, and it is decidable in polynomial time in the size of $q$ which of the three cases applies.
\end{theorem}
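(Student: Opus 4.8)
The plan is to reduce the classification for $\aibcq$ to the already-established classification for $\rtbcq$ (Theorem~\ref{thm:tree-main}, with criteria as in Theorem~\ref{thm:trichotomy}), by showing that every Berge-acyclic query in $\ibcq$ is, for the purposes of $\cqa{\cdot}$, ``essentially'' a rooted tree query up to a complexity-preserving transformation. First I would recall the structural normal form of $\ibcq$ queries: since primary keys are simple, the incidence graph being acyclic (Berge-acyclicity) means that the directed query graph $\qgraph{q}$ decomposes into a forest-like shape where the only way a variable is shared between two atoms is at non-key positions, and no variable occurs twice at a key position. I would argue that each connected component can be oriented and rooted so that the atoms form a tree in which edges go from an atom to the atoms rooted at its non-key children, the difference from $\rtbcq$ being (i) possibly several disconnected components, (ii) leaves that are variables shared "sideways" rather than fresh, and (iii) no a-priori left-to-right order on children.

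The key steps, in order, are: (1) Handle multiple connected components: show $\cqa{q}$ for $q = q_1 \wedge \dots \wedge q_m$ with the $q_i$ on disjoint relation names reduces to (and from) the individual $\cqa{q_i}$, so the complexity of $\cqa{q}$ is the maximum over components — this is routine since repairs of a database split along relation names that do not interact. (2) Normalize each component into a rooted tree query: use a reification/renaming argument (the paper already has $\rifisymbol$ machinery) to split a shared non-key variable into fresh copies tied together by auxiliary equality-encoding atoms, or conversely fold such structure into the $\bot$/constant leaf labels of $\rtbcq$; the point is to produce a $q' \in \rtbcq$ with $\cqa{q}$ and $\cqa{q'}$ interreducible in $\mathbf{FO}$. (3) Deal with the absence of child order: observe that the $\rtbcq$ classification criteria of Theorem~\ref{thm:trichotomy} are really about subtree homomorphisms, and when children are unordered one simply quantifies over all orderings — since a $\aibcq$ query has an unordered tree representation, I would show that its complexity equals that of the ordered tree query obtained by fixing \emph{any} child order, because the classification criterion (existence of subtree homomorphisms between siblings, as illustrated by the $q_1$ vs.\ $q_2$ example) is order-invariant in the relevant sense. (4) Invoke Theorem~\ref{thm:tree-main} on the resulting rooted tree query to land in one of $\mathbf{FO}$, $\mathbf{NL}$-hard $\cap$ $\mathbf{LFP}$, or $\coNP$-complete, and check that steps (1)–(3) preserve membership in each class and that the combined criterion is still polynomial-time decidable in $|q|$.

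The main obstacle I expect is step (3): ensuring that giving up the ordering in the tree representation does not change the complexity. For ordered rooted tree queries the classification hinges on whether sibling subtrees admit homomorphisms in a particular direction, and for an unordered tree one must verify that the ``right'' order — the one that makes $\cqa{q}$ hardest, matching the intrinsic complexity — is always realizable and is the one the $\ibcq$ semantics forces. A subtlety is that two non-isomorphic orderings of the same unordered children could, a priori, yield different complexities for the ordered query; I would need to prove this cannot happen for queries that genuinely arise from $\aibcq$, i.e., that the unordered semantics coincides with the ordered semantics for a canonically chosen order, or else that all orders give the same complexity. The secondary obstacle is making the reification in step (2) complexity-preserving in both directions; here I would lean on the fact that equality of two constants/values can be enforced by a gadget that does not introduce new key conflicts, so repairs correspond bijectively, keeping the reduction first-order in both directions. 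Once these are handled, the decidability-in-polynomial-time claim follows because each transformation is polynomial and the $\rtbcq$ criterion is polynomial-time checkable.
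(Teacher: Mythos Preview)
Your proposal has a genuine gap in step~(2), and step~(3) rests on a misconception.

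On step~(3): the child order in a rooted relation tree is not an arbitrary choice---it is forced by the attribute positions of the atom. If $R(\underline{x},y_1,\ldots,y_n)$ is in~$q$, the children of the vertex~$x$ are $y_1,\ldots,y_n$ \emph{in that order}. So there is no ``unordered'' semantics to reconcile; this step is a non-issue.

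The real problem is step~(2). You assume that every connected Berge-acyclic query in $\ibcq$ can be massaged into a rooted tree query by a complexity-preserving reification of shared non-key variables. This is not how the paper proceeds, and it would not work: a connected \emph{minimal} query $q\in\ibcq\setminus\rtbcq$ necessarily contains two atoms $R(\underline{x},\ldots,z,\ldots)$ and $S(\underline{y},\ldots,z,\ldots)$ with $x\neq y$ sharing a non-key variable~$z$. The paper does not try to ``untangle'' this into a tree. Instead it proves (Proposition~\ref{prop:berge}) that any such~$q$, if Berge-acyclic, already has $\cqa{q}$ \coNP-hard. The argument goes through the self-join-free version $\sjf{q}$: the two atoms give mutual attacks in the attack graph of $\sjf{q}$, Berge-acyclicity forces one of these attacks to be strong, so $\cqa{\sjf{q}}$ is \coNP-hard by the known self-join-free classification, and the Bridging Lemma (Lemma~\ref{lemma:bridging-lemma}) transfers this hardness to~$q$ because~$q$ is minimal. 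Thus the extension to $\aibcq$ is not a reduction \emph{to} $\rtbcq$; it is a dichotomy: after minimizing and splitting into connected components, each component is either literally in $\rtbcq$ (and Theorem~\ref{thm:trichotomy} applies) or is \coNP-complete outright.

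You are also missing the minimization step: one must pass to the minimal equivalent query~$q^{*}$ before decomposing into components and before invoking the Bridging Lemma, as Example~\ref{ex:bridging-lemma} illustrates.
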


Since
$\rtbcq\subsetneq\aibcq\subsetneq\ibcq$,
Theorem~\ref{thm:tree-main} is subsumed by Theorem~\ref{thm:trichotomy:ext}. We nevertheless provide Theorem~\ref{thm:tree-main} explicitly, because its proof makes up the main part of this paper.
In Section~\ref{sec:beyond}, we will discuss the challenges in extending Theorem~\ref{thm:trichotomy:ext} beyond $\aibcq$.

\section{Related Work}
\label{sec:related}

Inconsistency management has been studied in various database contexts (e.g., graph databases~\cite{DBLP:journals/jcss/BarceloF17,DBLP:conf/icdt/BarceloF15}, medical databases~\cite{kahale2020meta}, online databases~\cite{katsis2010inconsistency}, spatial databases~\cite{DBLP:journals/is/RodriguezBM13}), and under different repair semantics (e.g., \cite{10.5555/1709465.1709573,DBLP:conf/icdt/LopatenkoB07,DBLP:journals/tods/Wijsen05}).
Arenas, Bertossi, and Chomicki initiated Consistent Query Answering (CQA) in~1999~\cite{10.1145/303976.303983}. 
Twenty years later, their contribution was acknowledged in a \emph{Gems of PODS session}~\cite{Bertossi19}. 
An overview of complexity classification results in CQA appeared in the \emph{Database Principles} column of SIGMOD Record \cite{10.1145/3377391.3377393}.

The term $\cqa{q}$ was coined in~\cite{wijsen2010first} to refer to CQA for Boolean queries $q$ on databases that violate primary keys, one per relation, which are fixed by $q$'s schema. 
The complexity classification of $\cqa{q}$ for the class of self-join-free Boolean conjunctive queries underwent a series of efforts~\cite{FUXMAN2007610,KOLAITIS201277,KoutrisS14,KoutrisW15,DBLP:conf/icdt/KoutrisW19},
until it was revealed that the complexity of $\cqa{q}$ for self-join-free conjunctive queries displays a trichotomy between \FO,  \LSPACE-complete, and \coNP-complete \cite{KoutrisW17,KoutrisWTOCS20}. 
A few extensions beyond this trichotomy result are known. 
Under the requirement of self-join-freeness, it remains decidable whether or not $\cqa{q}$ is in \FO\ in the presence of negated atoms~\cite{KoutrisW18}, multiple keys~\cite{KoutrisW20}, and unary foreign keys~\cite{DBLP:conf/pods/HannulaW22}.

Little is known concerning the complexity classification of the problem $\cqa{q}$ beyond self-join-free conjunctive queries.
For the restricted class of Boolean path queries~$q$, $\cqa{q}$ already exhibits a tetrachotomy between \FO, \NL-complete, \PTIME-complete and \coNP-complete~\cite{DBLP:conf/pods/KoutrisOW21}.
Figueira et~al.~\cite{DBLP:conf/icdt/FigueiraPSS23} have recently discovered a simple fixpoint algorithm that solves $\cqa{q}$ when $q$ is a self-join free conjunctive query or a path query such that $\cqa{q}$  is in \PTIME.
As already discussed in Section~\ref{sec:introduction}, relationships have been found between CQA and CSP~\cite{10.1145/2699912,lutz_et_al:LIPIcs:2015:4995}.

The counting variant of the problem $\cqa{q}$, denoted
$\sharp\cqa{q}$, asks to count the number of repairs that satisfy some Boolean query $q$.
For self-join-free Boolean conjunctive queries, $\sharp\cqa{q}$ exhibits a dichotomy between  \textbf{FP} and \mbox{$\sharp$\PTIME}-complete~\cite{Maslowski2013ADI}. 
This dichotomy has been shown to extend to queries with self-joins if primary keys are singletons~\cite{MaslowskiW14}, and to functional dependencies~\cite{DBLP:conf/pods/CalauttiLPS22a}.
Calautti, Console, and Pieris present in \cite{DBLP:conf/pods/CalauttiCP19} a complexity analysis of these
counting problems under many-one logspace reductions and conducted an experimental evaluation of randomized approximation schemes for approximating the percentage of repairs that satisfy a given query \cite{DBLP:conf/pods/CalauttiCP21}.
CQA is also studied under different notions of repairs like operational repairs \cite{DBLP:conf/pods/CalauttiLP18,DBLP:conf/pods/CalauttiLPS22} and preferred repairs \cite{DBLP:journals/amai/StaworkoCM12,DBLP:journals/tcs/KimelfeldLP20}. 
CQA has also been studied for queries with aggregation, in both theory and practice~\cite{DBLP:conf/sigmod/DixitK21,DBLP:conf/icdt/KhalfiouiW23}.

Theoretical research in CQA has stimulated implementations and
experiments in prototype system, using different target languages and engines: SAT~\cite{DBLP:conf/sat/DixitK19}, ASP~\cite{DBLP:journals/dke/MarileoB10, manna_ricca_terracina_2015,10.1145/3340531.3411911}, BIP~\cite{10.14778/2536336.2536341}, SQL~\cite{DBLP:journals/pacmmod/FanKOW23}, logic programming~\cite{GrecoGZ03}, and hypergraph algorithms~\cite{chomicki2004hippo}.


\section{Preliminaries}
\label{sec:prelim}

We assume disjoint sets of {\em variables\/} and {\em constants\/}.
A {\em valuation\/} over a set $U$ of variables is a total mapping $\theta$ from~$U$ to the set of~constants.


\myparagraph{Atoms and key-equal facts}
Every relation name has a fixed arity, and a fixed set of primary-key positions.
We will underline primary-key positions and assume w.l.o.g.\ that all primary-key positions precede all other positions.
An \emph{atom} is then an expression $R(\underline{s_{1},\ldots,s_{k}},s_{k+1},\ldots,s_{n})$ where each $s_{i}$ is a variable or a constant for $1 \leq i \leq n$.
The sequence $s_{1},\ldots,s_{k}$ is called the \emph{primary key} (of the atom).
This primary key is called \emph{simple} if~$k=1$, and \emph{constant-free} if no constant occurs in it.
An atom without variables is a \emph{fact}.
Two facts are \emph{key-equal} if they use the same relation name and agree on the primary~key.

\myparagraph{Database instances, blocks, and repairs}
A {\em database schema\/} is a finite set of relation names.
All constructs that follow are defined relative to a fixed database schema.
A {\em database instance\/} (or \emph{database} for short) is a finite set $\db$ of facts using only the relation names of the schema.
%
We write $\adom{\db}$ for the active domain of $\db$ (i.e., the set of constants that occur in $\db$).
A {\em block\/} of $\db$ is a maximal set of key-equal facts of $\db$.
Whenever a database instance $\db$ is understood,
we write $R(\underline{\vec{c}},*)$ for the block that contains all facts with relation name~$R$ and primary-key value~$\vec{c}$, where $\vec{c}$ is a sequence of constants.
A database instance $\db$ is {\em consistent\/} if it contains no two distinct facts that are key-equal (i.e., if no block of $\db$ contains more than one fact). 
A {\em repair\/} of $\db$ is an inclusion-maximal consistent subset of $\db$. 
%

\myparagraph{Boolean conjunctive queries}
A {\em Boolean conjunctive query\/} is a finite set 
$q=\{R_{1}(\underline{\vec{x}_{1}},\vec{y}_{1})$, $\dots$, $R_{n}(\underline{\vec{x}_{n}},\vec{y}_{n})\}$ of atoms.
The set $q$ represents the first-order sentence with no free-variables
$$\exists u_{1}\dotsm\exists u_{k}\normalformula{R_{1}(\underline{\vec{x}_{1}},\vec{y}_{1})\land\dotsm\land R_{n}(\underline{\vec{x}_{n}},\vec{y}_{n})},$$ and we denote $\queryvars{q} = \{u_{1}, \dots, u_{k}\}$, the set of variables that occur in~$q$ and denote $\constants{q}$ as the set of constants that occur in $q$. 
We write $\bcq$ for the class of Boolean conjunctive queries.

Let $q$ be a query in $\bcq$.
We say that~$q$ has a {\em self-join\/} if some relation name occurs more than once in $q$.
If~$q$ has no self-joins, it is called {\em self-join-free\/}. 
We say that~$q$ is \emph{minimal} if it is not equivalent to a query in~$\bcq$ with a strictly smaller number of atoms. 


\myparagraph{Consistent query answering}
For every query $q$ in $\bcq$, the decision problem $\cqa{q}$ takes as input a database instance $\db$, and asks whether $q$ is satisfied by every repair of~$\db$.
It is straightforward that $\cqa{q}$ is in \coNP\ for every $q\in\bcq$.



\myparagraph{Rooted relation trees}
A \emph{rooted relation tree} is a (directed) rooted ordered tree where each internal vertex is labeled by a relation name, and each leaf vertex is labeled with either a unary relation name, a constant, or $\bot$, such that every two vertices sharing the same label have the same number of children. 
We denote by $\rootedAt{u}{\tau}$ the subtree rooted at vertex $u$ in~$\tau$. 
Any rooted relation tree~$\tau$ has a string representation recursively defined as follows:

the string representation of a tree with only one vertex is the label of that vertex; 
otherwise, if the root of $\tau$ is labeled $R$ and has the following ordered children $v_1, v_2,\ldots, v_n$, then $\tau$'s string representation is $R(s_1, s_2, \dots, s_n)$, where $s_i$ is the string representation of~$\rootedAt{v_i}{\tau}$. For example, the tree in Fig.~\ref{fig:intro-query}(a) has string representation $C(R(A,B),R(B,A))$.
We will often blur the distinction between rooted relation trees and their string representation.

\myparagraph{Rooted tree query and rooted tree sets}
A \emph{querification} of a rooted relation tree~$\tau$ is a total function~$f$ with domain $\tau$'s vertex set that maps each vertex labeled by a constant to that same constant, and injectively maps all other vertices to variables.
Such a querification naturally extends to a mapping $f(\tau)$ of the entire tree:
if $u$ is a vertex in~$\tau$ with label~$R$ and children $v_1$, $v_2$, \dots, $v_n$, then $f(\tau)$ contains the atom $R(\underline{f(u)}, f(v_1), f(v_2), \dots, f(v_n))$. A Boolean conjunctive query is a \emph{rooted tree query} if it is equal to $f(\tau)$ for some querification~$f$ of some rooted relation tree~$\tau$.
If $q=f(\tau)$, we also say that~$q$ is \emph{represented} by~$\tau$, in which case we often blur the distinction between~$q$ and~$\tau$.
We write $\atomAt{R}{x}{q}$ for the unique $R$-atom in $q$ with primary key variable $x$. 
$\rtbcq$ denotes the class of rooted tree queries.
It can be verified that every rooted tree query is minimal.

Every query $q$ in $\rtbcq$ is represented by a unique rooted relation tree. 
Conversely, every rooted relation tree represents a query in~$\rtbcq$ that is unique up to a variable renaming.
When $f(\tau)=q$, by a slight abuse of terminology, we may use $q$ to refer to~$\tau$, and use the query variable~$x$ (or the expression~$\atomAt{R}{x}{q}$) to refer to the vertex~$u$ in $\tau$ that satisfies $f(u)=x$ and whose label is~$R$.
The variable~$r$ is the \emph{root variable} of a query $q$ in $\rtbcq$ if $r$ is the root vertex of $q$'s rooted relation tree. 
For two distinct vertices $x$ and $y$, we write $x\precedes{q} y$ if the vertex $x$ is an ancestor of $y$ in $q$, and write $x \vardisjoint{q} y$ if neither $x\precedes{q} y$ nor $y \precedes{q} x$. 
When $x$ and $y$ have the same label~$R$, we can also write $\atomAt{R}{x}{q} \precedes{q} \atomAt{R}{y}{q}$ and $\atomAt{R}{x}{q} \diffbranch{q} \atomAt{R}{y}{q}$ instead of $x \precedes{q} y$ and $x\vardisjoint{q} y$ respectively.
For every variable $x$ in a rooted tree query $q$, we write $\rootedAt{x}{q}$ for the subquery of $q$ whose rooted relation tree is the subtree rooted at vertex $x$ in $q$. 
A variable $x$ is a leaf variable in $q$ if $\rootedAt{x}{q} = \bot$, $\rootedAt{x}{q} = c$, or $\rootedAt{x}{q} = A$, for some constant~$c$ or unary relation name~$A$.

An \emph{instantiation} of a rooted relation tree~$\tau$ is a total function~$g$ from $\tau$'s vertex set to constants such that each vertex labeled by a constant~$c$ is mapped to~$c$. 
Such an instantiation naturally extends to a mapping $g(\tau)$ of the entire tree: if $u$ is a vertex in~$\tau$ with label~$R$ and children $v_1$, $v_2$, \dots, $v_n$, then $g(\tau)$ contains the fact $R(\underline{g(u)}, g(v_1), g(v_2), \dots, g(v_n))$.
A subset~$S$ of $\db$ is a \emph{rooted tree set in $\db$ starting in $c$} if $S=g(\tau)$ for some instantiation $g$ of $\tau$ that maps $\tau$'s root to~$c$.
A case of particular interest is when $\db$ is consistent, in particular, when $\db$ is a repair.
It can be verified that a rooted tree set in a repair~$\rep$ is uniquely determined by a constant~$c$ and a rooted tree~$\tau$ (because only one instantiation is possible); by overloading terminology, $\tau$ is also called a rooted tree set in~$\rep$ starting in~$c$. 
For convenience, an empty rooted tree set, denoted by $\bot$, starts in any constant $c$.

\myparagraph{Homomorphism} Let $p,q\in\bcq$. We write $p \leqhomo{}{} q$ if there exists a homomorphism from $p$ to~$q$, i.e., a mapping $h:\vars{p}\rightarrow\vars{q}\cup\constants{q}$ that acts as identity when applied on constants, such that for every atom $R(\underline{\vec{x}}, \vec{y})$ in~$p$, $R(\underline{h(\vec{x})}, h(\vec{y}))$ is an atom of~$q$. 
For $u \in \queryvars{p}$ and $v \in \queryvars{q}$, we write $p \leqhomo{u}{v} q$ if there exists a homomorphism $h$ from $p$ to $q$ with $h(u) = v$.
It can now be verified that for rooted tree queries~$p$ and~$q$, there is a homomorphism~$h$ from $p$ to $q$ if and only if there is a label-preserving graph homomorphism from the rooted relation tree of~$p$ to that of~$q$ (we assume that a leaf vertex with label~$\bot$ can map to a vertex with any label).
Since rooted relation trees are \emph{ordered} trees, graph homomorphisms must evidently be order-preserving.  For example, there is no homomorphism between the trees $R(A,B)$ and $R(B,A)$.

\begin{example}
The following rooted tree query and its rooted relation tree are depicted in Fig.~\ref{fig:rooted-tree-relation-ex}:
\begin{align*}
q = \{&A(\underline{x_0}, x_1, x_2), R(\underline{x_1}, x_3, x_4), R(\underline{x_2}, x_5, x_6), \\
& R(\underline{x_3}, x_7, x_8), U(\underline{x_7}), \\
& X(\underline{x_4}, c_1), Y(\underline{x_5}, x_{9}), Z(\underline{x_6}, c_2, x_{10})\}.
\end{align*}
We have:
\begin{align*}
\rootedAt{x_1}{q} &= R(\underline{x_1}, x_3, x_4), R(\underline{x_3}, x_7, x_8), U(\underline{x_7}), X(\underline{x_{4}},c_{1})\\
&= R(R(U,\bot),X(c_1)),\\
\rootedAt{x_2}{q} &= R(\underline{x_2}, x_5, x_6), Y(\underline{x_5}, x_{9}), Z(\underline{x_6}, c_2, x_{10})\\
&= R(Y(\bot),Z(c_2,\bot)),\\
\rootedAt{x_3}{q} &= R(\underline{x_3}, x_7, x_8), U(\underline{x_7}) \\
&= R(U,\bot).
\end{align*}
In this query $q$, we have $\atomAt{R}{x_1}{q} \vardisjoint{q} \atomAt{R}{x_2}{q}$, $\atomAt{R}{x_1}{q} \precedes{q} \atomAt{R}{x_3}{q}$, and $\atomAt{R}{x_2}{q} \vardisjoint{q} \atomAt{R}{x_3}{q}$.

\begin{figure}
\centering
\subfloat[A rooted relation tree $\tau$]{
\includegraphics{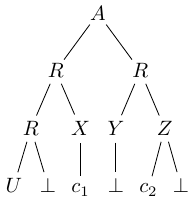}
}  
\quad
\subfloat[A mapping $f$ from vertices in $\tau$ to variables in $q$]{
\includegraphics{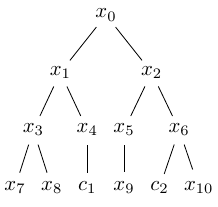}
}
\caption{An example rooted relation tree, where $c_1$ and $c_2$ are constants.}
\label{fig:rooted-tree-relation-ex}
\end{figure}
\end{example}


\section{The Complexity Classification}
\label{sec:trichotomy}


Our classification focuses on rooted tree queries ($\rtbcq$). 
We will extend to $\aibcq$ and $\ibcq$ in Section~\ref{sec:berge}.
The classification of path queries in~\cite{DBLP:conf/pods/KoutrisOW21} uses a notion of ``rewinding'' to deal with self-joins: a path query $u\cdot Rv \cdot Rw$ rewinds to $u\cdot Rv \cdot Rv \cdot Rw$. 
Very informally, rewinding captures that query atoms with the same relation name can be ``confused'' with one another (or ``rewind'' to one another in our terminology) during query evaluation: in $u\cdot Rv\cdot Rw$, once we have evaluated the prefix $u\cdot Rv\cdot R$, the last $R$ can be confused with the first one, in which case we continue with the suffix $Rv \cdot Rw$ (instead of merely $Rw$).
We generalize the notion of rewinding from path queries to rooted tree queries.

\begin{definition}[Rewinding]
\label{definition:rewinding}
Let $q$ be a query in $\rtbcq$. 
Let $R(\underline{x}, \dots)$ and $R(\underline{y}, \dots)$ be two (not necessarily distinct) atoms in $q$.
We define $\rewind{q}{y}{x}{R}$ as the following rooted tree query
$$\rewind{q}{y}{x}{R} := \formula{q \setminus \rootedAt{y}{q}} \cup f(\rootedAt{x}{q}),$$
for some isomorphism $f$ that maps $x$ to $y$ (i.e., $f(x)=y$), and maps every other variable in $\rootedAt{x}{q}$ to a fresh variable.
\end{definition}

Intuitively, the rooted tree query $\rewind{q}{y}{x}{R}$ can be obtained by replacing $\rootedAt{y}{q}$ with a fresh copy of $\rootedAt{x}{q}$.
Fig.~\ref{fig:rewinding-example} presents some rooted tree queries obtained from rewinding on the rooted tree $q$ in Fig.~\ref{fig:rooted-tree-relation-ex}.
\begin{figure*}
\centering
\subfloat[$\rewind{q}{x_1}{x_2}{R}$]{
\includegraphics{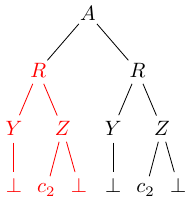}
}
\quad
\subfloat[$\rewind{q}{x_2}{x_1}{R}$]{
\includegraphics{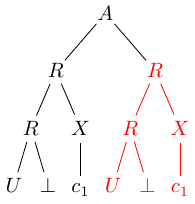} 
}
\quad
\subfloat[$\rewind{q}{x_3}{x_1}{R}$]{
\includegraphics{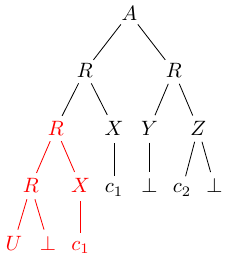} 
}
\caption{An illustration of rewinding for the query of Fig.~\ref{fig:rooted-tree-relation-ex}; the modified subtrees are highlighted in red.}
\label{fig:rewinding-example}
\end{figure*}

The classification criteria in \cite{DBLP:conf/pods/KoutrisOW21} uses the notions of factors and prefixes that are specific to words, which can be generalized using homomorphism on rooted tree queries.
Consider the following syntactic conditions on a rooted tree query $q$ with root variable $r$: 

\begin{itemize}
\item $\ctwo$ : for every two atoms  $R(\underline{x}, \dots)$ and $R(\underline{y}, \dots)$ in $q$, either $q\leqhomo{}{}\rewind{q}{y}{x}{R}$ or $q\leqhomo{}{}\rewind{q}{x}{y}{R}$.
\item $\cone$ : for every two atoms  $R(\underline{x}, \dots)$ and $R(\underline{y}, \dots)$ in $q$, either $q\leqhomo{r}{r}\rewind{q}{y}{x}{R}$ or $q\leqhomo{r}{r}\rewind{q}{x}{y}{R}$.
\end{itemize}

It is easy to see that conditions $\cone$ and $\ctwo$ are decidable in polynomial time in the size of the query.
We may restate $\ctwo$ and $\cone$ using more fine-grained syntactic conditions below.
\begin{itemize}
\item $\cbranch$ : for every two atoms $\atomAt{R}{x}{q} \vardisjoint{q} \atomAt{R}{y}{q}$ in $q$,
either $\rootedAt{y}{q} \leqhomo{y}{x} \rootedAt{x}{q}$ or 
$\rootedAt{x}{q} \leqhomo{x}{y} \rootedAt{y}{q}$.
\item $\cfactor$ : for every two atoms $\atomAt{R}{x}{q} \precedes{q} \atomAt{R}{y}{q}$ in $q$, we have $q\leqhomo{}{}\rewind{q}{y}{x}{R}$.
\item $\cprefix$ : for every two atoms $\atomAt{R}{x}{q} \precedes{q} \atomAt{R}{y}{q}$ in $q$, we have $q \leqhomo{r}{r} \rewind{q}{y}{x}{R}$.
\end{itemize}

\begin{lemma}
\label{lemma:branch-to-root-homomorphism}
For every two atoms $\atomAt{R}{x}{q} \vardisjoint{q} \atomAt{R}{y}{q}$ in a rooted tree query $q$, we have
$q \leqhomo{}{} \rewind{q}{y}{x}{R}$ if and only if $\rootedAt{y}{q} \leqhomo{y}{x} \rootedAt{x}{q}$.
\end{lemma}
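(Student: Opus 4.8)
Throughout, write $q' \defeq \rewind{q}{y}{x}{R}$ and let $f$ be the isomorphism used in Definition~\ref{definition:rewinding}, so that $q' = \formula{q \setminus \rootedAt{y}{q}} \cup f(\rootedAt{x}{q})$ with $f(x)=y$. Since $\atomAt{R}{x}{q} \vardisjoint{q} \atomAt{R}{y}{q}$, the subtrees $\rootedAt{x}{q}$ and $\rootedAt{y}{q}$ are vertex-disjoint; hence $\rootedAt{x}{q'} = \rootedAt{x}{q}$ is untouched by the rewinding, while $\rootedAt{y}{q'} = f(\rootedAt{x}{q})$ is an isomorphic copy of $\rootedAt{x}{q}$ in which $f$ maps $x$ to $y$. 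If $x=y$ the statement is trivial, so assume $x \ne y$; then $y$ is not the root of $q$ and has a parent, which we call $p$. For the direction "$\rootedAt{y}{q} \leqhomo{y}{x} \rootedAt{x}{q}$ implies $q \leqhomo{}{} q'$", take a homomorphism $g\colon \rootedAt{y}{q}\to\rootedAt{x}{q}$ with $g(y)=x$ and define $h$ on $q$ to be the identity on $q\setminus\rootedAt{y}{q}$ and to be $f\circ g$ on $\rootedAt{y}{q}$. The two prescriptions overlap only on the variable $y$, where both give $y$, so $h$ is consistently defined; and it is a homomorphism $q\to q'$: an atom wholly inside $\rootedAt{y}{q}$ is sent by $f\circ g$ into $f(\rootedAt{x}{q})\subseteq q'$ because $g$ maps into $\rootedAt{x}{q}$ and $f$ is an isomorphism onto $f(\rootedAt{x}{q})$; an atom wholly outside $\rootedAt{y}{q}$ is fixed and already lies in $q'$; and the one straddling atom, the atom at $p$ (which has $y$ among its arguments), is fixed since $h(p)=p$ and $h(y)=f(g(y))=f(x)=y$, and this is exactly the atom at $p$ in $q'$.

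\textbf{The converse: structure of the argument.}
Now let $h\colon q\to q'$ be a homomorphism; by the observations in Section~\ref{sec:prelim} we may regard $h$ as a label- and order-preserving graph homomorphism of the underlying rooted relation trees, so that for every vertex $v$ it restricts to a homomorphism $\rootedAt{v}{q} \leqhomo{v}{h(v)} \rootedAt{h(v)}{q'}$; in particular $\rootedAt{y}{q} \leqhomo{y}{h(y)} \rootedAt{h(y)}{q'}$. The goal is $\rootedAt{y}{q} \leqhomo{y}{x} \rootedAt{x}{q}$, and it is enough to produce a homomorphism $q\to q'$ sending $y$ to $y$: the restriction of such a homomorphism to $\rootedAt{y}{q}$ is a homomorphism $\rootedAt{y}{q} \leqhomo{y}{y} \rootedAt{y}{q'} = f(\rootedAt{x}{q})$, and composing it with $f^{-1}$ (which sends $y$ back to $x$) gives exactly the required homomorphism. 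Two elementary facts will drive the case analysis. First, a tree homomorphism sends each root-to-leaf path to a path of the same length, so $\rootedAt{a}{q} \leqhomo{a}{b} \rootedAt{b}{q}$ forces $\rootedAt{a}{q}$ to have height at most that of $\rootedAt{b}{q}$; in particular no such homomorphism exists when $b$ is a proper descendant of $a$. Second, because $\atomAt{R}{x}{q}\vardisjoint{q}\atomAt{R}{y}{q}$, the homomorphism $h$ sends the least common ancestor of $x$ and $y$ to a vertex from which $h(x)$ and $h(y)$ descend through two distinct children, so $h(x)$ and $h(y)$ are incomparable in $q'$.

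\textbf{The main obstacle.}
The remaining task, which is the crux of the proof, is to rule out $h(y)\ne y$. I would take $h$ to minimize the depth of $h(y)$ in $q'$ and split on the location of the $R$-vertex $h(y)$. If $h(y)$ lies strictly below $y$ inside the inserted copy $f(\rootedAt{x}{q})$, then $h(y)=f(z)$ for a proper $R$-descendant $z$ of $x$, and combining the restriction of $h$ to $\rootedAt{x}{q}$ with the incomparability fact forces $h(x)$ into a subtree that is a proper sub-subtree of a subtree it must receive a root-preserving homomorphism from, contradicting the height fact. If $h(y)$ lies outside $\rootedAt{y}{q'}$, one first notes that $h(r)$ cannot land inside $f(\rootedAt{x}{q})$ (that would make $q$ receive a root-preserving homomorphism into a tree of strictly smaller height), so $h(r)$ lies in the untouched frame; then one tracks the image under $h$ of the root-to-$y$ path of $q$, using label preservation, to conclude that $h(r)=r$, hence $h(p)=p$ and $h(y)=y$ — the only alternative being a shallower admissible choice of $h(y)$, contradicting minimality, or an impossible label coincidence along that path. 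Once $h(y)=y$ is secured the argument closes as in the previous paragraph. I expect this locating-$h(y)$ step to be the delicate part of the proof: it is precisely where one must exploit that $h$ is a homomorphism of all of $q$ (via the root path and the least common ancestor of $x$ and $y$) rather than merely of $\rootedAt{y}{q}$; the remaining manipulations with subtrees, heights and labels are routine.
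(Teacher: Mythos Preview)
Your setup, the easy direction, and the reduction of the hard direction to ``some homomorphism $q\to q'$ must send $y$ to $y$'' are all correct and match the paper. The observation that $h(x)$ and $h(y)$ are incomparable in $q'$ is also correct and useful. The gaps are in the two cases you single out as ``the crux,'' and in both your sketch does not close.

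In your Case~1 (namely $y\precedes{q'}h(y)$), the height argument only works when $h(x)$ lands \emph{inside} the inserted copy $f(\rootedAt{x}{q})$: then $h(x)=f(x')$ with $x'$ a proper descendant of $x$, and the height bound bites. But $h(x)$ can perfectly well land outside $\rootedAt{y}{q'}$ (on some $R$-vertex incomparable to $y$ in the untouched part), and then $\rootedAt{h(x)}{q'}=\rootedAt{h(x)}{q}$ need not be smaller than $\rootedAt{x}{q}$ in height. Your sentence ``forces $h(x)$ into a proper sub-subtree'' is exactly the step that fails here. The paper handles this case not with heights but with a cardinality count at the least common ancestor $z$ of $x$ and $y$: it shows $h(z)$ must fall into the $y$-side child $v$ of $z$, and then compares $\card{\rootedAt{z}{q}}$ with $\card{\rootedAt{v}{q'}}=\card{\rootedAt{v}{q}}-\card{\rootedAt{y}{q}}+\card{\rootedAt{x}{q}}$ to force $\card{\rootedAt{u}{q}}<\card{\rootedAt{x}{q}}$, which is impossible since $u$ sits above $x$. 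Heights alone do not yield this inequality.

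In your Case~2 (namely $h(y)\notin\rootedAt{y}{q'}$), the sub-case $h(y)\precedes{q'}y$ is indeed killed by your depth observation, but for $h(y)\vardisjoint{q'}y$ the claim ``tracking the root-to-$y$ path yields $h(r)=r$, or else a shallower $h(y)$'' is not justified: nothing you have written produces a new homomorphism with a shallower image of $y$, and there is no reason the labelled root-to-$y$ path cannot be replicated from some $h(r)\neq r$. The paper's device here is different and concrete: iterate $h$ by setting $y_0=y$, $y_{i+1}=h(y_i)$ (this is well-defined because each $y_i$ stays in the untouched part $q\setminus\rootedAt{y}{q}$), and use the left/right sibling order to show the $y_i$ are strictly monotone in the DFS order of the tree, hence pairwise distinct, contradicting finiteness. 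Your minimality-of-depth heuristic does not substitute for this infinite-descent step.
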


For the sake of simplicity, we postpone the proof of Lemma~\ref{lemma:branch-to-root-homomorphism} to Appendix~\ref{app:trichotomy}. 
Lemma~\ref{lemma:branch-to-root-homomorphism} implies the following connections among the syntactic conditions.


\begin{proposition}
$\ctwo = \cfactor\land\cbranch$, $\cone=\cprefix\land\cbranch$.
\end{proposition}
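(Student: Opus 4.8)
The plan is to unfold the definitions and split the "for every two atoms" quantification according to the relative position of $x$ and $y$ in the tree. Given two $R$-atoms $\atomAt{R}{x}{q}$ and $\atomAt{R}{y}{q}$, exactly one of three cases holds: $x \precedes{q} y$, or $y \precedes{q} x$, or $x \vardisjoint{q} y$. Condition $\ctwo$ asks, for each unordered pair, that $q \leqhomo{}{} \rewind{q}{y}{x}{R}$ or $q \leqhomo{}{} \rewind{q}{x}{y}{R}$. I would argue that the disjunction over the pair $\{x,y\}$ decomposes exactly into the ancestor cases (handled by $\cfactor$) and the incomparable case (handled by $\cbranch$), and that the two conditions together are equivalent to $\ctwo$.

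First I would handle the ancestor case. Suppose without loss of generality $\atomAt{R}{x}{q} \precedes{q} \atomAt{R}{y}{q}$, so $x$ is a (strict) ancestor of $y$. I claim that among the two rewindings, $\rewind{q}{y}{x}{R}$ is the "right" one: replacing $\rootedAt{y}{q}$ by a fresh copy of $\rootedAt{x}{q}$, which contains (a copy of) $\rootedAt{y}{q}$, can only make the query harder to map out of, so the relevant disjunct is $q \leqhomo{}{} \rewind{q}{y}{x}{R}$; the other disjunct $q \leqhomo{}{} \rewind{q}{x}{y}{R}$ (replacing the ancestor subtree $\rootedAt{x}{q}$ by a fresh copy of the smaller descendant subtree $\rootedAt{y}{q}$) is in fact implied by it or is subsumed, because one can compose with the natural folding homomorphism from $\rootedAt{x}{q}$ down onto $\rootedAt{y}{q}$ — here I would lean on the observation, used implicitly throughout the paper, that if $x$ is an ancestor of $y$ with the same label then there is a homomorphism realizing this embedding, and rewinding is monotone under such embeddings. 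Hence, restricted to ordered pairs, $\ctwo$ is exactly $\cfactor$. For the incomparable pairs, $\cbranch$ says $\rootedAt{y}{q} \leqhomo{y}{x} \rootedAt{x}{q}$ or $\rootedAt{x}{q} \leqhomo{x}{y} \rootedAt{y}{q}$, and Lemma~\ref{lemma:branch-to-root-homomorphism} tells us that each of these subtree-homomorphism statements is equivalent to the corresponding global statement $q \leqhomo{}{} \rewind{q}{y}{x}{R}$, resp. $q \leqhomo{}{} \rewind{q}{x}{y}{R}$. So for incomparable pairs $\ctwo$ is exactly $\cbranch$. Combining the two case analyses over all pairs of $R$-atoms, and over all relation names $R$, gives $\ctwo = \cfactor \land \cbranch$.

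For the second identity $\cone = \cprefix \land \cbranch$ I would run the identical argument but carry the root variable $r$ along: every homomorphism in sight is now required to fix $r$, i.e.\ $\leqhomo{r}{r}$ instead of $\leqhomo{}{}$. The incomparable case is unchanged, because when $x \vardisjoint{q} y$ neither subtree $\rootedAt{x}{q}$ nor $\rootedAt{y}{q}$ contains $r$ (as $r$ is the root and both $x,y$ are proper, or at least the subtrees are disjoint from the root), so a homomorphism witnessing $q \leqhomo{}{} \rewind{q}{y}{x}{R}$ can be taken to be the identity on everything outside $\rootedAt{y}{q}$, in particular on $r$; thus the root-preserving and plain versions coincide on incomparable pairs, and this part is still governed by $\cbranch$. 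The ancestor case is then governed by $\cprefix$ by definition. Hence $\cone = \cprefix \land \cbranch$.

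The main obstacle I anticipate is the ancestor case: making precise why, when $x$ is an ancestor of $y$, the disjunction "$q \leqhomo{}{}\rewind{q}{y}{x}{R}$ or $q \leqhomo{}{}\rewind{q}{x}{y}{R}$" collapses to just its first disjunct, i.e.\ why $\cfactor$ (which only demands the first disjunct) really is equivalent to the $\ctwo$-restricted-to-ordered-pairs condition and not strictly stronger. The key lemma to isolate is that $\rewind{q}{x}{y}{R}$ (shrinking the ancestor copy to the descendant copy) always admits a homomorphism back from $q$ precisely when $\rewind{q}{y}{x}{R}$ does — intuitively because shrinking $\rootedAt{x}{q}$ to $\rootedAt{y}{q}$ inside $q$ is a "retraction" that can never destroy a homomorphism that already survived the larger rewinding, and conversely the larger rewinding dominates. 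I would prove this by exhibiting explicit homomorphisms composed from (i) the fresh-copy isomorphisms of Definition~\ref{definition:rewinding} and (ii) the ancestor-to-descendant folding map, being careful that these maps are order-preserving on the ordered trees. Everything else — the incomparable case and the bookkeeping of the root variable — should be a routine application of Lemma~\ref{lemma:branch-to-root-homomorphism} and the definitions.
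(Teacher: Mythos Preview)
Your decomposition into the incomparable case (handled by $\cbranch$ via Lemma~\ref{lemma:branch-to-root-homomorphism}) and the ancestor case (handled by $\cfactor$/$\cprefix$) is exactly right, and your treatment of the incomparable case is fine. The problem is your handling of the ancestor case.

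You need to show that, when $\atomAt{R}{x}{q}\precedes{q}\atomAt{R}{y}{q}$, the disjunction ``$q\leqhomo{}{}\rewind{q}{y}{x}{R}$ or $q\leqhomo{}{}\rewind{q}{x}{y}{R}$'' is equivalent to its first disjunct alone. You propose the ``key lemma'' that the two disjuncts are in fact \emph{equivalent}, and you justify this with a ``natural folding homomorphism from $\rootedAt{x}{q}$ down onto $\rootedAt{y}{q}$''. Both claims are false. There is in general no root-preserving homomorphism from $\rootedAt{x}{q}$ to its proper subtree $\rootedAt{y}{q}$: take $q=R(R(A))$, where $\rootedAt{x}{q}=R(R(A))$ and $\rootedAt{y}{q}=R(A)$. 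And the biconditional fails already for $q=S(R(R(\bot,A),A))$: here $q\leqhomo{}{}\rewind{q}{y}{x}{R}$ holds (even root-preservingly) while $q\leqhomo{}{}\rewind{q}{x}{y}{R}$ fails, so the two disjuncts are not equivalent.

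The correct observation---which is what the paper tacitly relies on together with Lemma~\ref{lemma:branch-to-root-homomorphism}---is that the ``shrinking'' disjunct $q\leqhomo{}{}\rewind{q}{x}{y}{R}$ is \emph{always false} when $x\precedes{q}y$ strictly. The reason is a size argument: any homomorphism between rooted tree queries is determined by the image of the root and is therefore injective on tree vertices (the image of a vertex is the vertex at the same child-index path in the target), so $q\leqhomo{}{}p$ forces $\card{q}\leq\card{p}$; but $\card{\rewind{q}{x}{y}{R}}=\card{q}-\card{\rootedAt{x}{q}}+\card{\rootedAt{y}{q}}<\card{q}$. This is the same minimality/size reasoning that drives the proof of Lemma~\ref{lemma:branch-to-root-homomorphism}. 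With the second disjunct vacuous, the ancestor case of $\ctwo$ (resp.\ $\cone$) collapses exactly to $\cfactor$ (resp.\ $\cprefix$), and the proposition follows.
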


\begin{figure*}
\centering
\subfloat[$\cone$]{
\includegraphics{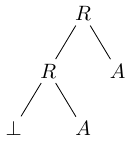}
}
\quad
\subfloat[$\ctwo, \lnot \cprefix$]{
\includegraphics{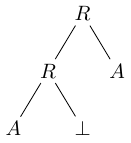}  
}
\quad
\subfloat[$\lnot \cbranch$]{
\includegraphics{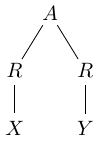}  
}
\quad
\subfloat[$\cone$]{
\includegraphics{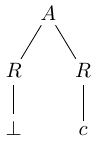}    
}
\quad
\subfloat[$\cone$]{
\includegraphics{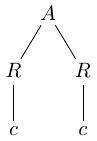}
   
}
\quad
\subfloat[$\lnot \cbranch$]{
\includegraphics{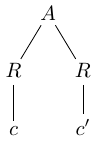}   
}
\quad
\subfloat[$\lnot \cfactor$]{
\includegraphics{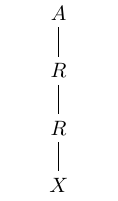}  
}
\caption{Examples of rooted relation trees. Trees annotated with $\lnot\mathsf{C}$ violate syntactic condition $\mathsf{C}$, while trees annotated with~$\mathsf{C}$ satisfy $\mathsf{C}$. For example, the tree in~(a) satisfies $\cone$; and the tree~(b) satisfies $\ctwo$ but violates $\cprefix$.}
\label{fig:tree-examples}
\end{figure*}


\begin{example}
\label{ex:tree-examples}
Let $q$ be as in Fig.~\ref{fig:rooted-tree-relation-ex}. We have that $q$ violates $\cbranch$ (and therefore $\ctwo$), since there is no homomorphism from $q$ to neither $\rewind{q}{x_1}{x_2}{R}$ nor $\rewind{q}{x_2}{x_1}{R}$.

Fig.~\ref{fig:tree-examples} shows some example rooted relation trees annotated with the syntactic conditions they satisfy or violate.
\end{example}

Our main classification result can now be stated.

\begin{theorem} [Trichotomy Theorem]
\label{thm:trichotomy}
For every query~$q$ in $\rtbcq$, 
\begin{itemize}
    \item if $q$ satisfies $\ctwo$, then the problem $\cqa{q}$ is in \LFPL; otherwise it is \coNP-complete; and
    \item if $q$ satisfies $\cone$, then the problem $\cqa{q}$ is in \FO;  otherwise it is \NL-hard.
\end{itemize}
\end{theorem}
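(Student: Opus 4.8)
# Proof Proposal for the Trichotomy Theorem

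\textbf{Overall structure.} The theorem splits into four implications, two ``easy'' directions giving upper bounds and two ``hard'' directions giving lower bounds. The plan is to establish them in the order: (i) $\neg\ctwo \Rightarrow \coNP$-hard; (ii) $\neg\cone \Rightarrow \NL$-hard; (iii) $\ctwo \Rightarrow \cqa{q}\in\LFPL$; (iv) $\cone \Rightarrow \cqa{q}\in\FO$. The $\coNP$ membership is free (already noted for all of $\bcq$), and since every query in $\rtbcq$ has binary-or-unary leaf structure but arbitrary arity, all reductions stay within the same query. Throughout I would lean on the path-query tetrachotomy of \cite{DBLP:conf/pods/KoutrisOW21} as both a template and, where a suitable sub-path of $q$ can be isolated, a black box.

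\textbf{Hardness directions (ii) then (i).} For $\neg\cone \Rightarrow \NL$-hard: if $\cone$ fails, by Lemma~\ref{lemma:branch-to-root-homomorphism} and the Proposition either $\cprefix$ fails or $\cbranch$ fails, i.e.\ there are two $R$-atoms $\atomAt{R}{x}{q}$ and $\atomAt{R}{y}{q}$, either nested or incomparable, such that neither root-preserving homomorphism into the relevant rewinding exists. I would extract from this configuration a gadget database built along the unique root-to-$x$ (and root-to-$y$) path, reducing from directed $st$-reachability (the canonical $\NL$-complete problem): the nested case reuses the path-query $\NL$-hardness construction verbatim on the embedded path, while the incomparable case needs a fresh gadget where the two branches force a ``choice'' that encodes an edge. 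The absence of the homomorphism is exactly what guarantees a repair avoiding $q$ when reachability fails. For $\neg\ctwo \Rightarrow \coNP$-hard: here $\cfactor$ or $\cbranch$ fails \emph{without} the root-preservation crutch, which is a strictly stronger failure; I would adapt the $\coNP$-hardness reduction for path queries (from monotone 3SAT or positive 2-in-4-SAT-style gadgets) to the tree setting, using incomparable branches to encode the two polarities of a variable and the failing homomorphism to ensure no ``accidental'' satisfaction. The main obstacle in this block is the incomparable-branches $\coNP$-hardness gadget: unlike the linear path case, a tree lets the two subtrees $\rootedAt{x}{q}$ and $\rootedAt{y}{q}$ differ in shape (cf.\ $q_1$ vs.\ $q_2$ in Fig.~\ref{fig:intro-query}), and the reduction must exploit precisely the ordered-tree non-isomorphism; getting the consistency bookkeeping right across a branching structure rather than a line is where the real work lies.

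\textbf{Upper bound (iv): $\cone \Rightarrow \FO$.} I would produce an explicit first-order rewriting by induction on the structure of the rooted relation tree. For a leaf the rewriting is immediate; for an internal node with relation $R$ and children handled recursively, the rewriting says roughly: ``there is a primary-key value such that \emph{every} fact in that block satisfies the conjunction of children rewritings at the appropriate positions,'' and $\cone$ (via $\cprefix$) is what lets one collapse the otherwise-unbounded nesting of same-relation atoms into a finite formula, because a repair failing $q$ can be detected by looking only a bounded depth down each branch. Establishing that $\cone$ is exactly the condition making this collapse sound is the crux here; I expect to prove a pumping-style lemma stating that under $\cone$, if a repair satisfies $q$ then it does so via a ``shallow'' witness.

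\textbf{Upper bound (iii): $\ctwo \Rightarrow \LFPL$.} This is the technically heaviest part and, I expect, the overall main obstacle. The idea is the fixpoint algorithm announced in the introduction: associate to $q$ a context-free grammar $\treecfg{q}$ (generalizing the NFA of \cite{DBLP:conf/pods/KoutrisOW21}) whose derivations correspond to rooted tree sets that embed $q$ up to rewinding, and compute, by a least fixed point over the database, for each constant $c$ the set of ``frugal'' subtrees (cf.\ $\sset{\cdot}{\cdot}{\cdot}$, $\treelanguage{\cdot}$) that must appear in every repair rooted at $c$. Soundness uses that $\ctwo$ guarantees the rewindings are all homomorphically comparable, so the CFG's derived trees form a directed family and a polynomial amount of information per constant suffices; completeness requires showing that if $\cqa{q}$ should answer ``yes'' then the fixpoint detects it. I would structure this as: (a) define the grammar and its tree language; (b) prove a characterization ``every repair of $\db$ satisfies $q$ iff the fixpoint marks the relevant constants''; (c) observe the fixpoint is expressible in $\LFPL$ because the grammar and hence the per-constant state space is of size bounded by a function of $q$ only. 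The delicate point — and the one I would budget the most space for — is (b) in the completeness direction: translating a repair that falsifies $q$ into a ``non-acceptance'' certificate of the grammar, which is where the CFG machinery genuinely generalizes the autom-theoretic argument for paths and where branching makes the pumping argument substantially more involved.
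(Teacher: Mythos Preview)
Your proposal identifies the right four implications and the right high-level tools (a CFG generalizing the path NFA, a fixpoint, frugal-type repairs, and homomorphism-based hardness gadgets), and the $\LFPL$ part~(iii) is close in spirit to what the paper does. The main structural difference is that the paper does \emph{not} treat the two upper bounds independently: it defines a single intermediate problem $\certaintrace{q}$ (does some constant start an accepting tree in every repair?), proves via frugal repairs that $\certaintrace{q}\equiv\cqa{q}$ whenever $\ctwo$ holds (Lemma~\ref{lem:min-start}), shows $\certaintrace{q}$ is expressible in $\LFPL$ under $\cbranch$ (Lemma~\ref{prop:certaintrace-ptime}), and then obtains the $\FO$ case as a specialization by proving that under $\cone$ the backward production rules of the grammar are redundant, so the fixpoint terminates in depth bounded by the query (Lemma~\ref{lemma:fact-forward-suffices}). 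Two consequences of this unified route matter for your plan. First, ``frugal'' is a proof device, not a computed object: the fixpoint computes pairs $\pair{c}{y}$ with $y$ a query variable (meaning every repair has a tree set at~$c$ accepted from nonterminal $S_y$), and frugal repairs enter only in the correctness proof (Lemmas~\ref{lemma:frugal-no-tree} and~\ref{lem:key}); your description ``the set of frugal subtrees that must appear in every repair'' conflates the two. Second, your independent approach to~(iv) via ``shallow witnesses'' is shakier than you suggest: any repair satisfying~$q$ trivially has a witness of depth~$|q|$, so shallowness is not the issue; what $\cone$ actually buys is that the backward rules contribute nothing new (Lemma~\ref{lemma:fact-forward-suffices}), and proving that the resulting bounded-depth formula is a correct $\cqa{q}$ rewriting still needs the equivalence with $\certaintrace{q}$ and hence the frugal-repair argument. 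A direct inductive proof without that intermediate would have to reinvent this machinery.

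On the hardness side your case analysis is slightly heavier than needed. For $\neg\cone\Rightarrow\NL$-hard, the paper observes that if $\cbranch$ fails then already $\ctwo$ fails (Lemma~\ref{lemma:branch-to-root-homomorphism}), so the incomparable-branches case is subsumed by $\coNP$-hardness and needs no separate $\NL$ gadget; only the nested case (two $R$-atoms on one root-to-leaf path with no downward root homomorphism) requires the $st$-reachability reduction (Lemma~\ref{lemma:nl-hardness-subtree}). For $\neg\ctwo\Rightarrow\coNP$-hard, the Monotone~SAT reduction (Lemma~\ref{lemma:cbranch-conp-hard}) is uniform in the two offending $R$-atoms and does not case-split on whether they are nested or on different branches; you do not need to ``exploit the ordered-tree non-isomorphism'' beyond the bare failure of both homomorphisms $q\leqhomo{}{}\rewind{q}{y}{x}{R}$ and $q\leqhomo{}{}\rewind{q}{x}{y}{R}$.
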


Let us provide some intuitions behind Theorem~\ref{thm:trichotomy}.
Both $\cprefix$ and $\cfactor$ concern the homomorphism from $q$ to the rooted tree query obtained by rewinding from a subtree to its ancestor subtree, which resembles the case on path queries. 
The condition $\cbranch$ is vacuously satisfied for path queries, but is crucial to the classification of rooted tree queries.

For the complexity lower bound, if $q$ violates $\cbranch$, then $\cqa{q}$ is \coNP-hard. 
Intuitively, this is because if $\rootedAt{x}{q}$ and $\rootedAt{y}{q}$ are not homomorphically comparable and appear in different branches, then the facts in their common ancestor relation may ``choose'' which branch to satisfy, which allows us to reduce from \textsf{SAT} in item~\eqref{it:conphardness} of Proposition~\ref{prop:hardness}.
For example, consider the query $q_1$ as in Fig.~\ref{fig:intro-query}(a) and the example database instance~$\db$ in Fig.~\ref{tbl:instance}.
It can be shown that there is a repair of $\db$ that falsifies~$q_1$ if and only if the following CNF formula is satisfiable:
$$\underbrace{(x_1 \lor x_2)}_{C_1} \land \underbrace{(\overline{x_1} \lor \overline{x_2})}_{C_2}.$$

For the complexity upper bound, if $\rootedAt{y}{q} \leqhomo{y}{x} \rootedAt{x}{q}$, the arguments above fail because the facts in their common ancestor relation cannot ``choose'' which branch to satisfy anymore: informally, whenever $\rootedAt{x}{q}$ is satisfied, $\rootedAt{y}{q}$ will be satisfied due to the homomorphism.
This crucial observation from $\cbranch$ also leads to a total preorder on all self-joining atoms, which allows us to deal with self-joining atoms in different branches as if they were on a path.

\begin{figure}
    \centering
    $$
\begin{array}{ccc}
\begin{array}{c|ccc}
C & \underline{1} & 2 & 3 \bigstrut\\
\cline{2-4}
\ast  & c_1 & x_1 & z_-\\
  & c_1 & x_2 & z_-\\
\cdashline{2-4}
  & c_2 & z_+ & x_1 \\
\ast  & c_2 & z_+ & x_2
\end{array}
&
\begin{array}{c|ccc}
R & \underline{1} & 2 & 3\bigstrut\\
\cline{2-4}
  & x_1 & a & b\\
\ast  & x_1 & b & a\\
\cdashline{2-4}
\ast  & x_2 & a & b\\
  & x_2 & b & a\\
\cdashline{2-4}
\ast  & z_+ & a & b\\
\cdashline{2-4}
\ast  & z_- & b & a
\end{array}
&
\begin{array}{c|c}
A & \underline{1}\\
\cline{2-2}
\ast  & a \\
\multicolumn{2}{}{} \\
B & \underline{1}\\
\cline{2-2}
\ast  & b
\end{array}
\end{array}
$$
\caption{An inconsistent database instance $\db$ for $\cqa{q_1}$, where $q_1$ is represented in Fig.~\ref{fig:intro-query}(a).
Blocks are separated by dashed lines.
The facts with~$\ast$ form a repair that falsifies~$q_{1}$, corresponding to a satisfying truth assignment $x_{1}=1$ and $x_{2}=0$.
}
\label{tbl:instance}
\end{figure}

\begin{definition}[Relation $\mytotalorder{q}$]
\label{def:totalorder}
Let $q$ be a query in $\rtbcq$. 
Let $\atomAt{R}{x}{q}$ and $\atomAt{R}{y}{q}$ be two atoms in $q$. We write $\atomAt{R}{x}{q} \mytotalorder{q} \atomAt{R}{y}{q}$ if either $\atomAt{R}{x}{q} \precedes{q} \atomAt{R}{y}{q}$ or $\rootedAt{y}{q} \leqhomo{y}{x} \rootedAt{x}{q}$.
\end{definition}

\begin{proposition}
\label{prop:totalorder}
Let $q$ be a query in $\rtbcq$ satisfying $\cbranch$. For every relation name $R$, the relation $\mytotalorder{q}$ is a total preorder on all $R$-atoms in $q$.
\end{proposition}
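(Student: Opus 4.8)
The plan is to verify the three defining properties of a total preorder—reflexivity, transitivity, and totality—for the relation $\mytotalorder{q}$ restricted to $R$-atoms, using that $\cbranch$ holds and that subtree homomorphisms compose.

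First, \emph{reflexivity}: for any $R$-atom $\atomAt{R}{x}{q}$, we have $\rootedAt{x}{q} \leqhomo{x}{x} \rootedAt{x}{q}$ via the identity homomorphism, so $\atomAt{R}{x}{q} \mytotalorder{q} \atomAt{R}{x}{q}$ by the second disjunct of Definition~\ref{def:totalorder}.

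Next, \emph{totality}: let $\atomAt{R}{x}{q}$ and $\atomAt{R}{y}{q}$ be two $R$-atoms. If they are $\precedes{q}$-comparable, say $\atomAt{R}{x}{q} \precedes{q} \atomAt{R}{y}{q}$, then $\atomAt{R}{x}{q} \mytotalorder{q} \atomAt{R}{y}{q}$ immediately. Otherwise $\atomAt{R}{x}{q} \vardisjoint{q} \atomAt{R}{y}{q}$, and then $\cbranch$ gives $\rootedAt{y}{q} \leqhomo{y}{x} \rootedAt{x}{q}$ or $\rootedAt{x}{q} \leqhomo{x}{y} \rootedAt{y}{q}$, i.e., $\atomAt{R}{x}{q} \mytotalorder{q} \atomAt{R}{y}{q}$ or $\atomAt{R}{y}{q} \mytotalorder{q} \atomAt{R}{x}{q}$. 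So $\mytotalorder{q}$ is total on $R$-atoms.

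The main work is \emph{transitivity}: assume $\atomAt{R}{x}{q} \mytotalorder{q} \atomAt{R}{y}{q}$ and $\atomAt{R}{y}{q} \mytotalorder{q} \atomAt{R}{z}{q}$; we must show $\atomAt{R}{x}{q} \mytotalorder{q} \atomAt{R}{z}{q}$. Each hypothesis splits into an ancestor case ($\precedes{q}$) or a homomorphism case ($\leqhomo{}{}$ of subtrees), giving four combinations. The plan is to handle each:
\begin{itemize}
\item If both are ancestor cases, then $x \precedes{q} y \precedes{q} z$, so $x \precedes{q} z$ by transitivity of the ancestor relation in a tree, and we are done.
\item If both are homomorphism cases, then composing $\rootedAt{z}{q} \leqhomo{z}{y} \rootedAt{y}{q}$ with $\rootedAt{y}{q} \leqhomo{y}{x} \rootedAt{x}{q}$ yields $\rootedAt{z}{q} \leqhomo{z}{x} \rootedAt{x}{q}$, since composition of homomorphisms is a homomorphism that respects the tracking of the designated variable.
\item In the mixed cases—$x \precedes{q} y$ together with $\rootedAt{z}{q} \leqhomo{z}{y} \rootedAt{y}{q}$, or $\rootedAt{y}{q} \leqhomo{y}{x} \rootedAt{x}{q}$ together with $y \precedes{q} z$—I would first observe that $x \precedes{q} y$ implies $\rootedAt{y}{q} \leqhomo{y}{x}\rootedAt{x}{q}$ fails in general (a subtree need not map into its own proper subtree), so we cannot simply collapse ancestor cases to homomorphism cases. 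Instead, the correct tool is again $\cbranch$: I would compare $\atomAt{R}{x}{q}$ with $\atomAt{R}{z}{q}$ (respectively the relevant pair) directly—if they are $\precedes{q}$-comparable we check the direction is the right one, and if they are $\diffbranch{q}$ we invoke $\cbranch$ to get a homomorphism in one direction, and then rule out the wrong direction using the hypotheses (for instance, if $\rootedAt{x}{q} \leqhomo{x}{z}\rootedAt{z}{q}$ held together with the chain through $y$, one derives a contradiction with the structure, e.g.\ that the tree would contain an infinite descending chain or that an atom count is violated—here minimality of rooted tree queries, noted just before Section~3, and finiteness of $\tau$ are the levers).
\end{itemize}

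I expect the mixed cases to be the crux: the two disjuncts in Definition~\ref{def:totalorder} are of genuinely different character (one positional, one homomorphic), and transitivity across them is not formal. The clean way through is to reduce every pair to one governed by $\cbranch$ and then use a size/acyclicity argument to fix the direction; this is where I would spend the real effort. (Alternatively, one can prove a helper lemma: for $R$-atoms, $x \precedes{q} y$ implies $\rootedAt{x}{q} \nleqhomo{x}{y} \rootedAt{y}{q}$ unless the two subtrees have equal size, in which case an isomorphism-based argument applies; feeding this into the four-case split makes transitivity routine.)
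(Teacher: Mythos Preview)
Your proposal is correct and follows essentially the same route as the paper. The paper also proves totality via $\cbranch$ and handles transitivity by the same four-case split; in the two mixed cases it does exactly what you sketch: it first disposes of the possibility $z\precedes{q}x$ (resp.\ the symmetric one) by the atom-count observation that a root-preserving homomorphism of rooted ordered trees is injective, then in the remaining $\vardisjoint{q}$ situation applies $\cbranch$ and rules out the wrong direction by composing with the hypothesis to obtain $\rootedAt{x}{q}\leqhomo{x}{y}\rootedAt{y}{q}$ (resp.\ $\rootedAt{y}{q}\leqhomo{y}{z}\rootedAt{z}{q}$) while $x\precedes{q}y$ (resp.\ $y\precedes{q}z$), which is impossible by the same size argument---this is precisely your ``helper lemma.''
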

\begin{proof}
We first show that every two distinct atoms $\atomAt{R}{x}{q}$ and $\atomAt{R}{x}{q}$ are comparable by $\mytotalorder{q}$.
Let $\atomAt{R}{x}{q}$ and $\atomAt{R}{y}{q}$ be two distinct atoms in $q$. 
The claim holds if $\atomAt{R}{x}{q} \precedes{q} \atomAt{R}{y}{q}$ or $\atomAt{R}{y}{q} \precedes{q} \atomAt{R}{x}{q}$. 
Otherwise, we have $\atomAt{R}{x}{q} \vardisjoint{q} \atomAt{R}{y}{q}$, and since $q$ satisfies $\cbranch$, we have either 
$\rootedAt{x}{q} \leqhomo{x}{y} \rootedAt{y}{q}$ or $\rootedAt{y}{q} \leqhomo{y}{x} \rootedAt{x}{q}$, as desired.

Next we show show that $\mytotalorder{q}$ is transitive. 
Assume $\atomAt{R}{x}{q} \mytotalorder{q} \atomAt{R}{y}{q}$ and $\atomAt{R}{y}{q} \mytotalorder{q} \atomAt{R}{z}{q}$. We distinguish four cases.

\begin{itemize}
\item Case that $\atomAt{R}{x}{q} \precedes{q} \atomAt{R}{y}{q}$ and $\atomAt{R}{y}{q} \precedes{q} \atomAt{R}{z}{q}$. 
Then we have $\atomAt{R}{x}{q} \precedes{q} \atomAt{R}{z}{q}$, as desired.

\item Case that $\rootedAt{y}{q} \leqhomo{y}{x} \rootedAt{x}{q}$ and $\rootedAt{z}{q} \leqhomo{z}{y} \rootedAt{y}{q}$. 
Then we have $\rootedAt{z}{q} \leqhomo{z}{x} \rootedAt{x}{q}$, as desired.

\item Case that $\atomAt{R}{x}{q} \precedes{q} \atomAt{R}{y}{q}$ and $\rootedAt{z}{q} \leqhomo{z}{y} \rootedAt{y}{q}$. 
The claim follows if $\atomAt{R}{x}{q} \precedes{q} \atomAt{R}{z}{q}$. 
Suppose for contradiction that $\atomAt{R}{z}{q} \precedes{q} \atomAt{R}{x}{q}$. 
Then $\atomAt{R}{z}{q} \precedes{q} \atomAt{R}{y}{q}$, and $\rootedAt{z}{q}$ contains more atoms than $\rootedAt{y}{q}$. However, we have $\rootedAt{z}{q} \leqhomo{z}{y} \rootedAt{y}{q}$, a contradiction. 
It then must be that $\atomAt{R}{x}{q} \vardisjoint{q} \atomAt{R}{z}{q}$. 
Suppose for contradiction that $\rootedAt{x}{q} \leqhomo{x}{z} \rootedAt{z}{q}$. 
Then we have $\rootedAt{x}{q} \leqhomo{x}{y} \rootedAt{y}{q}$, but $\atomAt{R}{x}{q} \precedes{q} \atomAt{R}{y}{q}$, a contradiction. 
Since $q$ satisfies $\cbranch$, we have $\rootedAt{z}{q} \leqhomo{z}{x} \rootedAt{x}{q}$, as desired.

\item Case that $\rootedAt{y}{q} \leqhomo{y}{x} \rootedAt{x}{q}$ and $\atomAt{R}{y}{q} \precedes{q} \atomAt{R}{z}{q}$. 
The claim follows if $\atomAt{R}{x}{q} \precedes{q} \atomAt{R}{z}{q}$. 
Suppose for contradiction that $\atomAt{R}{z}{q} \precedes{q} \atomAt{R}{x}{q}$. 
Then $\atomAt{R}{y}{q} \precedes{q} \atomAt{R}{x}{q}$, and $\rootedAt{y}{q}$ contains more atoms than $\rootedAt{x}{q}$. However, we have $\rootedAt{y}{q} \leqhomo{y}{x} \rootedAt{x}{q}$, a contradiction. 
It then must be that $\atomAt{R}{x}{q} \vardisjoint{q} \atomAt{R}{z}{q}$. 
Suppose for contradiction that $\rootedAt{x}{q} \leqhomo{x}{z} \rootedAt{z}{q}$. 
Then we have $\rootedAt{y}{q} \leqhomo{y}{z} \rootedAt{z}{q}$, but $\atomAt{R}{y}{q} \precedes{q} \atomAt{R}{z}{q}$, a contradiction. 
Since $q$ satisfies $\cbranch$, it follows that $\rootedAt{z}{q} \leqhomo{z}{x} \rootedAt{x}{q}$.
\end{itemize}
This concludes the proof.
\end{proof}


The remainder of this paper is organized as follows. 
Section~\ref{sec:treenfa} defines a context-free grammar $\treecfg{q}$ for each $q\in\rtbcq$, and the problem $\certaintrace{q}$ that concerns $\treecfg{q}$.  Lemma~\ref{lem:min-start} concludes the equivalence of $\cqa{q}$ and $\certaintrace{q}$ if~$q$ satisfies $\ctwo$ (or $\cone$).
Section~\ref{sec:certaintrace-ptime}, we show that $\certaintrace{q}$ is in \LFPL~(and in \PTIME) if $q$ satisfies $\cbranch$.
In Sections~\ref{sec:algorithm} and~\ref{sec:hardness}, we show the upper bounds and lower bounds in Theorem~\ref{thm:trichotomy} respectively.
In Section~\ref{sec:berge}, we prove Theorems~\ref{thm:trichotomy:ext:no-berge} and~\ref{thm:trichotomy:ext}.

\section{Context-Free Grammar}
\label{sec:treenfa}

We first generalize NFAs used in the study of path queries~\cite{DBLP:conf/pods/KoutrisOW21} to context-free grammars (CFGs).

\begin{definition}[$\treecfg{q}$]
\label{defn:cfg}
Let $q$ be a query in $\rtbcq$ with root variable $r$.
We define a context-free grammar $\treecfg{q}$ over the string representations of rooted relation trees for each rooted tree query $q$.
The alphabet $\Sigma$ of $\treecfg{q}$ contains every relation symbol and constant in~$q$, open/close parentheses, $\bot$ and comma.

Whenever $v$ is a variable or a constant in~$q$, there is a nonterminal symbol $S_v$.
Every symbol in $\Sigma$ is a terminal symbol. The rules of $\treecfg{q}$ are as follows:
\begin{itemize}
\item for each atom $\atomAt{R}{y}{q} = R(\underline{y}, y_1, y_2, \dots, y_n)$ in $q$, there is a forward production rule 
\begin{equation} \label{eq:forward}
S_y \rightarrow_q R(S_{y_1}, S_{y_2}, \dots, S_{y_n})
\end{equation}
\item whenever $\atomAt{R}{x}{q}$ and $\atomAt{R}{y}{q}$ are atoms in~$q$ such that $\atomAt{R}{x}{q} \precedes{q} \atomAt{R}{y}{q}$, there is a backward production rule 

\begin{equation} \label{eq:backward}
S_y \rightarrow_q S_x
\end{equation}
\item for every leaf variable $u$ whose label~$L$ is either $\bot$ or a unary relation name, there is a rule 
\begin{equation} 
S_u \rightarrow_q L
\end{equation}
\item for each constant $c$ in $q$, there is a rule 
\begin{equation}
S_c \rightarrow_q c
\end{equation}
\end{itemize}

The starting symbol of $\treecfg{q}$ is $S_r$ where $r$ is the root variable of $q$. 
A rooted relation tree $\tau$ is accepted by $\treecfg{q}$, denoted as $\tau \in \treecfg{q}$, if the string representation of $\tau$ can be derived from $S_r$, written as $S_r \derive_q \tau$.
\end{definition}

\begin{example}
Let $q$ be as in Fig.~\ref{fig:rooted-tree-relation-ex}(a) with variables labeled as in Fig.~\ref{fig:rooted-tree-relation-ex}(b). 
The rooted relation tree $\tau$ in Fig.~\ref{fig:rewinding-example}(c) has string representation
$\tau=A(\tau_1,\tau_2)$ where 
\begin{align*}
\tau_1 &= R(R(R(U,\bot),X(c_1)),X(c_1)), \\
\tau_2 &= R(Y(\bot),Z(c_2,\bot)).
\end{align*}
We have $S_{x_2} \derive_q \tau_2$ by applying only forward rewrite rules.
We show next $S_{x_1} \derive_q \tau_1$, using a backward rewrite rule $S_{x_{3}} \rightarrow_q S_{x_{1}}$ at some point:

\begin{align*}
S_{x_1}
&\rightarrow_q R(S_{x_3}, S_{x_4}) \\
&\rightarrow_q R(S_{x_1}, X(S_{c_1})) \\
&\rightarrow_q R(R(S_{x_3}, S_{x_4}), X(c_1)) \\
&\rightarrow_q R(R(R (S_{x_7}, S_{x_8}), X(S_{c_1})), X(c_1)) \\
&\rightarrow_q R(    R(R(U,\bot)),X(c_1)),      X(c_1)    ) = \tau_1.
\end{align*}

Thus
$
S_{x_0} \rightarrow_q A(S_{x_1}, S_{x_2}) \derive_q A(\tau_1,\tau_2) = \tau.
$
So it is correct to conclude that $\tau$ is accepted by $\treecfg{q}$.
\end{example}


Recall from Section~\ref{sec:prelim} that a rooted tree set in a repair~$\rep$ is uniquely determined by a rooted tree~$\tau$ and a constant~$c$; such a rooted tree set is said to be accepted by~$\treecfg{q}$ if $\tau\in\treecfg{q}$.
For our technical treatment later, we next define modifications of $\treecfg{q}$ by changing its starting terminal.


\begin{definition}[$\streecfg{q}{u}$] \label{definition:partial-treelanguage}
For a query $q$ in $\rtbcq$ and a variable $u$ in $q$,
we define $\streecfg{q}{u}$ as the context-free grammar that accepts a rooted relation tree $\tau$ if and only if $S_u \derive_q \tau$. 
\end{definition}



We now introduce the \emph{certain trace problem}.
For each $q$ in $\rtbcq$, $\certaintrace{q}$ is defined as the following decision problem:

\begin{itemize}
\item[] \textbf{PROBLEM} $\certaintrace{q}$ 
\item[] \textbf{Input}: A database instance $\db$.
\item[] \textbf{Question}: Is there a constant $c \in \db$, such that for every repair $\rep$ of $\db$, there is a rooted tree set $\tau$ in $\rep$ starting in $c$ with $\tau\in\treecfg{q}$?
\end{itemize}

The problems $\cqa{q}$ and $\certaintrace{q}$ reduce to each other if $q$ satisfies $\ctwo$. 

\begin{lemma}
\label{lem:min-start}
Let $q$ be a query in $\rtbcq$ satisfying $\ctwo$. Let $\db$ be a database instance. Then the following statements are equivalent:
\begin{enumerate}
\item $\db$ is a ``yes''-instance of $\cqa{q}$; and 
\label{it:rifidb}
\item $\db$ is a ``yes''-instance of $\certaintrace{q}$. 
\label{it:rificonstant}
\end{enumerate}
\end{lemma}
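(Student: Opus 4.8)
The plan is to show that both~\eqref{it:rifidb} and~\eqref{it:rificonstant} are reformulations of the statement ``every repair of $\db$ satisfies $q$'', the only real subtlety being an interchange of the quantifiers over constants and over repairs. First I would isolate the auxiliary fact that $\treecfg{q}$ only accepts trees hosting a homomorphic image of $q$: for every $q\in\rtbcq$ satisfying $\cfactor$ and every variable $u$ of $q$, if a rooted relation tree $\tau$ is accepted by $\streecfg{q}{u}$, then there is a homomorphism from $\rootedAt{u}{q}$ to $\tau$ that sends $u$ to the root of $\tau$; taking $u$ the root variable of $q$ gives $q\leqhomo{}{}\tau$ for every $\tau\in\treecfg{q}$. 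I would prove this by induction on the length of a derivation $S_u\derive_q\tau$. A forward production rule $S_u\rightarrow_q R(S_{u_1},\dots,S_{u_n})$ applies the induction hypothesis to each child and reassembles the homomorphisms; a leaf or constant rule is immediate; and a backward production rule $S_y\rightarrow_q S_x$, which occurs only when $\atomAt{R}{x}{q}\precedes{q}\atomAt{R}{y}{q}$, is handled by composing the homomorphism $\rootedAt{x}{q}\to\tau$ provided by the induction hypothesis with a homomorphism $\rootedAt{y}{q}\leqhomo{y}{x}\rootedAt{x}{q}$ extracted from $\cfactor$ (unfolding $q\leqhomo{}{}\rewind{q}{y}{x}{R}$ through the definition of rewinding, in the spirit of Lemma~\ref{lemma:branch-to-root-homomorphism}). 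This auxiliary fact does genuinely need $\cfactor$: for $q = R(R(A))$ one has $R(R(R(A)))\in\treecfg{q}$ but $q\nleqhomo{}{}R(R(R(A)))$. Conversely, any repair $\rep$ together with a homomorphism $h\colon q\to\rep$ yields a rooted tree set in $\rep$ starting in $h(r)$ that lies in $\treecfg{q}$, namely the rooted relation tree of $q$ itself instantiated by $h$ (this uses only forward, leaf and constant rules). Combining the two directions: for every repair $\rep$ of $\db$, $\rep\models q$ if and only if there is a constant $c$ and a rooted tree set in $\rep$ starting in $c$ that lies in $\treecfg{q}$.

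Given this characterization, the implication~\eqref{it:rificonstant}$\Rightarrow$\eqref{it:rifidb} is immediate: if some constant $c$ witnesses~\eqref{it:rificonstant}, then every repair $\rep$ contains a $\treecfg{q}$-accepted rooted tree set, hence $\rep\models q$, so $\db$ is a ``yes''-instance of $\cqa{q}$.

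For~\eqref{it:rifidb}$\Rightarrow$\eqref{it:rificonstant} I would argue by contraposition. Assume no single constant witnesses~\eqref{it:rificonstant}; that is, for every $c\in\adom{\db}$ there is a repair $\rep_c$ of $\db$ containing no $\treecfg{q}$-accepted rooted tree set starting in $c$. The goal is to amalgamate the family $(\rep_c)_{c\in\adom{\db}}$ into a single repair $\rep^{\ast}$ that, simultaneously for every constant $c$, contains no $\treecfg{q}$-accepted rooted tree set starting in $c$; by the characterization above $\rep^{\ast}\not\models q$, so $\db$ is a ``no''-instance of $\cqa{q}$, as required. To build $\rep^{\ast}$ I would exploit the other half of $\ctwo$, namely $\cbranch$: by Proposition~\ref{prop:totalorder}, for each relation name $R$ the relation $\mytotalorder{q}$ is a total preorder on the $R$-atoms of $q$, which (as the discussion following Theorem~\ref{thm:trichotomy} indicates) makes self-joining atoms occurring in different branches behave as though they were on a single path. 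Using this preorder one can commit, block by block, to a globally coherent choice of facts for $\rep^{\ast}$ — roughly, keeping for each block the choice that is ``least useful'' with respect to $\mytotalorder{q}$ — and then show that any hypothetical $\treecfg{q}$-accepted rooted tree set in $\rep^{\ast}$ starting in some $c$ can, by manipulating its derivation along the preorder (which is what the backward production rules formalize), be turned into a $\treecfg{q}$-accepted rooted tree set in $\rep_c$ starting in $c$, contradicting the choice of $\rep_c$.

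The step I expect to be the main obstacle is precisely this amalgamation: converting the family of ``locally falsifying'' repairs $(\rep_c)_c$ into one ``globally falsifying'' repair. Unlike the path-query setting of~\cite{DBLP:conf/pods/KoutrisOW21}, where rewinding keeps every witness on a line and falsification can be localized, here the branching of rooted tree queries means a rooted tree set starting in $c$ may draw facts from anywhere in the instance; the total preorder granted by $\cbranch$ is the tool that restores enough linearity to make a single coherent global choice possible, and checking that this choice defeats \emph{every} constant at once is where the real work lies. (The variant of the lemma for $\cone$ in place of $\ctwo$ follows since $\cone$ entails $\ctwo$; the stronger root-preserving homomorphisms in $\cone$ are not needed here.)
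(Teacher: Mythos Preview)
Your direction \eqref{it:rificonstant}$\Rightarrow$\eqref{it:rifidb} and the overall plan match the paper, but your proof of the auxiliary fact has a genuine gap. You claim that from $\cfactor$ and $\atomAt{R}{x}{q}\precedes{q}\atomAt{R}{y}{q}$ one can extract $\rootedAt{y}{q}\leqhomo{y}{x}\rootedAt{x}{q}$ ``in the spirit of Lemma~\ref{lemma:branch-to-root-homomorphism}''. But that lemma concerns only the case $\vardisjoint{q}$; having the analogous implication for every pair with $\precedes{q}$ is precisely $\cprefix$, which is strictly stronger than $\cfactor$. Concretely, take the path query $q=R(R(X(\bot)))$ with $\atomAt{R}{x_0}{q}\precedes{q}\atomAt{R}{x_1}{q}$. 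The rewind $\rewind{q}{x_1}{x_0}{R}=R(R(R(X(\bot))))$ admits a (non-root-preserving) homomorphism from $q$, so $\cfactor$ holds, yet $\rootedAt{x_1}{q}=R(X(\bot))\nleqhomo{x_1}{x_0}\rootedAt{x_0}{q}$, and accordingly your root-preserving invariant is already false at the top: $\rewind{q}{x_1}{x_0}{R}\in\treecfg{q}$ but $q\nleqhomo{r}{r}\rewind{q}{x_1}{x_0}{R}$. The paper's Lemma~\ref{lemma:factor-of-treecfg} avoids this by inducting on the number of backward transitions in a derivation $S_r\derive_q\tau$ of the \emph{whole} tree, peeling off one backward step at a time and applying $\cfactor$ to the resulting global rewind rather than to a subtree; it never asserts a root-preserving homomorphism. (Your illustrative example is also off: for $q=R(R(A))$ one \emph{does} have $q\leqhomo{}{}R(R(R(A)))$, by shifting the root down one level, and this $q$ satisfies $\cfactor$.)

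For \eqref{it:rifidb}$\Rightarrow$\eqref{it:rificonstant}, your ``amalgamation'' idea is morally the paper's frugal repair, but the paper does not build it from a family $(\rep_c)_c$. The frugal repair is defined directly from the fixpoint set~$B$ of Fig.~\ref{fig:algo} (in each block, pick a fact with $\subseteq$-minimal $\sset{\cdot}{\db}{q}$), and the key minimality $\starttwo{q}{\rep^{*}}\subseteq\starttwo{q}{\rep}$ for every repair $\rep$ (Lemma~\ref{lem:key}) is obtained by routing through $B$ via Lemmas~\ref{lemma:frugal-no-tree} and~\ref{lemma:correctness}. Your sketch is in the right spirit, and you correctly flag this step as the main obstacle; the detour through the fixpoint is what discharges it.
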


The proof of Lemma~\ref{lem:min-start} is deferred to Section~\ref{sec:algorithm} since it requires some useful results to be developed in the subsequent sections.

\section{Membership of $\certaintrace{q}$ in \LFPL}
\label{sec:certaintrace-ptime}

In this section, we show that the problem $\certaintrace{q}$ is expressible in \LFPL~(and thus in \PTIME)~if $q$ satisfies $\cbranch$.
Let $\db$ be a database instance. 
Consider the algorithm in Fig.~\ref{fig:algo}, following a dynamic programming fashion. 
The algorithm iteratively computes a set $B$ of pairs $\pair{c}{y}$ until it reaches a fixpoint, ensuring that
\begin{quote}
whenever $\pair{c}{y}$ is added to $B$, then every repair of~$\db$ contains a rooted tree set starting in $c$ that is accepted by $\streecfg{q}{y}$. 
\end{quote}
Intuitively, this holds true because $\pair{c}{y}$ is added to $B$ if for every possible fact $f=R(\underline{c}, \vec{d})$ that can be chosen by a repair of $\db$, the context-free grammar $\streecfg{q}{y}$ can proceed 
by firing forward rule with nonterminal $S_y$ that consumes $f$ from the rooted tree set, or by non-deterministically firing some backward rule of the form $S_y\rightarrow_{q}S_{x}$.

The formal semantics for each pair $\pair{c}{y}$ is stated in Lemma~\ref{lemma:correctness}.

\begin{figure*}[t]
\centering
\begin{tabular}{rl}
\textbf{Initialization Step:} & 
\textbf{for every }$c \in \adom{\db}$ and leaf variable or constant $u$ in $q$ \\
&
\begin{tabular}[t]{rll}
& \textbf{add} $\pair{c}{u}$ to $B$ \textbf{if} & $u=c$ is a constant, \\
& & or the label of variable $u$ in $q$ is either $\bot$, \\ 
& & or $L$ with $L(\underline{c}) \in \db$.
\end{tabular} \\
\textbf{Iterative Rule:} & 
\textbf{for every }$c \in \adom{\db}$ and atom $R(\underline{y}, y_1,y_2,\dots,y_n)$ in $q$ \\
& 
\begin{tabular}[t]{rl}
   & \textbf{add} $\pair{c}{y}$ to $B$ \textbf{if} the following formula holds:
\end{tabular}  
\end{tabular}
\begin{minipage}{\textwidth}   
$$
\exists \vec{d} : R(\underline{c}, \vec{d}) \in \db \land \forall \vec{d} : \formula{
R(\underline{c}, \vec{d}) \in \db \rightarrow \factformula{R}{c}{\vec{d}}{y}},
$$where 
$$
\begin{array}{l}
\factformula{R}{c}{\vec{d}}{y} = 
\underbrace{\formula{\displaystyle{\bigwedge_{1 \leq i \leq n}} \pair{d_i}{y_i} \in B}}_{\text{forward production}}
\lor 
\underbrace{\formula{\displaystyle{\bigvee_{\atomAt{R}{x}{q} \precedes{q} \atomAt{R}{y}{q}}}  \factformula{R}{c}{\vec{d}}{x}}}_{\text{backward production}}
\end{array}
$$
and $\vec{d} = \langle d_1, d_2, \dots, d_n \rangle$.
\end{minipage}
\caption{A fixpoint algorithm for computing a set $B$, for a fixed rooted tree~$q$. 
}
\label{fig:algo}
\end{figure*}

\begin{lemma}
\label{lemma:correctness}
Let $q$ be a query in $\rtbcq$ satisfying $\cbranch$.
Let $\db$ be a database instance. 
Let $B$ be the output of the algorithm in Fig.~\ref{fig:algo}. 
Then for any constant $c\in\adom{\db}$ and a variable $y$ in $q$, the following statements are equivalent:
\begin{enumerate}
\item $\pair{c}{y} \in B$; and
\label{it:syntax}
\item for every repair $\rep$ of $\db$, there exists a rooted tree set $\tau$ in~$\rep$ starting in~$c$ such that $\tau\in\streecfg{q}{y}$.
\label{it:semantics}
\end{enumerate}
\end{lemma}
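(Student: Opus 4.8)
The plan is to prove the two implications separately. For \eqref{it:syntax}$\Rightarrow$\eqref{it:semantics}, I would induct on the stage at which $\pair{c}{y}$ enters $B$. The base case is the initialization step: if $\pair{c}{u}$ is added because $u=c$ is a constant, or $u$ has label $\bot$, or $u$ has label $L$ with $L(\underline{c})\in\db$, then in \emph{every} repair $\rep$ the singleton tree set rooted at $c$ (resp.\ the empty tree set $\bot$, resp.\ the fact $L(\underline{c})$, which is forced into every repair since it is its own block) witnesses $\tau\in\streecfg{q}{u}$. For the inductive step, suppose $\pair{c}{y}$ is added via the iterative rule for atom $R(\underline{y},y_1,\dots,y_n)$. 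Fix an arbitrary repair $\rep$. The conjunct $\exists\vec{d}:R(\underline{c},\vec{d})\in\db$ guarantees the block $R(\underline{c},*)$ is nonempty, so $\rep$ contains exactly one fact $R(\underline{c},\vec{d})$ from it; the universal conjunct then tells us $\factformula{R}{c}{\vec{d}}{y}$ holds. Unfolding the definition of $\factformulapred$, there is a finite chain $y = x_0, x_1, \dots, x_m$ with $\atomAt{R}{x_{j+1}}{q}\precedes{q}\atomAt{R}{x_j}{q}$ (each step a backward production) ending at some $x_m$ for which the forward disjunct holds, i.e.\ $\pair{d_i}{(x_m)_i}\in B$ for every child $(x_m)_i$ of $x_m$, where $R(\underline{x_m},(x_m)_1,\dots,(x_m)_n)$ is the $R$-atom at $x_m$. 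Each $\pair{d_i}{(x_m)_i}$ entered $B$ no later than $\pair{c}{y}$, so by induction $\rep$ contains a tree set $\tau_i$ starting in $d_i$ with $\tau_i\in\streecfg{q}{(x_m)_i}$. Then $\tau := R(\tau_1,\dots,\tau_n)$ is a tree set in $\rep$ starting in $c$, and $S_{x_m}\rightarrow_q R(S_{(x_m)_1},\dots,S_{(x_m)_n})\derive_q\tau$, while $S_y\rightarrow_q S_{x_1}\rightarrow_q\cdots\rightarrow_q S_{x_m}$ by the backward rules, so $S_y\derive_q\tau$, i.e.\ $\tau\in\streecfg{q}{y}$.

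For \eqref{it:semantics}$\Rightarrow$\eqref{it:syntax}, I would argue the contrapositive: if $\pair{c}{y}\notin B$, I construct a single repair $\rep$ that contains no tree set starting in $c$ accepted by $\streecfg{q}{y}$. Since $B$ is a fixpoint, for the atom $R(\underline{y},\dots)$ (or, if $y$ is a leaf variable, directly by the failure of the initialization condition) the iterative formula fails at $\pair{c}{y}$: either the block $R(\underline{c},*)$ is empty — in which case no $R$-tree-set starts at $c$ in any repair and we are done — or there is a fact $R(\underline{c},\vec{d})\in\db$ with $\lnot\factformula{R}{c}{\vec{d}}{y}$. I would choose this fact into the repair, and recurse: the failure of $\factformulapred$ says that for \emph{every} backward-reachable $R$-atom at some $x$ with $\atomAt{R}{x}{q}\mypreceq{}\atomAt{R}{y}{q}$ (via the $\precedes{q}$-chains), the forward conjunct fails, i.e.\ some child $x_i$ has $\pair{d_i}{x_i}\notin B$; pick such a child, recurse on $\pair{d_i}{x_i}$, and keep building $\rep$ greedily (extending arbitrarily to a full repair on the untouched part of $\db$). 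The subtlety is that a tree set $\tau\in\streecfg{q}{y}$ in this $\rep$ would, by the shape of the grammar rules, have its root fact lie in the block $R(\underline{c},*)$, hence be the chosen fact $R(\underline{c},\vec{d})$; any derivation $S_y\derive_q\tau$ must begin with some sequence of backward rules $S_y\rightarrow_q S_{x_1}\rightarrow_q\cdots\rightarrow_q S_{x_m}$ followed by the forward rule at $x_m$, forcing $\atomAt{R}{x_m}{q}$ to be one of the $R$-atoms covered by the recursive definition of $\factformulapred$ — contradicting that we arranged some child of $x_m$ to fail. This needs the well-definedness of the greedy construction: different "demands" $\pair{d_i}{x_i}$ placed on the same block must be mutually satisfiable, i.e.\ we never need to put two distinct facts of one block into $\rep$.

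I expect the main obstacle to be exactly this consistency of the greedy repair construction, and this is where $\cbranch$ is used. Without $\cbranch$, two self-joining atoms $\atomAt{R}{x}{q}$ and $\atomAt{R}{x'}{q}$ sitting in incomparable branches could impose incompatible requirements on a single block $R(\underline{c},*)$ — this is precisely the phenomenon that makes $\cqa{q}$ \coNP-hard when $\cbranch$ fails. Under $\cbranch$, Proposition~\ref{prop:totalorder} gives a total preorder $\mytotalorder{q}$ on all $R$-atoms, and the key point is that satisfying the demand for the $\mytotalorder{q}$-\emph{largest} (hardest) requested atom at a block automatically satisfies all smaller ones, because $\rootedAt{x}{q}\leqhomo{x}{x'}\rootedAt{x'}{q}$ whenever $\atomAt{R}{x}{q}\mytotalorder{q}\atomAt{R}{x'}{q}$ means any tree set accepted from the larger subquery is also accepted from the smaller. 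So when several pairs $\pair{c}{x_1},\dots,\pair{c}{x_k}\notin B$ are demanded to \emph{fail} at block $R(\underline{c},*)$, I can treat them uniformly: pick a fact that defeats the $\mytotalorder{q}$-minimal one, noting that failure propagates downward along $\mytotalorder{q}$. I would isolate this "a single fact choice simultaneously handles a $\mytotalorder{q}$-chain of atoms" observation as the crux of the argument, reducing the tree case to essentially the path-query reasoning of~\cite{DBLP:conf/pods/KoutrisOW21}, and carry the recursion along the structure of $q$, terminating because each recursive call moves strictly downward in the finite query tree.
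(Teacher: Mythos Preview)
Your (1)$\Rightarrow$(2) direction is the paper's proof.

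For (2)$\Rightarrow$(1), the paper takes a cleaner route that avoids both the consistency and termination issues you anticipate. Instead of building the repair greedily from a given failing pair, it defines a single \emph{frugal repair}~$\rep^*$ globally and uniformly: from each block $R(\underline{c},*)$, pick a fact whose \emph{frugal set} $\sset{R(\underline{c},\vec{d})}{\db}{q}=\{\atomAt{R}{x}{q}\mid\factformula{R}{c}{\vec{d}}{x}\text{ true}\}$ is $\subseteq$-minimal. Well-definedness is Lemma~\ref{lemma:comparable} (frugal sets of key-equal facts are $\subseteq$-comparable), and this is precisely where $\cbranch$ and the total preorder $\mytotalorder{q}$ enter. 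The implication (2)$\Rightarrow$(1) is then one line: apply~(2) to~$\rep^*$ and invoke Lemma~\ref{lemma:frugal-no-tree}, which shows by structural induction on a derivation tree that any accepting tree set in~$\rep^*$ forces $\pair{c}{y}\in B$.

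Your greedy construction is morally the same repair, but two points in your sketch are genuine gaps. First, you have the direction of $\mytotalorder{q}$ inverted: by Definition~\ref{def:totalorder}, $\atomAt{R}{x}{q}\mytotalorder{q}\atomAt{R}{x'}{q}$ means $\rootedAt{x'}{q}\leqhomo{x'}{x}\rootedAt{x}{q}$ (or $x\precedes{q}x'$), the opposite of what you wrote; correspondingly, by the contrapositive of Lemma~\ref{lemma:chain-property}, failure of $\factformulapred$ propagates from $\mytotalorder{q}$-larger to $\mytotalorder{q}$-smaller, so to defeat several demanded atoms at one block you must defeat the $\mytotalorder{q}$-\emph{maximal} one, not the minimal one. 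Second, your termination claim fails: backward rules send $y$ to an ancestor $R$-atom $x_m$, so the recursive target $(x_m)_i$ need not sit lower in $q$ than $y$, and data cycles allow the same pair $\pair{d}{z}$ to be revisited. The global, non-recursive definition of~$\rep^*$ dissolves both issues at once.
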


The crux in the proof of Lemma~\ref{lemma:correctness} relies on the existence of repairs called \emph{frugal}: to show item~(\ref{it:semantics}) of Lemma~\ref{lemma:correctness}, it will be sufficient to show that it holds true for frugal repairs.
Frugal repairs also turn out to be useful in proving Lemma~\ref{lem:min-start} and offer an alternative perspective to the algorithm, as stated in Corollary~\ref{prop:frugal-perspective}.

\subsection{Frugal repairs}
\label{sec:frugal-repairs}

We first show that the evaluation result of the predicate ``$\factformulapred$'' and the membership in $B$ in the algorithm of Fig.~\ref{fig:algo} propagate along the total preorder $\mytotalorder{q}$.

\begin{lemma}
\label{lemma:chain-property}
Let $q$ be a query in $\rtbcq$ satisfying $\cbranch$, and~$\db$ a database instance. 
Let $\atomAt{R}{x}{q}, \atomAt{R}{y}{q}$ be two atoms of~$q$.
Then for every fact $R(\underline{c},\vec{d})$ in $\db$ and two atoms $\atomAt{R}{x}{q} \mytotalorder{q} \atomAt{R}{y}{q}$,
\begin{enumerate}
\item
if $\factformula{R}{c}{\vec{d}}{x}$ is true, then $\factformula{R}{c}{\vec{d}}{y}$ is true, where $\factformulapred$ is defined by the algorithm of Fig.~\ref{fig:algo}; and
\item
if $\pair{c}{x} \in B$, then $\pair{c}{y} \in B$, where $B$ is the output of the algorithm of Fig.~\ref{fig:algo}; and
\end{enumerate}
\end{lemma}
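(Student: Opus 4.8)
The plan is to derive item~(2) from item~(1), and to prove item~(1) by induction on subtrees, the essential tool being the total preorder $\mytotalorder{q}$ supplied by Proposition~\ref{prop:totalorder}. At the fixpoint, the iterative rule of Fig.~\ref{fig:algo} says that $\pair{c}{x}\in B$ holds exactly when some $R$-fact with key~$c$ occurs in $\db$ and every such fact $R(\underline{c},\vec{d})$ satisfies $\factformula{R}{c}{\vec{d}}{x}$; the existential part does not mention~$x$, so once item~(1) is available it transfers this characterization verbatim to $\pair{c}{y}\in B$, giving item~(2). For item~(1) itself, if $\atomAt{R}{x}{q}\precedes{q}\atomAt{R}{y}{q}$ then $\factformula{R}{c}{\vec{d}}{x}$ is literally one of the backward-production disjuncts of $\factformula{R}{c}{\vec{d}}{y}$, so the implication is immediate. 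The work lies in the remaining case $\rootedAt{y}{q}\leqhomo{y}{x}\rootedAt{x}{q}$.

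For that case I would prove, by induction on the number of vertices of $\rootedAt{v}{q}$, the following for every pair of vertices $u,v$ of~$q$ with $\rootedAt{v}{q}\leqhomo{v}{u}\rootedAt{u}{q}$, say via a homomorphism $h$ with $h(v)=u$, and every $c\in\adom{\db}$: (I)~if $\pair{c}{u}\in B$ then $\pair{c}{v}\in B$; and (II)~if $v$ (hence $u$) is labeled by a relation name~$S$, then $\factformula{S}{c}{\vec{e}}{u}$ implies $\factformula{S}{c}{\vec{e}}{v}$ for every fact $S(\underline{c},\vec{e})\in\db$. Taking $(u,v)=(x,y)$ and $S=R$ in~(II) settles the outstanding case of item~(1). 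The base cases are when $v$ is a leaf, and they are read off the initialization step: for a $\bot$-leaf, $\pair{c}{v}\in B$ holds for all $c\in\adom{\db}$ unconditionally; for a constant-leaf, $h$ forces $u$ to be the same constant; for a leaf labeled by a unary relation name~$A$, $h$ forces $u$ to carry label~$A$ and $\pair{c}{u}\in B\iff A(\underline{c})\in\db\iff\pair{c}{v}\in B$; in all of these, condition~(II) is vacuous or trivial.

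For the inductive step, let $v$ be labeled by a relation name~$S$ with children $v_1,\dots,v_m$. Since $v$ is not a $\bot$-leaf, $h$ preserves its label, so $u$ is labeled~$S$ with children $u_1,\dots,u_m$, and since $h$ is order-preserving, $h(v_j)=u_j$ with $h$ restricting to a witness of $\rootedAt{v_j}{q}\leqhomo{v_j}{u_j}\rootedAt{u_j}{q}$ for each~$j$. To prove~(II), fix a fact $S(\underline{c},\vec{e})\in\db$ with $\factformula{S}{c}{\vec{e}}{u}$ true. Unfolding the recursion in the definition of $\factformulapred$ (well-founded since it descends along $\precedes{q}$), there is an $S$-atom~$u^*$ with $u^*=u$ or $\atomAt{S}{u^*}{q}\precedes{q}\atomAt{S}{u}{q}$ whose children $u^*_1,\dots,u^*_m$ satisfy $\pair{e_j}{u^*_j}\in B$ for all~$j$. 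If $u^*=u$, then applying~(I) — the induction hypothesis, valid since each $\rootedAt{v_j}{q}$ is strictly smaller — to each $(v_j,u_j)$ gives $\pair{e_j}{v_j}\in B$, which is the forward-production disjunct of $\factformula{S}{c}{\vec{e}}{v}$. If instead $\atomAt{S}{u^*}{q}\precedes{q}\atomAt{S}{u}{q}$, then $\atomAt{S}{u^*}{q}\mytotalorder{q}\atomAt{S}{u}{q}$, and from~$h$ we also have $\atomAt{S}{u}{q}\mytotalorder{q}\atomAt{S}{v}{q}$; since $q$ satisfies $\cbranch$, transitivity of $\mytotalorder{q}$ (Proposition~\ref{prop:totalorder}) yields $\atomAt{S}{u^*}{q}\mytotalorder{q}\atomAt{S}{v}{q}$. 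If this holds because $\atomAt{S}{u^*}{q}\precedes{q}\atomAt{S}{v}{q}$, then $\factformula{S}{c}{\vec{e}}{u^*}$ — which is true, its forward disjunct being satisfied — is a backward-production disjunct of $\factformula{S}{c}{\vec{e}}{v}$, so we are done; if it holds because $\rootedAt{v}{q}\leqhomo{v}{u^*}\rootedAt{u^*}{q}$ via some $h'$, then $h'(v_j)=u^*_j$ and $\rootedAt{v_j}{q}\leqhomo{v_j}{u^*_j}\rootedAt{u^*_j}{q}$, so~(I) applied to each $(v_j,u^*_j)$ again produces the forward-production disjunct of $\factformula{S}{c}{\vec{e}}{v}$. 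In every case $\factformula{S}{c}{\vec{e}}{v}$ holds, which proves~(II); and then~(I) for $(u,v)$ follows from~(II) exactly as item~(2) was derived from item~(1).

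The only genuinely delicate point — and what I expect to be the main obstacle — is the backward-production case $\atomAt{S}{u^*}{q}\precedes{q}\atomAt{S}{u}{q}$ with $u^*$ far above~$u$: a priori there is no syntactic tie between $u^*$ and~$v$ on which to recurse. What makes it go through is that $\cbranch$ turns $\mytotalorder{q}$ into a total preorder on all $S$-atoms, so transitivity forces $u^*$ and~$v$ into one of the two comparable positions, each of which the induction can absorb — one via a backward production of $\treecfg{q}$ that is already present, the other by descending to the children. This is precisely where the hypothesis $\cbranch$ (through Proposition~\ref{prop:totalorder}) is indispensable; the rest is bookkeeping with the definitions in Fig.~\ref{fig:algo} and the elementary fact that order-preserving homomorphisms of rooted relation trees map corresponding subtrees to corresponding subtrees.
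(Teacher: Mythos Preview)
Your proposal is correct and follows essentially the same route as the paper: the paper isolates the homomorphism case as a separate lemma (Lemma~\ref{lemma:formula-via-homo}) proved by simultaneous induction on the height of $\rootedAt{y}{q}$, with the same forward/backward-production case split and the same use of $\cbranch$ to relate the backward ancestor to~$y$. Your version is slightly tidier in that you invoke Proposition~\ref{prop:totalorder} directly for the transitivity step and unfold the backward recursion to~$u^*$ in one shot, whereas the paper iterates the ancestor argument step by step and re-derives the needed comparability inline from~$\cbranch$; but structurally the two arguments coincide.
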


The technical proof of Lemma~\ref{lemma:chain-property} is deferred to Appendix~\ref{app:certaintrace-ptime}.

\begin{definition}[Frugal Set]
Let $q$ be a query in $\rtbcq$ satisfying $\cbranch$, and~$\db$ a database instance. 
Let $f = R(\underline{c}, \vec{d})$ be an $R$-fact in $\db$. 
We define the frugal set of $f$ in $\db$ with respect to $q$ as $$\sset{f}{\db}{q} = \{\atomAt{R}{x}{q} \in \atoms{q} \mid \factformula{R}{c}{\vec{d}}{x} \text{ is true}\}.$$
\end{definition}

\begin{lemma}
\label{lemma:comparable}
Let $q$ be a query in $\rtbcq$ satisfying $\cbranch$, and~$\db$ a database instance. 
For every two key-equal facts $f$ and~$g$ in~$\db$, the sets $\sset{f}{\db}{q}$ and $\sset{g}{\db}{q}$ are comparable by~$\subseteq$.
\end{lemma}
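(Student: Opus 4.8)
\textbf{Proof plan for Lemma~\ref{lemma:comparable}.}

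The plan is to reduce the comparability of $\sset{f}{\db}{q}$ and $\sset{g}{\db}{q}$ to the total preorder $\mytotalorder{q}$ on $R$-atoms (Proposition~\ref{prop:totalorder}) together with the propagation property (Lemma~\ref{lemma:chain-property}). Since $f$ and $g$ are key-equal, write $f = R(\underline{c}, \vec{d})$ and $g = R(\underline{c}, \vec{e})$ for the same constant $c$. Both frugal sets are subsets of the set of all $R$-atoms of $q$, and by Proposition~\ref{prop:totalorder} this set is linearly ordered by $\mytotalorder{q}$ (up to the equivalence induced by the preorder). The key observation I would establish is that each frugal set is \emph{upward closed} in this preorder: if $\atomAt{R}{x}{q} \in \sset{f}{\db}{q}$ and $\atomAt{R}{x}{q} \mytotalorder{q} \atomAt{R}{y}{q}$, then $\atomAt{R}{y}{q} \in \sset{f}{\db}{q}$. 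This is exactly item~(1) of Lemma~\ref{lemma:chain-property}, applied to the fact $f$. The same holds for $g$.

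Given that both $\sset{f}{\db}{q}$ and $\sset{g}{\db}{q}$ are upward-closed subsets of a set that is totally preordered by $\mytotalorder{q}$, I would argue as follows. Suppose, for contradiction, that neither is contained in the other. Then there is an atom $\atomAt{R}{x}{q} \in \sset{f}{\db}{q} \setminus \sset{g}{\db}{q}$ and an atom $\atomAt{R}{y}{q} \in \sset{g}{\db}{q} \setminus \sset{f}{\db}{q}$. By totality of $\mytotalorder{q}$, we may assume w.l.o.g.\ that $\atomAt{R}{x}{q} \mytotalorder{q} \atomAt{R}{y}{q}$. But then upward closure of $\sset{f}{\db}{q}$ gives $\atomAt{R}{y}{q} \in \sset{f}{\db}{q}$, contradicting $\atomAt{R}{y}{q} \notin \sset{f}{\db}{q}$. (If instead $\atomAt{R}{y}{q} \mytotalorder{q} \atomAt{R}{x}{q}$, use upward closure of $\sset{g}{\db}{q}$ symmetrically.) Hence one of the two sets is contained in the other, i.e., they are comparable by $\subseteq$.

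The only subtlety — and the step I expect to warrant the most care — is making sure that the ``upward closure'' argument is phrased correctly when $\mytotalorder{q}$ is merely a preorder rather than a partial order: distinct $R$-atoms $\atomAt{R}{x}{q}$ and $\atomAt{R}{y}{q}$ can satisfy $\atomAt{R}{x}{q} \mytotalorder{q} \atomAt{R}{y}{q}$ and $\atomAt{R}{y}{q} \mytotalorder{q} \atomAt{R}{x}{q}$ simultaneously, and one must check that membership in a frugal set is constant on each $\mytotalorder{q}$-equivalence class (which again follows by applying Lemma~\ref{lemma:chain-property}(1) in both directions). Once that is noted, the argument above goes through verbatim, since the contradiction-hypothesis derivation never needs antisymmetry — only totality and upward closure. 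No appeal to repairs or to the semantics of $B$ is needed here; the statement is purely about the syntactic predicate $\factformulapred$ and its behaviour along the preorder.
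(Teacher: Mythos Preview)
Your proposal is correct and follows essentially the same approach as the paper: both argue by contradiction, pick witnesses in the two set differences, use totality of $\mytotalorder{q}$ (Proposition~\ref{prop:totalorder}) to compare them, and then invoke Lemma~\ref{lemma:chain-property}(1) to force one witness into the set it was supposed to avoid. Your explicit framing via ``upward closure'' is a clean way to package the use of Lemma~\ref{lemma:chain-property}, and your remark about the preorder subtlety is accurate but, as you note, unnecessary for the contradiction.
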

\begin{proof}
Suppose for contradiction that there exists two key-equal facts $f=R(\underline{c}, \vec{d_1})$ and $g=R(\underline{c}, \vec{d_2})$ in $\db$ such that $\atomAt{R}{x}{q} \in \sset{f}{\db}{q} \setminus \sset{g}{\db}{q}$ and $\atomAt{R}{y}{q} \in \sset{g}{\db}{q} \setminus \sset{f}{\db}{q}$.
By Proposition~\ref{prop:totalorder}, assume without loss of generality that $\atomAt{R}{x}{q}\mytotalorder{q}\atomAt{R}{y}{q}$.
Then since $\atomAt{R}{x}{q} \in \sset{f}{\db}{q}$, we have $\factformula{R}{c}{\vec{d_1}}{x}$ is true, and thus $\factformula{R}{c}{\vec{d_1}}{y}$ is true by Lemma~\ref{lemma:chain-property}, and hence $\atomAt{R}{y}{q} \in \sset{f}{\db}{q}$, a contradiction.
A similar contradiction can also be reached if $\atomAt{R}{y}{q}\mytotalorder{q}\atomAt{R}{x}{q}$.
This completes the proof.
\end{proof}

Informally, by Lemma~\ref{lemma:comparable}, 
among all facts of a non-empty block $R(\underline{c},*)$ in $\db$, 
there is a (not necessarily unique) fact $R(\underline{c},\vec{d})$ with a $\subseteq$-minimal frugal set in $\db$.
The repair~$\rep^*$ of $\db$ containing all such facts is frugal in the sense that each fact in it satisfies as few $R$-atoms as possible; 
and if $\rep^*$ contains a rooted tree set $\tau$ starting in~$c$ accepted by $\streecfg{q}{y}$, so should every repair of~$\db$. We now formalize this idea.


\begin{definition}[Frugal repair]
\label{def:frugal-repair}
Let $q$ be a query in $\rtbcq$ satisfying $\cbranch$.
Let $\db$ be a database instance.
A \emph{frugal repair}~$\rep^*$ of $\db$ with respect to $q$ is constructed by picking, from each block $R(\underline{c}, *)$ of $\db$,  a fact $R(\underline{c}, \vec{d})$  which $\subseteq$-minimizes $\sset{R(\underline{c}, \vec{d})}{\db}{q}$.
\end{definition}

Lemma~\ref{lemma:frugal-fact-choice-static} is then straightforward by construction of a frugal repair.

\begin{lemma}
\label{lemma:frugal-fact-choice-static}
Let $q$ be a rooted tree query satisfying $\cbranch$.
Let $\db$ be a database instance.
Let $\rep^*$ be a frugal repair of $\db$ with respect to $q$ and let $R(\underline{c}, \vec{d}) \in \rep^*$.
Let $\atomAt{R}{u}{q}$ be an atom in $q$.
If $\factformula{R}{c}{\vec{d}}{u}$ is true, then $\pair{c}{u}\in B$.
\end{lemma}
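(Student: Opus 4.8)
\textbf{Proof plan for Lemma~\ref{lemma:frugal-fact-choice-static}.}

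The plan is to proceed by induction on the structure of the atom $\atomAt{R}{u}{q}$, more precisely on the height of the subtree $\rootedAt{u}{q}$, simultaneously with a secondary induction that unfolds the recursive definition of $\factformulapred$ through its backward-production disjuncts. The goal is to show that if $\factformula{R}{c}{\vec{d}}{u}$ holds for the frugally chosen fact $R(\underline{c},\vec{d})\in\rep^*$, then the Iterative Rule of the algorithm in Fig.~\ref{fig:algo} must eventually add $\pair{c}{u}$ to $B$.

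First I would observe that $\factformula{R}{c}{\vec{d}}{u}$ being true means, by the definition of $\factformulapred$, that either the forward-production disjunct holds --- i.e., $\pair{d_i}{u_i}\in B$ for every child $u_i$ of $u$ where $R(\underline{u},u_1,\dots,u_n)$ is the atom $\atomAt{R}{u}{q}$ --- or the backward-production disjunct holds, i.e., $\factformula{R}{c}{\vec{d}}{x}$ is true for some atom $\atomAt{R}{x}{q}\precedes{q}\atomAt{R}{u}{q}$. In the forward case, the key point is that the Iterative Rule adds $\pair{c}{u}$ to $B$ precisely when $\exists\vec{d}':R(\underline{c},\vec{d}')\in\db$ and for \emph{all} $\vec{d}'$ with $R(\underline{c},\vec{d}')\in\db$ the formula $\factformula{R}{c}{\vec{d}'}{u}$ holds. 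Since $R(\underline{c},\vec{d})\in\rep^*\subseteq\db$ the existential part is satisfied, so it remains to verify the universal part: for every key-equal fact $R(\underline{c},\vec{d}')\in\db$, $\factformula{R}{c}{\vec{d}'}{u}$ is true. This is where frugality is used: by Definition~\ref{def:frugal-repair}, $R(\underline{c},\vec{d})$ $\subseteq$-minimizes $\sset{R(\underline{c},\vec{d})}{\db}{q}$ among all facts of the block $R(\underline{c},*)$, and by Lemma~\ref{lemma:comparable} these frugal sets are totally ordered by $\subseteq$; hence $\sset{R(\underline{c},\vec{d})}{\db}{q}\subseteq\sset{R(\underline{c},\vec{d}')}{\db}{q}$ for every key-equal $\vec{d}'$. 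Since $\factformula{R}{c}{\vec{d}}{u}$ is true, $\atomAt{R}{u}{q}\in\sset{R(\underline{c},\vec{d})}{\db}{q}$, and therefore $\atomAt{R}{u}{q}\in\sset{R(\underline{c},\vec{d}')}{\db}{q}$, i.e., $\factformula{R}{c}{\vec{d}'}{u}$ holds, as needed. So the Iterative Rule fires and $\pair{c}{u}$ is added to $B$.

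The backward case is handled by the secondary induction: if $\factformula{R}{c}{\vec{d}}{u}$ is witnessed by $\factformula{R}{c}{\vec{d}}{x}$ for some $\atomAt{R}{x}{q}\precedes{q}\atomAt{R}{u}{q}$, then applying the argument of the previous paragraph to $x$ in place of $u$ --- note $x$ and $u$ share the same relation name $R$, and $\atomAt{R}{x}{q}\mytotalorder{q}\atomAt{R}{u}{q}$ since $\atomAt{R}{x}{q}\precedes{q}\atomAt{R}{u}{q}$ --- shows $\pair{c}{x}\in B$; but then Lemma~\ref{lemma:chain-property}(2) propagates this along $\mytotalorder{q}$ to give $\pair{c}{u}\in B$ directly. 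Actually, since Lemma~\ref{lemma:chain-property} already does the propagation, the cleanest route is: it suffices to show $\pair{c}{x}\in B$ for the $\mytotalorder{q}$-minimal atom $\atomAt{R}{x}{q}$ for which $\factformula{R}{c}{\vec{d}}{x}$ holds, and for that minimal atom the backward disjunct cannot be the only witness (there is no strictly smaller $R$-ancestor supplying it), forcing the forward disjunct to hold, which reduces to the forward case. The main obstacle I anticipate is the bookkeeping of the two intertwined inductions --- structural induction on subtree height for the forward (child) case versus the descent through backward productions --- and making sure the well-foundedness is clean; invoking Lemma~\ref{lemma:chain-property} to absorb the backward direction, rather than re-deriving it, is what keeps the argument short. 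Care is also needed because the forward-case sub-argument implicitly recurses on the children $u_i$: but those memberships $\pair{d_i}{u_i}\in B$ are \emph{given} as part of the hypothesis that the forward disjunct of $\factformula{R}{c}{\vec{d}}{u}$ is true, so no further recursion is actually needed there --- the statement to be proved is purely about the step from $\factformulapred$ to $B$-membership at a single node, and the induction is only over the backward-production chain.
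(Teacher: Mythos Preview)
Your proposal is correct, but it is considerably more elaborate than necessary, and the paper's proof is in fact buried inside your ``forward case'' argument. Observe that your frugality step---$\factformula{R}{c}{\vec{d}}{u}$ true implies $\atomAt{R}{u}{q}\in\sset{R(\underline{c},\vec{d})}{\db}{q}\subseteq\sset{R(\underline{c},\vec{d}')}{\db}{q}$ for every key-equal $\vec{d}'$, hence $\factformula{R}{c}{\vec{d}'}{u}$ holds for all such $\vec{d}'$, hence the Iterative Rule fires---nowhere uses that the \emph{forward} disjunct of $\factformula{R}{c}{\vec{d}}{u}$ was the witness. It uses only the hypothesis that $\factformula{R}{c}{\vec{d}}{u}$ is true. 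So that paragraph already proves the lemma in full generality: no case split on forward versus backward, no induction on subtree height, no secondary induction on the backward chain, and no appeal to Lemma~\ref{lemma:chain-property} is needed. This is precisely the paper's proof, which is three lines long. All of your additional machinery (the two inductions, the backward case, the reduction to a $\mytotalorder{q}$-minimal atom) is sound but redundant.
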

\begin{proof}
Let $R(\underline{c}, \vec{b})$ be an arbitrary fact in the block $R(\underline{c}, *)$ in $\db$.
By construction of a frugal repair, we have that $\sset{R(\underline{c},\vec{d})}{\db}{q} \subseteq \sset{R(\underline{c}, \vec{b})}{\db}{q}$.
Since $R(\underline{c}, \vec{d}) \in \rep^*$ and $\factformula{R}{c}{\vec{d}}{u}$ is true, we have $\atomAt{R}{u}{q} \in \sset{R(\underline{c},\vec{d})}{\db}{q}$. 
Thus, $\atomAt{R}{u}{q}\in\sset{R(\underline{c},\vec{b})}{\db}{q}$ and 
$\factformula{R}{c}{\vec{b}}{u}$ is true.
Hence $\pair{c}{u} \in B$.
\end{proof}

Lemma~\ref{lemma:frugal-no-tree} shows a desirable property of frugal repairs. 

\begin{lemma}
\label{lemma:frugal-no-tree}
Let $q$ be a query in $\rtbcq$ satisfying $\cbranch$. Let $\db$ be a database instance.
Let $\rep^*$ be a frugal repair of $\db$ with respect to $q$.
If there is a rooted tree set $\tau$ in $\rep^*$ starting in $c$ such that $\tau \in \streecfg{q}{y}$, then $\pair{c}{y} \in B$.
\end{lemma}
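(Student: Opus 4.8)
The plan is to prove the statement by induction on the structure of the derivation $S_y \derive_q \tau$, or equivalently by induction on the number of vertices of the rooted tree set $\tau$. The key point to exploit is that the last fact consumed by the grammar at the root must be a fact $R(\underline{c},\vec d)$ that actually lies in $\rep^*$ (since $\tau$ is a rooted tree set \emph{in} $\rep^*$ starting in $c$), and the grammar reached the forward rule for $S_y$ only after possibly a chain of backward productions $S_y \rightarrow_q S_{x_1} \rightarrow_q \dots \rightarrow_q S_{x_k}$. So I would first argue the base case: if $\tau$ is a single-vertex tree, then $y$ is a leaf variable (or $\tau = \bot$, or $\tau = c'$ matching a constant), the label condition checked in the Initialization Step of the algorithm in Fig.~\ref{fig:algo} is exactly witnessed by the presence of the corresponding fact in $\rep^* \subseteq \db$, hence $\pair{c}{y} \in B$.

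For the inductive step, suppose $S_y \derive_q \tau$ with $\tau$ having more than one vertex. The derivation begins with zero or more backward productions, say $S_y \rightarrow_q S_{x_1} \rightarrow_q \cdots \rightarrow_q S_{x_k}$ with $\atomAt{R}{x_k}{q} \precedes{q} \cdots \precedes{q} \atomAt{R}{x_1}{q} \precedes{q} \atomAt{R}{y}{q}$ (taking $x_0 = y$ when $k=0$), followed by a forward production $S_{x_k} \rightarrow_q R(S_{z_1},\dots,S_{z_n})$, where $\atomAt{R}{x_k}{q} = R(\underline{x_k}, z_1,\dots,z_n)$. Then $\tau = R(\tau_1,\dots,\tau_n)$ where $S_{z_i} \derive_q \tau_i$ for each $i$. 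Since $\tau$ is a rooted tree set in $\rep^*$ starting in $c$, there is a fact $R(\underline c,\vec d)\in\rep^*$ with $\vec d = \langle d_1,\dots,d_n\rangle$, and each $\tau_i$ is a rooted tree set in $\rep^*$ starting in $d_i$. By the induction hypothesis applied to each subtree $\tau_i$ (which is strictly smaller and is a rooted tree set in the \emph{same} frugal repair $\rep^*$ starting in $d_i$), we obtain $\pair{d_i}{z_i} \in B$ for all $i$. Therefore the forward-production disjunct of $\factformula{R}{c}{\vec d}{x_k}$ evaluates to true, so $\factformula{R}{c}{\vec d}{x_k}$ is true. By the chain of backward productions and the definition of $\factformulapred$ (the backward disjunct takes an $\atomAt{R}{x}{q}$ ranging over all proper ancestors), $\factformula{R}{c}{\vec d}{y}$ is true as well; one can see this by a short internal induction on $k$, or simply observe that $\atomAt{R}{x_k}{q} \mytotalorder{q} \atomAt{R}{y}{q}$ since $\atomAt{R}{x_k}{q} \precedes{q} \atomAt{R}{y}{q}$, and invoke item~1 of Lemma~\ref{lemma:chain-property}. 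Finally, $R(\underline c,\vec d)\in\rep^*$ and $\factformula{R}{c}{\vec d}{y}$ true gives $\pair{c}{y}\in B$ by Lemma~\ref{lemma:frugal-fact-choice-static}.

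The one subtlety to be careful about — and the place I expect most of the real work — is making sure the induction is well-founded and that the subtrees $\tau_i$ are genuinely rooted tree sets \emph{in $\rep^*$}, not merely in $\db$; this follows because a rooted tree set in a consistent instance is uniquely determined by its tree and starting constant (as noted in Section~\ref{sec:prelim}), so the restriction of $\tau$ to the subtree at child $i$ is exactly the rooted tree set in $\rep^*$ determined by $\tau_i$ and $d_i$. A second minor point is handling the $k=0$ case uniformly (no backward productions), which is subsumed in the argument above by letting the backward chain be empty. Note that $\cbranch$ is used only indirectly here — it is what makes Lemma~\ref{lemma:chain-property} and Lemma~\ref{lemma:frugal-fact-choice-static} (and the very notion of frugal repair) available — so beyond citing those results no further appeal to $\cbranch$ is needed in this proof.
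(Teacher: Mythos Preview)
Your proposal is correct and follows essentially the same approach as the paper: induction over the derivation witnessing $\tau\in\streecfg{q}{y}$, using Lemma~\ref{lemma:frugal-fact-choice-static} to handle the forward step and Lemma~\ref{lemma:chain-property} (or directly the backward disjunct in the definition of $\factformulapred$) to handle backward productions. The only cosmetic difference is that the paper builds an explicit ``tree trace'' with one node per production rule and inducts on that structure, treating each backward production as a separate inductive step via Lemma~\ref{lemma:chain-property}(2), whereas you induct on the size of $\tau$ and batch the leading backward productions before invoking Lemma~\ref{lemma:frugal-fact-choice-static}; both organizations yield the same argument.
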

\begin{proof}
Let $\tau$ be a rooted tree set starting in $c$ in $\rep^*$ such that $\tau \in \streecfg{q}{y}$. We recursively define a tree trace $\mathcal{T}$ on nodes of the form $(c, x)$, where $c\in\adom{\rep^*}$ and $x$ is a variable in $q$, as follows: 
\begin{itemize}
\item the root node of $\mathcal{T}$ is $(c, y, \tau)$; and
\item whenever $(a, u, \sigma)$ is a node in $\mathcal{T}$ with a rooted tree set $\sigma$ starting in $a$ in $\rep^*$ for an atom $R(\underline{u}, u_1,u_2,\dots,u_n)$ in $q$ and fact $R(\underline{a}, b_1,b_2,\dots,b_n)$ in $\rep^*$, 

\begin{itemize}
\item[(i)] if $\streecfg{q}{y}$ invokes a forward production rule 
\begin{align*}
S_u &\rightarrow_q R(S_{u_1}, S_{u_2}, \dots, S_{u_n}),
\end{align*} 
then the node $(a, u, \sigma)$ has $n$ outgoing $R$-edges to its children $(b_1, {u_1}, \tau_1)$, $(b_2, {u_2}, \tau_2)$, $\dots$, $(b_n, {u_n}, \tau_n)$; or
\item[(ii)] if $\streecfg{q}{y}$ invokes a backward production rule $S_u \rightarrow_q S_v$, 
then the node $(a, u, \sigma)$ has a single outgoing $\emptyword$-edge to its only child~$(a, v, \sigma)$.
\end{itemize}
\end{itemize} 

The tree trace $\mathcal{T}$ succinctly records the rule productions that witness $\tau\in\streecfg{q}{y}$ in $\rep^*$.
We use a structural induction to show that for every node $(a, u, \sigma)$ in $\mathcal{T}$, $\pair{a}{u} \in B$.

\begin{itemize}
\item Basis. The claim holds for every leaf node $(a, u, \sigma)$ in $\mathcal{T}$, since if $\sigma=\bot$, then $\pair{a}{u}\in B$, or otherwise $\sigma=L$ starting in $a$ in $\rep^*$ for some unary relation name $L$, and we have $L(a)$ is in $\db$.

\item Inductive step. Let $(a, u, \sigma)$ be a node in $\mathcal{T}$. 
Assume that for any child node $(b, w, \sigma')$ of $(a, u)$ in $\mathcal{T}$ (possibly $b = a$), $\pair{b}{w} \in B.$
It suffices to argue that for the atom $\atomAt{R}{u}{q} = R(\underline{u}, u_1, u_2, \dots, u_n)$ in $q$, $\pair{a}{u} \in B$.

(i) Case that $(a, u, \sigma)$ has child nodes $(b_1,{u_1},\tau_1)$, $(b_2, {u_2},\tau_2)$, $\dots$, $(b_n, {u_n},\tau_n)$ in $\mathcal{T}$ with $\sigma=R(\tau_1,\tau_2,\dots,\tau_n)$.
By the inductive hypothesis
$\pair{b_i}{u_i} \in B$ for every $1 \leq i \leq n$,
which yields that $\factformula{R}{a}{\vec{b}}{u}$ is true, where $\vec{b} = \langle b_1, b_2, \dots, b_n \rangle$.
Then by Lemma~\ref{lemma:frugal-fact-choice-static}, $\pair{a}{u} \in B$.

(ii) Case that $(a, u, \sigma)$ has a child node $(a, v, \sigma)$ in $\mathcal{T}$ connected with an $\emptyword$-edge.
Then there is some atom $\atomAt{R}{v}{q}$ with $\atomAt{R}{v}{q} \precedes{q} \atomAt{R}{u}{q}$. By the inductive hypothesis on the child $(a, v, \sigma)$, $\pair{a}{v} \in B$. Hence $\pair{a}{u} \in B$ by Lemma~\ref{lemma:chain-property}.
\end{itemize}
This completes the proof.
\end{proof}

The proof of Lemma~\ref{lemma:correctness} can now be given. 

\begin{proof}[Proof of Lemma~\ref{lemma:correctness}]
\framebox{\ref{it:semantics}$\implies$\ref{it:syntax}}
Let $\rep^*$ be a frugal repair of $\db$ with respect to $q$. Then there is a rooted tree set $\tau$ starting in $c$ in $\rep^*$ with $\tau \in \streecfg{q}{y}$. The claim follows by Lemma~\ref{lemma:frugal-no-tree}.

\framebox{\ref{it:syntax}$\implies$\ref{it:semantics}}
Assume that $\pair{c}{y} \in B$. We use induction on $k$ to show that if $\pair{c}{y}$ is added to $B$ at the $k$-th iteration, then for any repair $\rep$ of $\db$, there exists a rooted tree set $\tau$ starting in $c$ in $\tau$ with $\tau \in \streecfg{q}{y}$. 

\begin{itemize}
\item Basis $k = 0$. Then every $\pair{c}{u}$ is added to $B$ for every leaf variable $u$ of $q$ such that either the label of $u$ in $q$ is $\bot$, or a unary relation name $L$. If the label of $u$ is $\bot$, the empty rooted tree set $\tau=\emptyset$ starting in $c$ with string representation $\bot$ is accepted by $\streecfg{q}{u}$, Otherwise, we must have $L(c) \in \db$, and the rooted tree set $\tau=L$ starting in $c$ is accepted by $\streecfg{q}{u}$. 

\item Inductive step.
Assume that $\pair{c}{y}$ is added to $B$ in the $k$-th iteration, and for any tuple $\pair{b}{x}$ added to $B$ prior to the addition of $\pair{c}{y}$, any repair of $\db$ contains a rooted tree set $\tau\in\streecfg{q}{x}$ starting in $b$. 
Let $\rep$ be any repair of $\db$. 
It suffices to construct a rooted tree set $\tau$ in $\rep$ starting in $c$ such that $\tau \in \streecfg{q}{y}$. 
Let $\atomAt{R}{y}{q} = R(\underline{y}, y_1, y_2,\dots,y_n)$.
Let $R(\underline{c}, d_1,d_2,\dots,d_n) \in \rep$ and let $\vec{d} = \langle d_1, d_2, \dots, d_n \rangle$.
Since $\pair{c}{y}\in B$, $\factformula{R}{c}{\vec{d}}{y}$ is true. Consider two cases.

\begin{itemize}
\item Case that $\pair{d_i}{y_i}\in B$ for every $1 \leq i \leq n$. Since each $\pair{d_i}{y_i}$ was added to $B$ in an iteration $<k$, by the inductive hypothesis, there is a rooted tree set $\tau_i$ starting in $d_i$ in $\rep$ with $\tau_i \in \streecfg{q}{y_i}$, i.e., $S_{y_i} \derive_q \tau_i$. 
Consider the rooted tree set $\tau = \{R(\underline{c}, \vec{d})\} \cup \bigcup_{1 \leq i \leq n} \tau_i$, starting in $c$ in $\rep$ with a string representation $\tau = R(\tau_1, \tau_2, \dots, \tau_n)$. From
\begin{align*}
S_y &\rightarrow_q R(S_{y_1}, S_{y_2}, \dots, S_{y_n}) \\
&\derive_q R(\tau_1, \tau_2,\dots,\tau_n) = \tau,
\end{align*} 
we conclude that $\tau \in \streecfg{q}{y}$.

\item Case that $\factformula{R}{c}{\vec{d}}{x}$ is true for some $\atomAt{R}{x}{q} \precedes{q} \atomAt{R}{y}{q}$.
Without loss of generality, we assume that $x$ is the smallest with respect to $\precedes{q}$ for the atom $R(\underline{x}, x_1,x_2,\dots,x_n)$.
Hence we must have $\pair{d_i}{x_i}\in B$ for every $1 \leq i \leq n$, and by the previous case, there exists a rooted tree set $\tau_i$ starting in $d_i$ such that $\tau_i \in \streecfg{q}{x_i}$, i.e., $S_{x_i} \derive_q \tau_i$.
Since $\atomAt{R}{x}{q} \precedes{q} \atomAt{R}{y}{q}$, we have 
\begin{align*}
S_y &\rightarrow_q S_x \\
&\rightarrow_q R(S_{x_1}, S_{x_2}, \dots, S_{x_n}) \\
&\derive_q R(\tau_1, \tau_2,\dots,\tau_n) = \tau,
\end{align*}
and therefore $\tau\in\streecfg{q}{y}$.
\end{itemize}
\end{itemize}
The proof is hence complete.  
\end{proof}

\subsection{Expressibility in \LFPL~and \FO}
\label{sec:membership}

\begin{lemma}
\label{prop:certaintrace-ptime}
For every query $q$ in $\rtbcq$ that satisfies $\cbranch$, $\certaintrace{q}$ is expressible in \LFPL\ (and thus is in~\PTIME).
\end{lemma}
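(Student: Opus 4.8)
The plan is to extract, from the fixpoint algorithm in Fig.~\ref{fig:algo}, a \LFPL\ formula that defines the set $B$, and then observe that $\certaintrace{q}$ is a simple first-order query over $B$. Concretely, I would introduce a second-order (predicate) variable $\mathtt{B}(z,v)$ ranging over pairs consisting of a domain element $z$ and one of the finitely many "vertices" $v$ of $q$ (variables, constants, and leaf labels of $q$ — a fixed finite set, since $q$ is fixed). The key point is that both the initialization step and the iterative rule of Fig.~\ref{fig:algo} are expressible by a single first-order formula $\varphi(\mathtt{B}, z, v)$ that is \emph{positive} in $\mathtt{B}$: the initialization disjuncts ($v=z$ a constant of $q$, or the label of $v$ is $\bot$, or the label is $L$ with $L(\underline{z})\in\db$) use only the input database and equality; the iterative disjunct, for each atom $R(\underline{v},v_1,\dots,v_n)$ of $q$, is the formula
$$
\exists \vec{d}\, R(\underline{z},\vec{d}) \;\wedge\; \forall \vec{d}\,\bigl(R(\underline{z},\vec{d}) \rightarrow \factformulapred(R(\underline{z},\vec{d}),v)\bigr),
$$
and $\factformulapred(R(\underline{z},\vec{d}),v)$ unfolds — because the backward-production recursion in Definition~\ref{defn:cfg} is over the fixed finite set of $R$-atoms of $q$, hence terminates after a bounded number of steps — into a finite positive Boolean combination of atoms $\mathtt{B}(d_i,v_i)$ together with database atoms $R(\underline{z},\cdot)$. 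Since $\varphi$ mentions $\mathtt{B}$ only positively, the operator it defines on the finite lattice of subsets of $\adom{\db}\times V_q$ is monotone, so its least fixpoint exists and is exactly the set $B$ computed by the algorithm (the iterative rule only ever adds pairs). Writing $[\mathbf{lfp}_{\mathtt{B},z,v}\,\varphi](z,v)$ for this fixpoint, we then express $\certaintrace{q}$ by
$$
\exists c\; [\mathbf{lfp}_{\mathtt{B},z,v}\,\varphi](c, r),
$$
where $r$ is the root variable of $q$: this asserts exactly that some constant $c$ has $\pair{c}{r}\in B$, which by Lemma~\ref{lemma:correctness} is equivalent to the existence, for every repair $\rep$, of a rooted tree set $\tau$ starting in $c$ with $\tau\in\streecfg{q}{r}=\treecfg{q}$ — i.e., $\db$ is a "yes"-instance of $\certaintrace{q}$.

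For the technical write-up I would proceed in this order: (1) fix the finite set $V_q$ of "labels" (query variables, constants of $q$, leaf labels) and note it has size $O(|q|)$; (2) spell out $\factformulapred$ as a first-order formula by induction on the finite $\precedes{q}$-chain of $R$-atoms, checking it is quantifier-free except for the $\exists\vec d/\forall\vec d$ already displayed, of bounded quantifier rank, and positive in $\mathtt{B}$; (3) assemble $\varphi(\mathtt{B},z,v)$ as the disjunction over the initialization clauses and over the atoms of $q$, and verify positivity in $\mathtt{B}$; (4) invoke monotonicity and the fact that the algorithm's operational semantics (start from $\emptyset$, iterate the rule to a fixpoint) computes precisely the least fixpoint of this operator — a routine inductive argument matching "added at iteration $k$" with "in the $k$-th stage of the fixpoint iteration"; (5) conclude by combining with Lemma~\ref{lemma:correctness} and the definition of $\certaintrace{q}$, and recall that \LFPL\ $\subseteq$ \PTIME\ (e.g.\ \cite[p.~181]{DBLP:books/sp/Libkin04}).

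The main obstacle, such as it is, is bookkeeping rather than a conceptual difficulty: one must make sure that the backward-production operator $\factformulapred$ really does unfold to a \emph{finite} formula. This is where it matters that $q$ is fixed and that the $R$-atoms of $q$ linearly ordered by $\precedes{q}$ within each ancestor chain form a finite DAG, so the recursion $\factformulapred(\cdot,y)$ referencing $\factformulapred(\cdot,x)$ for $\atomAt{R}{x}{q}\precedes{q}\atomAt{R}{y}{q}$ bottoms out; unfolding it top-down yields a formula of size polynomial in $|q|$ and of bounded quantifier rank, with no occurrence of $\mathtt{B}$ under a negation. The other point requiring a line of care is that the $\forall\vec d$ quantifier in the iterative rule is inside the scope of the fixpoint but not applied to $\mathtt{B}$ negatively — it quantifies over facts of the input database, and $\mathtt{B}$ occurs positively in its body — so positivity of $\varphi$ in $\mathtt{B}$ is preserved and the $\mathbf{lfp}$ operator is legitimately applicable. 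Note also that $\cbranch$ is needed only to invoke Lemma~\ref{lemma:correctness}; the construction of the \LFPL\ formula itself works for every $q\in\rtbcq$.
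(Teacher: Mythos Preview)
Your proposal is correct and follows essentially the same approach as the paper: both express the fixpoint algorithm of Fig.~\ref{fig:algo} as an \LFPL\ formula over a binary predicate $B$, invoke Lemma~\ref{lemma:correctness} for correctness, and check $\exists c:\pair{c}{r}\in B$. Your write-up is more explicit about the bookkeeping (finite unfolding of $\factformulapred$ along $\precedes{q}$, positivity in~$\mathtt{B}$, encoding the finite second coordinate) than the paper's terse presentation, and your closing remark that $\cbranch$ is used only through Lemma~\ref{lemma:correctness} is accurate.
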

\begin{proof}
Let $r$ be the root variable of $q$. 
Our algorithm first computes the set $B$, and then checks 
$\exists c: \pair{c}{r} \in B.$
The algorithm is correct by Lemma~\ref{lemma:correctness}.

For a rooted tree query~$q$, define the following formula in \LFPL~\cite{DBLP:books/sp/Libkin04}:
\begin{equation}\label{eq:lfp}
\psi_{q}(s,t)\defeq\left[\mathbf{lfp}_{B,x,z}\varphi_{q}(B,x,z)\right](s,t),
\end{equation}
where we have
\begin{equation*}
\varphi_{q}(B,x,z):=
\left(
\begin{array}{ll}
& \alpha(x) \land \formula{z = \atomAt{R}{u}{q}} \\[1ex]
\land & \exists y R(\underline{x}, \vec{y}) \\[1ex]
\land & \forall \vec{y} \formula{R(\underline{x}, \vec{y}) \rightarrow \factformula{R}{x}{\vec{y}}{u}}
\end{array}
\right)
\end{equation*}
and the formula $\factformula{R}{x}{\vec{y}}{u}$ is defined in Fig.~\ref{fig:algo}.
The initialization step of $B$ in Fig.~\ref{fig:algo} is expressible in \FO.
Herein, $\alpha(x)$ denotes a first-order query that computes the active domain.
That is, for every database instance $\db$ and constant $c$,
$\db\models\alpha(c)$ if and only if $c\in\adom{\db}$.
It is easy to verify that the LFP formula in~(\ref{eq:lfp}) computes the set $B$ in Fig.~\ref{fig:algo}.
\end{proof}

We now show that if $q$ satisfies $\cone$, we can safely remove the recursion from the algorithm in Fig.~\ref{fig:algo}.

\begin{lemma}
\label{lemma:fact-forward-suffices}
Let $q$ be a rooted tree query satisfying $\cone$. 
Let $\db$ be a database. 
Let $\atomAt{R}{y}{q} = R(\underline{y}, y_1, y_2, \dots, y_n)$ be an atom in $q$ and let $R(\underline{c}, \vec{d}) = R(\underline{c}, d_1, d_2,\dots, d_n)$ be a fact in $\db$.
Then $\factformula{R}{c}{\vec{d}}{y}$ is true if and only if for every atom $\atomAt{T_i}{y_i}{q}$ in $q$, $\pair{d_i}{y_i} \in B$.
\end{lemma}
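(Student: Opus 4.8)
\textbf{Proof proposal for Lemma~\ref{lemma:fact-forward-suffices}.}

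The plan is to prove the two directions separately, with essentially all the work going into the ``only if'' direction. The ``if'' direction is immediate from the definition of $\factformulapred$ in Fig.~\ref{fig:algo}: if $\pair{d_i}{y_i}\in B$ for every child $y_i$ of $y$, then the forward-production disjunct $\bigwedge_{1\le i\le n}\pair{d_i}{y_i}\in B$ holds, so $\factformula{R}{c}{\vec{d}}{y}$ is true. So the content is the converse: assuming $\factformula{R}{c}{\vec{d}}{y}$ is true, we must rule out the possibility that the truth of $\factformulapred$ was obtained only through the backward-production disjunct, i.e.\ through some ancestor atom $\atomAt{R}{x}{q}\precedes{q}\atomAt{R}{y}{q}$ with $\factformula{R}{c}{\vec{d}}{x}$ true, without the forward witness at $y$ ever holding.

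First I would unwind the recursion in $\factformulapred$: since $\factformula{R}{c}{\vec{d}}{y}$ is true, there is some atom $\atomAt{R}{x}{q}$ with $\atomAt{R}{x}{q}\mytotalorder{q}\atomAt{R}{y}{q}$ (in fact $\atomAt{R}{x}{q}\precedes{q}\atomAt{R}{y}{q}$ or $x=y$, following the chain of backward disjuncts, which only goes to $\precedes{q}$-ancestors) such that the forward disjunct holds at $x$, namely $\pair{d_i}{x_i}\in B$ for each child $x_i$ of $x$, where $\atomAt{R}{x}{q}=R(\underline{x},x_1,\dots,x_n)$. If $x=y$ we are done, so assume $\atomAt{R}{x}{q}\precedes{q}\atomAt{R}{y}{q}$ strictly. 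Now I would invoke condition $\cprefix$ (which holds since $q$ satisfies $\cone=\cprefix\land\cbranch$): for this ancestor pair we have $q\leqhomo{r}{r}\rewind{q}{y}{x}{R}$. The rewound query $\rewind{q}{y}{x}{R}$ replaces $\rootedAt{y}{q}$ by a fresh copy of $\rootedAt{x}{q}$; the root-preserving homomorphism $h$ from $q$ into it must, I expect, map the children $y_1,\dots,y_n$ of $y$ into the fresh copy of $\rootedAt{x}{q}$ in a way that lands $y_i$ on (the copy of) $x_i$ — this is where order-preservation of tree homomorphisms and the position of $\atomAt{R}{y}{q}$ as a $\precedes{q}$-descendant of $\atomAt{R}{x}{q}$ get used. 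More precisely, I would argue that $\rootedAt{y}{q}\leqhomo{y}{x}\rootedAt{x}{q}$, so that each subtree $\rootedAt{y_i}{q}$ maps homomorphically into $\rootedAt{x_i}{q}$ with the root of $\rootedAt{y_i}{q}$ going to the root of $\rootedAt{x_i}{q}$; equivalently $\rootedAt{y_i}{q}\leqhomo{y_i}{x_i}\rootedAt{x_i}{q}$, which says exactly $\atomAt{T_i}{y_i}{q}\mytotalorder{q}\atomAt{T_i}{x_i}{q}$ for the relation name $T_i$ labelling $y_i$ (and $x_i$). Then from $\pair{d_i}{x_i}\in B$ and Lemma~\ref{lemma:chain-property}(2) applied to the pair $\atomAt{T_i}{x_i}{q}\mytotalorder{q}$... wait — the direction needs care: $\mytotalorder{q}$ propagates membership \emph{upward} along $\mytotalorder{q}$, and here $\atomAt{T_i}{y_i}{q}\mytotalorder{q}\atomAt{T_i}{x_i}{q}$ means $y_i$ is below $x_i$, so we would need $\pair{d_i}{y_i}\in B$ to follow from $\pair{d_i}{x_i}\in B$, i.e.\ propagation from the larger to the smaller. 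I would therefore re-examine which inequality Lemma~\ref{lemma:chain-property} gives and, if needed, instead derive $\rootedAt{x_i}{q}\leqhomo{x_i}{y_i}\rootedAt{y_i}{q}$ from the structure of the root-preserving homomorphism witnessing $\cprefix$ — concretely, the homomorphism $q\leqhomo{r}{r}\rewind{q}{y}{x}{R}$ restricted below $x$ should pull the copy of $\rootedAt{x}{q}$ (which sits where $\rootedAt{y}{q}$ was) back toward the original, yielding $\atomAt{T_i}{x_i}{q}\mytotalorder{q}\atomAt{T_i}{y_i}{q}$ and hence $\pair{d_i}{y_i}\in B$ by Lemma~\ref{lemma:chain-property}(2).

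The main obstacle I anticipate is exactly this bookkeeping of homomorphism directions: carefully extracting, from the single root-preserving homomorphism asserted by $\cprefix$, the family of per-child subtree homomorphisms with the correct orientation relative to the preorder $\mytotalorder{q}$, so that Lemma~\ref{lemma:chain-property} can be applied. A clean way to organize this is to first prove an auxiliary claim: if $q$ satisfies $\cone$ and $\atomAt{R}{x}{q}\precedes{q}\atomAt{R}{y}{q}$, then for each $i$, $\atomAt{T_i}{x_i}{q}\mypreceq{q}\atomAt{T_i}{y_i}{q}$ in the relevant sense, i.e.\ $\pair{d}{x_i}\in B\Rightarrow\pair{d}{y_i}\in B$ for all constants $d$. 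Granting this claim, the lemma follows: $\factformula{R}{c}{\vec{d}}{y}$ true $\Rightarrow$ forward disjunct holds at some $\precedes{q}$-ancestor-or-equal $x$ of $y$ $\Rightarrow$ $\pair{d_i}{x_i}\in B$ for all $i$ $\Rightarrow$ $\pair{d_i}{y_i}\in B$ for all $i$, which is the desired conclusion. I would also double-check the base case where the chain of backward productions terminates immediately at $y$ itself (giving the conclusion directly) and confirm that $\cbranch$ is not separately needed here beyond what is already packaged into Lemma~\ref{lemma:chain-property} via Proposition~\ref{prop:totalorder}.
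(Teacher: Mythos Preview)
Your approach is essentially the paper's: unwind the backward disjuncts to a $\precedes{q}$-minimal ancestor $\atomAt{R}{x}{q}$ where the forward disjunct holds, use $\cprefix$ to get a root-preserving homomorphism $q\leqhomo{r}{r}\rewind{q}{y}{x}{R}$, extract from it the per-child homomorphisms $\rootedAt{y_i}{q}\leqhomo{y_i}{x_i}\rootedAt{x_i}{q}$, and then propagate $\pair{d_i}{x_i}\in B$ to $\pair{d_i}{y_i}\in B$. The paper does exactly this, citing Lemma~\ref{lemma:formula-via-homo}(2) for the last step (which is the homomorphism-only special case of Lemma~\ref{lemma:chain-property}(2)).

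Your only slip is the direction bookkeeping, and it is a misreading of Definition~\ref{def:totalorder}, not a real obstacle. By that definition, $\rootedAt{y_i}{q}\leqhomo{y_i}{x_i}\rootedAt{x_i}{q}$ already says $\atomAt{T_i}{x_i}{q}\mytotalorder{q}\atomAt{T_i}{y_i}{q}$ (not the reverse), which is precisely the orientation Lemma~\ref{lemma:chain-property}(2) needs to go from $\pair{d_i}{x_i}\in B$ to $\pair{d_i}{y_i}\in B$. So your first derivation was correct and the detour you propose---trying instead to establish $\rootedAt{x_i}{q}\leqhomo{x_i}{y_i}\rootedAt{y_i}{q}$ from the $\cprefix$ homomorphism---is both unnecessary and not guaranteed to hold. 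Drop that detour, keep your auxiliary claim as stated, and the proof is complete.
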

\begin{proof}
\framebox{$\impliedby$}
Immediate by definition of $\factformula{R}{c}{\vec{d}}{y}$. 

\framebox{$\implies$}
Assume that $\factformula{R}{c}{\vec{d}}{y}$ is true. 
Let $\atomAt{R}{x}{q}$ be a minimal atom with respect to $\precedes{q}$ such that $\atomAt{R}{x}{q} \precedes{q} \atomAt{R}{y}{q}$ and $\factformula{R}{c}{\vec{d}}{x}$ is true.
If such an atom $\atomAt{R}{x}{q}$ does not exist, then the claim follows by definition of $\factformula{R}{c}{\vec{d}}{y}$. Otherwise, since $\atomAt{R}{x}{q}$ is minimal with respect to $\precedes{q}$, 
for every atom $\atomAt{T_i}{x_i}{q}$ in $q$, $\pair{d_i}{x_i} \in B$, where $R(\underline{x}, \vec{x}) = R(\underline{x}, x_1, x_2, \dots, x_n)$.

It suffices to show that for every atom $\atomAt{T_i}{y_i}{q}$ in $q$, $\pair{d_i}{y_i} \in B$.
Let $\atomAt{T_i}{y_i}{q}$ be an atom in $q$.
From $\cone$ and $\atomAt{R}{x}{q} \precedes{q} \atomAt{R}{y}{q}$, $\rootedAt{y_i}{q} \leqhomo{y_i}{x_i} \rootedAt{x_i}{q}$. Thus there is some atom $\atomAt{T_i}{x_i}{q}$ in $q$ with $\atomAt{T_i}{x_i}{q} \precedes{q} \atomAt{T_i}{y_i}{q}$.
Since $\pair{d_i}{x_i} \in B$, by Lemma~\ref{lemma:formula-via-homo}, $\pair{d_i}{y_i} \in B$. 
\end{proof}

\begin{lemma}
\label{coro:cone-tree-trace-fo}
For every $q$ in $\rtbcq$ that satisfies $\cone$, $\certaintrace{q}$ is in \FO\ .
\end{lemma}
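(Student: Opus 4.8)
The plan is to show that when $q$ satisfies $\cone$, the set $B$ computed by the algorithm of Fig.~\ref{fig:algo} — which by Lemma~\ref{lemma:correctness} correctly characterizes, for each pair $\pair{c}{y}$, whether every repair contains a rooted tree set starting in $c$ accepted by $\streecfg{q}{y}$ — can in fact be computed by a \emph{non-recursive} first-order formula. The key observation is Lemma~\ref{lemma:fact-forward-suffices}: under $\cone$, the predicate $\factformula{R}{c}{\vec{d}}{y}$ collapses to the conjunction $\bigwedge_{i}\bigl(\pair{d_i}{y_i}\in B\bigr)$, i.e.\ the backward productions become redundant once we are allowed to reference membership in $B$ for the \emph{children} $y_i$. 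Since $q$ is a finite tree of fixed depth, this means the dependency of $\pair{c}{y}\in B$ on other pairs is strictly along the (finite, acyclic) child relation of the query tree, so the recursion unfolds to a bounded depth and hence can be inlined.

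Concretely, I would proceed by induction on the height of the subtree $\rootedAt{y}{q}$ to define, for each variable $y$ of $q$, a first-order formula $\beta_y(x)$ such that for every database $\db$ and constant $c$, $\db\models\beta_y(c)$ iff $\pair{c}{y}\in B$. For a leaf variable $y$ with label $\bot$ we set $\beta_y(x):=\alpha(x)$ (the active-domain formula from the proof of Lemma~\ref{prop:certaintrace-ptime}); for a leaf with unary label $L$ we set $\beta_y(x):=L(\underline{x})$; for a constant $c$ we set $\beta_c(x):=(x=c)$. For an internal variable $y$ with $\atomAt{R}{y}{q}=R(\underline{y},y_1,\dots,y_n)$, the initialization step does not apply, and by Lemma~\ref{lemma:fact-forward-suffices} the iterative rule adds $\pair{c}{y}$ to $B$ exactly when
$$
\exists\vec{d}\,R(\underline{x},\vec{d})\;\land\;\forall\vec{d}\Bigl(R(\underline{x},\vec{d})\rightarrow\bigwedge_{1\le i\le n}\beta_{y_i}(d_i)\Bigr),
$$
so I take this (with $x$ in place of $c$) as $\beta_y(x)$. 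Each $\beta_{y_i}$ is first-order by the induction hypothesis, and the construction terminates because the query tree is finite. Finally, $\certaintrace{q}$ is decided by the sentence $\exists x\,\beta_r(x)$ where $r$ is the root variable of $q$; correctness is immediate from Lemma~\ref{lemma:correctness} applied at $y=r$, together with the definition of $\certaintrace{q}$.

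To make the induction go through cleanly one has to be slightly careful about one point: the iterative rule in Fig.~\ref{fig:algo} is written as a fixpoint, so a priori $\pair{d_i}{y_i}\in B$ could have been established using a \emph{later} iteration than the one adding $\pair{c}{y}$. The content of Lemma~\ref{lemma:fact-forward-suffices} is precisely that this circularity does not occur under $\cone$ — whenever $\factformula{R}{c}{\vec{d}}{y}$ holds at all, it already holds via the forward disjunct referring to the children. I would therefore state explicitly that $B$ equals $\{\pair{c}{y}\mid \db\models\beta_y(c)\}$, proving both inclusions: $\supseteq$ by upward induction along the child relation of $q$ using Lemma~\ref{lemma:fact-forward-suffices} in the "$\Leftarrow$" direction, and $\subseteq$ by induction on the iteration count $k$ at which a pair enters $B$, using the "$\Rightarrow$" direction of Lemma~\ref{lemma:fact-forward-suffices} to rewrite the witnessing $\factformula{R}{c}{\vec{d}}{y}$ purely in terms of children pairs, which by the IH (they entered at iterations $\le k$, hence are captured by the $\beta_{y_i}$) are first-order definable.

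The main obstacle is really the bookkeeping around this circularity rather than any deep idea: one must confirm that Lemma~\ref{lemma:fact-forward-suffices}'s conclusion "$\pair{d_i}{y_i}\in B$ for every atom $\atomAt{T_i}{y_i}{q}$" lets us substitute the already-constructed $\beta_{y_i}$ without introducing a dependence of $\beta_y$ on $\beta_y$ itself or on any descendant-free cycle — which holds because $y_i$ is strictly a child of $y$ in the (finite) query tree, so no such cycle exists. Everything else is a routine finite induction over the structure of $q$, and the fact that $\alpha$, the atomic formulas $R(\underline{x},\vec{d})$, and finite conjunctions/disjunctions are all first-order.
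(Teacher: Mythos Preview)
Your proposal is correct and follows essentially the same approach as the paper: use Lemma~\ref{lemma:fact-forward-suffices} to drop the backward-production disjunct from $\factformulapred$, then observe that the resulting dependency of $\pair{c}{y}\in B$ runs strictly along the child relation of the (finite) query tree, so the fixpoint unfolds into a first-order formula of bounded depth. The paper's proof says exactly this in two terse sentences, whereas you spell out the induction on the height of $\rootedAt{y}{q}$ and the explicit construction of the formulas $\beta_y$, which is a welcome elaboration but not a different argument.
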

\begin{proof}
Consider the following variant of the algorithm in Fig.~\ref{fig:algo}, where we simply have 
$$\factformula{R}{c}{\vec{d}}{y} = \bigwedge_{1 \leq i \leq n} \pair{d_i}{y_i} \in B.$$
The variant algorithm is correct for $\certaintrace{q}$ by Lemma~\ref{lemma:fact-forward-suffices}. 
Since the size of the query $q$ is fixed, for every constant $c$ and variable $y$ in $q$, deciding whether $\pair{c}{y} \in B$ is in \FO~since the algorithm in Fig.~\ref{fig:algo} can be expanded into a sentence of fixed size.
So is our algorithm, which checks
$\exists c: \pair{c}{r} \in B.$
\end{proof}

   





\section{Complexity Upper Bounds}
\label{sec:algorithm}

In this section, we prove the upper bound results in Theorem~\ref{thm:trichotomy}. 
First, we shall prove Lemma~\ref{lem:min-start}.


\begin{lemma}
\label{lemma:factor-of-treecfg}
Let $q$ be a rooted tree query. Then $q$ satisfies $\cfactor$ if and only if $q \leqhomo{}{} \tau$ for every $\tau \in \treecfg{q}$.
\end{lemma}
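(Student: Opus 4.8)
\textbf{Proof proposal for Lemma~\ref{lemma:factor-of-treecfg}.}

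The plan is to prove the two directions separately, using the structure of $\treecfg{q}$ as a context-free grammar whose productions are exactly the forward rules~\eqref{eq:forward}, the backward rules~\eqref{eq:backward}, and the leaf/constant rules. The key observation is that each derivation step of $\streecfg{q}{u}$ corresponds either to ``unfolding'' the subtree $\rootedAt{u}{q}$ one level (forward) or to replacing the start symbol $S_y$ by $S_x$ for some ancestor atom $\atomAt{R}{x}{q} \precedes{q} \atomAt{R}{y}{q}$ (backward). So every $\tau \in \treecfg{q}$ is built by interleaving subtree unfoldings with ``rewinds'' in the sense of Definition~\ref{definition:rewinding}: firing a backward rule $S_y \rightarrow_q S_x$ at the position of $\rootedAt{y}{q}$ is exactly replacing that subtree by a fresh copy of $\rootedAt{x}{q}$, i.e., passing from the tree being derived to (a subtree of) some iterated rewinding $\rewind{\cdots}{}{}{}$ of $q$.

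\emph{($\Rightarrow$) Assume $q$ satisfies $\cfactor$.} I would prove, by induction on the length of the derivation $S_u \derive_q \tau$, the more general statement: for every variable $u$ in $q$ and every $\tau$ with $S_u \derive_q \tau$, there is a homomorphism $h : \rootedAt{u}{q} \to \tau$ with $h(u) = \mathrm{root}(\tau)$; taking $u = r$ then gives $q \leqhomo{}{} \tau$. For the base case $\tau$ is a single leaf ($\bot$, a constant, or a unary atom) and the homomorphism is immediate (recall $\bot$ maps anywhere). For the inductive step, the first production applied to $S_u$ is either a forward rule $S_u \rightarrow_q R(S_{u_1},\dots,S_{u_n})$ — then $\tau = R(\tau_1,\dots,\tau_n)$ with $S_{u_i}\derive_q \tau_i$, and combining the atom $R(\underline{u},u_1,\dots,u_n)$ with the homomorphisms $h_i : \rootedAt{u_i}{q}\to\tau_i$ obtained from the induction hypothesis yields $h$ — or a backward rule $S_u \rightarrow_q S_x$ with $\atomAt{R}{x}{q} \precedes{q} \atomAt{R}{u}{q}$ (so $u$ and $x$ share label $R$), followed by $S_x \derive_q \tau$. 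In the latter case the induction hypothesis gives $h' : \rootedAt{x}{q} \to \tau$ with $h'(x) = \mathrm{root}(\tau)$; now I need a homomorphism $\rootedAt{u}{q}\to\tau$. Here is where $\cfactor$ enters: $\atomAt{R}{x}{q}\precedes{q}\atomAt{R}{u}{q}$ gives $q \leqhomo{}{} \rewind{q}{u}{x}{R}$, and by the definition of rewinding this homomorphism restricts to one from $\rootedAt{u}{q}$ into the fresh copy of $\rootedAt{x}{q}$ that replaced $\rootedAt{u}{q}$ — i.e., to a homomorphism $\rootedAt{u}{q}\to\rootedAt{x}{q}$ sending $u$ to $x$; composing with $h'$ finishes. (I should double-check that $q\leqhomo{}{}\rewind{q}{u}{x}{R}$ really does restrict to $\rootedAt{u}{q}\to\rootedAt{x}{q}$ fixing the marked vertex — this is the analogue of Lemma~\ref{lemma:branch-to-root-homomorphism} for the $\precedes{q}$ case, and may need a small separate argument or may follow because the subtree $\rootedAt{u}{q}$ of $q$ is disjoint from the rest and must map into the only place in $\rewind{q}{u}{x}{R}$ carrying that structure.)

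\emph{($\Leftarrow$) Contrapositive: assume $q$ violates $\cfactor$.} Then there are atoms $\atomAt{R}{x}{q}\precedes{q}\atomAt{R}{y}{q}$ with $q \nleqhomo{}{} \rewind{q}{y}{x}{R}$. I would exhibit a specific $\tau\in\treecfg{q}$ with $q\nleqhomo{}{}\tau$, namely take a derivation that uses exactly one backward rule $S_{y'}\rightarrow_q S_x$ at the copy of $\atomAt{R}{y}{q}$ and is otherwise a straight forward unfolding of $q$ — this produces (the string representation of) $\rewind{q}{y}{x}{R}$ itself, or more precisely an instantiation-free ``fully expanded'' tree isomorphic to $\rewind{q}{y}{x}{R}$ as a relation tree. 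Since $q\nleqhomo{}{}\rewind{q}{y}{x}{R}$ by assumption, this $\tau$ witnesses the failure of the right-hand side. The only subtlety is confirming that $\treecfg{q}$ can indeed derive the full tree $\rewind{q}{y}{x}{R}$: forward rules alone expand $S_r$ to the tree isomorphic to $q$, and inserting the single backward step $S_y\rightarrow_q S_x$ at the right position swaps $\rootedAt{y}{q}$ for a fresh $\rootedAt{x}{q}$, which is the definition of $\rewind{q}{y}{x}{R}$; one then keeps expanding with forward rules until no nonterminals remain, which terminates because $\rootedAt{x}{q}$ is finite.

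\textbf{Main obstacle.} The delicate point is the interaction between a backward rule and the homomorphism in the $(\Rightarrow)$ direction: I need that $q\leqhomo{}{}\rewind{q}{y}{x}{R}$ (for the $\precedes{q}$ case) yields a homomorphism of the \emph{subtree} $\rootedAt{y}{q}$ into $\rootedAt{x}{q}$ fixing the distinguished vertex, so that the induction carries a ``rooted'' homomorphism through each backward step. Establishing this clean restriction — the $\precedes{q}$-analogue of Lemma~\ref{lemma:branch-to-root-homomorphism} — and making sure the bookkeeping of which fresh copy maps where stays consistent across many interleaved forward and backward steps, is where the real work lies; the rest is routine structural induction on derivations.
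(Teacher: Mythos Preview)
Your $(\Leftarrow)$ direction is fine and matches the paper: $\rewind{q}{y}{x}{R}$ is derivable from $S_r$ by inserting a single backward step $S_y\rightarrow_q S_x$ into the canonical forward derivation of $q$, so the failure of $\cfactor$ produces a witness $\tau\in\treecfg{q}$ with $q\nleqhomo{}{}\tau$.

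The $(\Rightarrow)$ direction, however, has a genuine gap --- precisely at the point you flagged as the ``main obstacle,'' and it cannot be repaired by a small side argument. Your stronger inductive hypothesis, that $S_u\derive_q\tau$ implies a \emph{root} homomorphism $\rootedAt{u}{q}\leqhomo{u}{\,\mathrm{root}(\tau)}\tau$, is simply false under~$\cfactor$ alone. Take $q=R(R(A))$, with atoms $R(\underline{x_0},x_1)$, $R(\underline{x_1},x_2)$, $A(\underline{x_2})$. Here $\cfactor$ holds: $\rewind{q}{x_1}{x_0}{R}=R(R(R(A)))$ and $q\leqhomo{}{}R(R(R(A)))$ by mapping $q$ to the lower three levels. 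But $\rootedAt{x_1}{q}=R(A)$ has \emph{no} root homomorphism into $\rootedAt{x_0}{q}=R(R(A))$, since mapping $x_1\mapsto x_0$ forces $x_2\mapsto x_1$ and then requires $A(\underline{x_1})$, which is absent. Concretely, $S_{x_1}\rightarrow_q S_{x_0}\derive_q R(R(A))$, yet $R(A)\nleqhomo{x_1}{\mathrm{root}}R(R(A))$; so your induction breaks at the very first backward step. The deeper reason is that $\cfactor$ is strictly weaker than $\cprefix$: the homomorphism $q\leqhomo{}{}\rewind{q}{u}{x}{R}$ guaranteed by $\cfactor$ need not fix the root, hence need not restrict to a root homomorphism of the subtree at~$u$. (Lemma~\ref{lemma:branch-to-root-homomorphism} establishes such a restriction only in the $\vardisjoint{q}$ case; no $\precedes{q}$-analogue is available.)

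The paper avoids this by \emph{not} strengthening to subtrees. It inducts on the number~$k$ of backward transitions in a derivation $S_r\derive_q\tau$ and works with the full query~$q$ and the full tree~$\tau$ throughout. Given $\tau$ using $k{+}1$ backward steps, it locates a subtree $\sigma$ of $\tau$ produced by a single backward step $S_y\rightarrow_q S_x$ followed only by forward rules (so $\sigma$ is a copy of $\rootedAt{x}{q}$), replaces $\sigma$ by $\sigma^{*}=\rootedAt{y}{q}$ to obtain a tree~$\tau^{*}$ derivable with $k$ backward steps, and applies the inductive hypothesis to get $h:q\to\tau^{*}$. If $h(q)$ misses $\sigma^{*}$, then $h$ already lands in~$\tau$. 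Otherwise $h$ sends some $\rootedAt{z}{q}$ into $\sigma^{*}$, which forces $\atomAt{R}{x}{q}\precedes{q}\atomAt{R}{z}{q}$; now $\cfactor$ applied to the pair $(x,z)$ (not $(x,y)$!) yields $q\leqhomo{}{}\rewind{q}{z}{x}{R}$, and this composite gives the desired $q\leqhomo{}{}\tau$. The key manoeuvre is that the paper never needs a rooted subtree homomorphism; it always repairs the global homomorphism $q\to\tau$ using a fresh application of $\cfactor$ to whichever ancestor pair the intersection with $\sigma^{*}$ reveals.
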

\begin{proof}[Proof of Lemma~\ref{lemma:factor-of-treecfg}]
Consider two directions.

\framebox{$\impliedby$} Let $\atomAt{R}{x}{q}$ and $\atomAt{R}{y}{q}$ be two atoms in $q$ with  $\atomAt{R}{x}{q} \precedes{q} \atomAt{R}{y}{q}$. 
It suffices to show that $\rewind{q}{y}{x}{R} \in \treecfg{q}$. 
Indeed, there is an execution of $S_r(\rewind{q}{y}{x}{R})$ that follows exactly $S_r(q)$, until it invokes $S_y(\rootedAt{x}{q})$, instead of $S_y(\rootedAt{y}{q})$ in $S_r(q)$. 
Note that $S_y \rightarrow_q S_x \derive_q \rootedAt{x}{q}$.
Thus $S_r \derive_q \rewind{q}{y}{x}{R}$, concluding that $\rewind{q}{y}{x}{R} \in \treecfg{q}$.

\framebox{$\implies$} 
Let $\tau \in \treecfg{q}$ with $S_r \derive_q \tau$. We use an induction on the number $k$ of backward transitions in $S_r \derive_q \tau$ to show that $q\leqhomo{}{} \tau$.

\begin{itemize}
\item Basis $k = 0$. We have $\tau = q$, and the claim follows.
\item Inductive step $k \rightarrow k+1$. Assume that if $S_r \derive_q \sigma$ uses $k$ backward transitions, then $q \leqhomo{}{} \sigma$. 

Let $\tau\in\treecfg{q}$ such that $S_r \derive_q \tau$ uses $k+1$ backward transitions. Let $\sigma$ be a subtree of $\tau$ such that the execution of $S_r(\sigma)$ invokes exactly $1$ backward transition $S_y \rightarrow_q S_x \derive_q \sigma$. Hence $\sigma=\rootedAt{x}{q}$. Consider the rooted tree $\tau^*$, obtained by replacing $\sigma=\rootedAt{x}{q}$ with $\sigma^* = \rootedAt{y}{q}$. We have $\tau^* \in \treecfg{q}$, since $S_r \derive_q \tau$ would invoke $S_y \derive_q \sigma^*$ and use exactly $k$ backward transitions. By the inductive hypothesis, there is a homomorphism $h$ from $q$ to $\tau^*$. 
If $h(q) \cap \sigma^* = \emptyset$, then $h(q)$ is still present in $\tau$, and thus $q \leqhomo{}{} \tau$. Otherwise, assume that the homomorphism $h$ maps $\rootedAt{z}{q}$ in $q$ to $\sigma^*$. Hence $\atomAt{R}{x}{q} \precedes{q} \atomAt{R}{y}{q} \precedes{q} \atomAt{R}{z}{q}$. Since $q$ satisfies $\cfactor$, there is a homomorphism $g$ from $q$ to $\rewind{q}{z}{x}{R}$, and thus a homomorphism from $q$ to $\tau$.
\end{itemize}
The proof is now complete.
\end{proof}

The following definition is helpful in our exposition.

\begin{definition}
Let $q$ be a rooted tree query. 
Let $\db$ be a database. For each repair $\rep$ of $\db$, we define $\starttwo{q}{\rep}$ as the set containing all (and only) constants $c \in \adom{\rep}$ such that there is a rooted tree set $\tau$ in $\rep$ starting in $c$ with $\tau \in \treecfg{q}$.
\end{definition}

The problem $\certaintrace{q}$ essentially asks whether there is some constant $c$ such that for every repair $\rep$ of $\db$, $c\in\starttwo{q}{r}$. 
Surprisingly, the frugal repair $\rep^*$ of $\db$ minimizes $\starttwo{q}{\cdot}$ across all repairs of $\db$.

\begin{lemma}
\label{lem:key}
Let $q$ be a rooted tree query satisfying $\cbranch$. Let $\db$ be a database. Let $\rep^*$ be a frugal repair of $\db$. Then for any repair $\rep$ of $\db$, $\starttwo{q}{\rep^*} \subseteq \starttwo{q}{\rep}$.
\end{lemma}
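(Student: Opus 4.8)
The goal is to show that every constant $c\in\starttwo{q}{\rep^*}$ lies in $\starttwo{q}{\rep}$ for an arbitrary repair $\rep$ of $\db$. By definition, $c\in\starttwo{q}{\rep^*}$ means there is a rooted tree set $\tau$ in $\rep^*$ starting in $c$ with $\tau\in\treecfg{q}$, i.e.\ $S_r\derive_q\tau$. The plan is to first invoke Lemma~\ref{lemma:frugal-no-tree}: from the existence of such a $\tau$ in the frugal repair $\rep^*$, we conclude $\pair{c}{r}\in B$, where $B$ is the output of the algorithm in Fig.~\ref{fig:algo}. Then I would apply the direction \ref{it:syntax}$\implies$\ref{it:semantics} of Lemma~\ref{lemma:correctness}: since $\pair{c}{r}\in B$, \emph{every} repair $\rep$ of $\db$ contains a rooted tree set $\sigma$ starting in $c$ with $\sigma\in\streecfg{q}{r}$. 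But $\streecfg{q}{r}$ is exactly $\treecfg{q}$ (the starting symbol of $\treecfg{q}$ is $S_r$), so $\sigma\in\treecfg{q}$, and hence $c\in\starttwo{q}{\rep}$.

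\textbf{Carrying it out.} So the proof is essentially a two-line chaining of the already-established Lemmas~\ref{lemma:frugal-no-tree} and~\ref{lemma:correctness}, both of which require only the hypothesis $\cbranch$, which $q$ satisfies. Concretely: fix $c\in\starttwo{q}{\rep^*}$, take a witnessing rooted tree set $\tau$ in $\rep^*$ starting in $c$ with $\tau\in\treecfg{q}=\streecfg{q}{r}$; Lemma~\ref{lemma:frugal-no-tree} gives $\pair{c}{r}\in B$; Lemma~\ref{lemma:correctness} (\ref{it:syntax}$\Rightarrow$\ref{it:semantics}) then says that for the given repair $\rep$ there is a rooted tree set in $\rep$ starting in $c$ accepted by $\streecfg{q}{r}=\treecfg{q}$, i.e.\ $c\in\starttwo{q}{\rep}$. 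Since $c$ and $\rep$ were arbitrary, $\starttwo{q}{\rep^*}\subseteq\starttwo{q}{\rep}$.

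\textbf{Main obstacle.} There is no real obstacle left at this point — the combinatorial heart of the argument has already been absorbed into Lemma~\ref{lemma:correctness} (whose proof rests on the frugal-repair machinery, Lemmas~\ref{lemma:chain-property}, \ref{lemma:frugal-fact-choice-static}, \ref{lemma:frugal-no-tree}, and on Proposition~\ref{prop:totalorder}). The only points requiring care are bookkeeping ones: (i) confirming the identification $\streecfg{q}{r}=\treecfg{q}$, so that membership in the $y=r$ language is literally acceptance by $\treecfg{q}$; and (ii) noting that $\rep^*$ is itself a repair of $\db$, so that Lemma~\ref{lemma:correctness} applies uniformly to all repairs including $\rep^*$ and the chosen $\rep$. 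Neither is more than a definitional check, so the proof can be stated in a few sentences.
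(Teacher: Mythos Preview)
Your proposal is correct and follows essentially the same approach as the paper: invoke Lemma~\ref{lemma:frugal-no-tree} to get $\pair{c}{r}\in B$ from the witness in $\rep^{*}$, then apply the direction \ref{it:syntax}$\Rightarrow$\ref{it:semantics} of Lemma~\ref{lemma:correctness} to obtain a witness in the arbitrary repair $\rep$, using the identification $\streecfg{q}{r}=\treecfg{q}$.
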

\begin{proof}[Proof of Lemma~\ref{lem:key}]
Let $B$ be the output of the algorithm in Fig.~\ref{fig:algo}. 
Let $\rep^*$ be a frugal repair of $\db$.
Let $\rep$ be any repair of $\db$. 
We show that $\starttwo{q}{\rep^*} \subseteq \starttwo{q}{\rep}$. 
Let $r$ be the root variable of $q$.
Assume that $c \in \starttwo{q}{\rep^*}$. Then there exists a rooted tree set $\tau$ starting in $c$ in $\rep^*$ with $\tau \in  \treecfg{q} = \streecfg{q}{r}$. By Lemma~\ref{lemma:frugal-no-tree}, we have $\pair{c}{r} \in B$.
By Lemma~\ref{lemma:correctness}, there exists a rooted tree set $\tau'$ starting in $c$ in $\rep$ with $\tau' \in \streecfg{q}{r} =\treecfg{q}$. Thus $c \in \starttwo{q}{\rep}$.
\end{proof}

The proof of Lemma~\ref{lem:min-start} can now be given.

\begin{proof}[Proof of Lemma~\ref{lem:min-start}]
\framebox{\ref{it:rifidb}$\implies$\ref{it:rificonstant}}
Assume~\eqref{it:rifidb}.
Let $\rep^*$ be a frugal repair of $\db$.
Since $\rep^{*}$ satisfies $q$, there is a rooted tree set starting in $c$ isomorphic to $q$ in $\rep^{*}$.
Since~$q\in\treecfg{q}$, we have $c\in\starttwo{q}{\rep^{*}}$.
By Lemma~\ref{lem:key}, for every repair $\rep$ of $\db$, $\starttwo{q}{\rep^{*}}\subseteq\starttwo{q}{\rep}$.
It follows that $c\in\starttwo{q}{\rep}$ for every repair $\rep$ of $\db$.

\framebox{\ref{it:rificonstant}$\implies$\ref{it:rifidb}}
Let $\rep$ be any repair of $\db$. 
By our hypothesis that~\eqref{it:rificonstant} holds true, there is some $c\in\starttwo{q}{\rep}$.
Let $\tau$ be a rooted tree set in $\rep$ starting in $c$ with $\tau \in \treecfg{q}$.
Since $q$ satisfies $\ctwo$ by the hypothesis of the current lemma, it follows by Lemma~\ref{lemma:factor-of-treecfg} that $q \leqhomo{}{} \tau$.
Consequently, $\rep$ satisfies~$q$.
\end{proof}

The upper bounds in Theorem~\ref{thm:trichotomy} thus follows.

\begin{proposition}
\label{coro:ctwo}
For every $q$ in $\rtbcq$,
\begin{enumerate}
\item if $q$ satisfies $\ctwo$, then $\cqa{q}$ is in \LFPL; and
\item if $q$ satisfies $\cone$, then $\cqa{q}$ is in \FO.
\end{enumerate}
\end{proposition}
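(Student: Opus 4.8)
The plan is to derive Proposition~\ref{coro:ctwo} directly from the machinery already assembled in this section, essentially by chaining together Lemma~\ref{lem:min-start}, Lemma~\ref{prop:certaintrace-ptime}, and Lemma~\ref{coro:cone-tree-trace-fo}. For part~(1), suppose $q$ satisfies $\ctwo$. Then in particular $q$ satisfies $\cbranch$ (recall $\ctwo = \cfactor \land \cbranch$), so Lemma~\ref{prop:certaintrace-ptime} applies and tells us that $\certaintrace{q}$ is expressible in \LFPL. On the other hand, since $q$ satisfies $\ctwo$, Lemma~\ref{lem:min-start} says that $\cqa{q}$ and $\certaintrace{q}$ are the same problem (a database $\db$ is a ``yes''-instance of one iff it is a ``yes''-instance of the other). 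Hence $\cqa{q}$ is expressible in \LFPL\ as well, which is the claim.

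For part~(2), suppose $q$ satisfies $\cone$. Then $q$ satisfies $\ctwo$ as well: indeed $\cone = \cprefix \land \cbranch$ and $\ctwo = \cfactor \land \cbranch$, and $\cprefix$ implies $\cfactor$ because $q \leqhomo{r}{r} \rewind{q}{y}{x}{R}$ trivially implies $q \leqhomo{}{} \rewind{q}{y}{x}{R}$; so $\cprefix \land \cbranch$ implies $\cfactor \land \cbranch$. Therefore Lemma~\ref{lem:min-start} again equates $\cqa{q}$ with $\certaintrace{q}$. Since $q$ satisfies $\cone$, Lemma~\ref{coro:cone-tree-trace-fo} gives that $\certaintrace{q}$ is in \FO, and so $\cqa{q}$ is in \FO.

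I do not anticipate a genuine obstacle here, since this proposition is a bookkeeping corollary that collects the upper-bound half of Theorem~\ref{thm:trichotomy}; all the real work has been front-loaded into Lemmas~\ref{lem:min-start}, \ref{prop:certaintrace-ptime}, and \ref{coro:cone-tree-trace-fo} (and, underneath those, into the frugal-repair analysis and Lemma~\ref{lem:key}). The only point requiring a moment's care is the implication $\cone \Rightarrow \ctwo$, i.e.\ checking that the hypotheses of the two cited lemmas are actually met; this is the reason it is worth spelling out the $\cprefix \Rightarrow \cfactor$ step explicitly rather than leaving it to the reader. One could alternatively phrase the whole argument purely at the level of the named conditions using the Proposition stating $\ctwo = \cfactor \land \cbranch$ and $\cone = \cprefix \land \cbranch$, which makes the containment $\cone \subseteq \ctwo$ transparent.
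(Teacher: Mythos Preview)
Your proposal is correct and matches the paper's own proof, which simply states that the result is immediate from Lemmas~\ref{lem:min-start}, \ref{prop:certaintrace-ptime}, and~\ref{coro:cone-tree-trace-fo} together with the observation that $\cone$ implies $\ctwo$. You have merely unpacked the $\cone\Rightarrow\ctwo$ implication explicitly, which is fine.
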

\begin{proof}
Immediate from Lemmas~\ref{lem:min-start}, \ref{prop:certaintrace-ptime}, and~\ref{coro:cone-tree-trace-fo} by noting that $\cone$ implies~$\ctwo$. 
\end{proof}


Interestingly, for each query $q$ in $\rtbcq$ satisfying $\ctwo$, ``checking the frugal repair is all you need''.
A repair with this property is known as a ``universal repair'' in~\cite{DBLP:conf/icdt/CateFK12}.


\begin{corollary}
\label{prop:frugal-perspective}
Let $q$ be a query in $\rtbcq$, and let $\db$ be a database instance. 
Let $\rep^*$ be a frugal repair of $\db$ with respect to $q$. 
If~$q$ satisfies $\ctwo$, then $\db$ is a ``yes''-instance of $\cqa{q}$ if and only if $\rep^*$ satisfies $q$.
\end{corollary}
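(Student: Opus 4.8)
The plan is to derive Corollary~\ref{prop:frugal-perspective} as an almost immediate consequence of Lemma~\ref{lem:min-start} together with Lemma~\ref{lem:key} (the minimality of the frugal repair with respect to $\starttwo{q}{\cdot}$) and Lemma~\ref{lemma:factor-of-treecfg} (the characterization of $\cfactor$ via homomorphisms into trees accepted by $\treecfg{q}$). Throughout, fix $q \in \rtbcq$ satisfying $\ctwo$, fix a database instance $\db$, let $r$ be the root variable of $q$, and let $\rep^*$ be a frugal repair of $\db$ with respect to $q$; recall that $\ctwo = \cfactor \land \cbranch$, so in particular $q$ satisfies $\cbranch$, which is the hypothesis needed for all the frugal-repair machinery.

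For the ``if'' direction, suppose $\rep^*$ satisfies $q$. Then $\rep^*$ contains a rooted tree set isomorphic to $q$ starting in some constant $c$, and since $q \in \treecfg{q}$ we get $c \in \starttwo{q}{\rep^*}$. By Lemma~\ref{lem:key}, $\starttwo{q}{\rep^*} \subseteq \starttwo{q}{\rep}$ for every repair $\rep$ of $\db$, so $c \in \starttwo{q}{\rep}$ for every repair $\rep$; that is, $\db$ is a ``yes''-instance of $\certaintrace{q}$. By Lemma~\ref{lem:min-start} (applicable since $q$ satisfies $\ctwo$), $\db$ is then a ``yes''-instance of $\cqa{q}$. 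For the ``only if'' direction, suppose $\db$ is a ``yes''-instance of $\cqa{q}$. Since $\rep^*$ is a repair of $\db$, in particular $\rep^*$ satisfies $q$, which is exactly what we need. (Alternatively, and perhaps more in the spirit of the paper, one can route this direction through $\certaintrace{q}$ as well: by Lemma~\ref{lem:min-start} there is a constant $c$ with $c \in \starttwo{q}{\rep}$ for every repair $\rep$, in particular $c \in \starttwo{q}{\rep^*}$, so some $\tau \in \treecfg{q}$ is a rooted tree set in $\rep^*$ starting in $c$; then $\cfactor$ and Lemma~\ref{lemma:factor-of-treecfg} give $q \leqhomo{}{} \tau$, hence $\rep^*$ satisfies $q$.)

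Honestly, there is no real obstacle here: the corollary is a bookkeeping combination of results already proved, and the only thing to be careful about is citing the hypotheses correctly — specifically, that $\ctwo$ entails $\cbranch$ (so that ``frugal repair'' is even well-defined and Lemmas~\ref{lem:key}, \ref{lemma:frugal-no-tree}, \ref{lemma:correctness} apply) and that $\ctwo$ is exactly the hypothesis of Lemma~\ref{lem:min-start}. The mild subtlety worth flagging in the write-up is that the nontrivial content of the corollary lies in the ``if'' direction (an arbitrary repair satisfying $q$ would be too weak a hypothesis to conclude a ``yes''-instance; it is the \emph{frugal} repair, via its $\starttwo{q}{\cdot}$-minimality, that makes the implication go through), whereas the ``only if'' direction is trivial since $\rep^*$ is itself a repair. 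I would state the proof in two or three sentences per direction, citing Lemma~\ref{lem:min-start}, Lemma~\ref{lem:key}, and (if routing the second direction through traces) Lemma~\ref{lemma:factor-of-treecfg}.
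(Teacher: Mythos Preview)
Your proposal is correct and essentially matches the paper's approach. The only cosmetic difference is in the ``if'' direction: the paper goes from ``$\rep^*$ satisfies $q$'' to $\pair{c}{r}\in B$ via Lemma~\ref{lemma:frugal-no-tree} (and then implicitly through Lemma~\ref{lemma:correctness} to $\certaintrace{q}$), whereas you route through $c\in\starttwo{q}{\rep^*}\subseteq\starttwo{q}{\rep}$ via Lemma~\ref{lem:key}; since Lemma~\ref{lem:key} is itself proved from Lemmas~\ref{lemma:frugal-no-tree} and~\ref{lemma:correctness}, this is purely a matter of which packaged result is cited.
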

\begin{proof}
Let $\rep^*$ be a frugal repair of $\db$ with respect to $q$.

\framebox{$\implies$} This direction is straightforward. 
\framebox{$\impliedby$} Assume that $\rep^*$ satisfies $q$. 
Let $r$ be the root variable of $q$.
Hence there exists a constant $c$ in $\db$, such that there exists a rooted relation tree $\tau$ in $\rep^*$ that is isomorphic to $q$ accepted by $\streecfg{q}{r}$.
Then by Lemma~\ref{lemma:frugal-no-tree}, $\pair{c}{r}\in B$, where $B$ is the output of algorithm in Fig.~\ref{fig:algo}.
Hence $\db$ is a ``yes''-instance for $\certaintrace{q}$, and by Lemma~\ref{lem:min-start}, a ``yes''-instance for $\cqa{q}$.
\end{proof}


\section{Complexity Lower Bounds}
\label{sec:hardness}

In this section, we show the hardness results in Theorem~\ref{thm:trichotomy}.

We define a {\em canonical copy} of a query $q$ as a set of facts $\mu(q)$, where $\mu$ maps each variable in $q$ to a unique constant. 
The following notation will be central in all our reductions. For a query $q$, variables $x_i$ in $q$ and distinct constants $c_i$, we denote 
$$\qcopy{q}{x_1, x_2, \dots, x_n}{c_1, c_2, \dots, c_n}$$ as the canonical copy $\mu(q)$, where 
$$\mu(z) = \begin{cases} c_i & \text{if } z = x_i \\ \text{a fresh distinct constant} & \text{otherwise}. \end{cases}$$

\begin{lemma}
\label{lemma:cbranch-conp-hard}
$\cqa{q}$ is \coNP-hard for each $q$ in $\rtbcq$ that violates $\ctwo$.
\end{lemma}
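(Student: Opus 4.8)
The statement is a \coNP-hardness result for $\cqa{q}$ whenever $q\in\rtbcq$ violates $\ctwo$. By the Proposition stating $\ctwo=\cfactor\land\cbranch$, violating $\ctwo$ means $q$ violates $\cfactor$ or violates $\cbranch$. I would split the proof into these two cases and reduce from a suitable \NP-complete problem (ultimately showing the complement is \NP-hard, so $\cqa{q}$ is \coNP-hard). The intended source problems are indicated by the paper itself: for the $\cbranch$-violation case the text hints at a reduction from \textsc{SAT} (see the remark ``we can reduce from \textsf{SAT} in item~\eqref{it:conphardness} of Proposition~\ref{prop:hardness}''), and for the $\cfactor$-violation case one expects a reduction in the style of the path-query hardness proofs of~\cite{DBLP:conf/pods/KoutrisOW21}, again routed through Proposition~\ref{prop:hardness}. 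So the real work is: (i) extract from a $\ctwo$-violation the precise pair of self-joining atoms and the non-existence of the relevant homomorphism, and (ii) build a gadget database whose repairs encode truth assignments (or, in the factor case, the analogue used for words).

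\textbf{Case 1: $q$ violates $\cbranch$.} Then there are atoms $\atomAt{R}{x}{q}\vardisjoint{q}\atomAt{R}{y}{q}$ with $\rootedAt{x}{q}\nleqhomo{x}{y}\rootedAt{y}{q}$ and $\rootedAt{y}{q}\nleqhomo{y}{x}\rootedAt{x}{q}$. Let $z$ be the nearest common ancestor of $x$ and $y$; the atom at $z$ has (at least) two children leading down toward $x$ and toward $y$. The plan is to mimic Fig.~\ref{tbl:instance}: for each propositional variable $v_i$ of an input \textsc{3-SAT}-like instance (more precisely, the form used in item~\eqref{it:conphardness} of Proposition~\ref{prop:hardness}), create two key-equal facts in the $R$-block at a constant $\mathtt{v}_i$, one ``pointing into'' a canonical copy of $\rootedAt{x}{q}$ (encoding $v_i=\text{true}$) and one ``pointing into'' a canonical copy of $\rootedAt{y}{q}$ (encoding $v_i=\text{false}$). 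Because neither subtree homomorphically maps to the other, whichever branch the $R$-fact commits to, the query cannot be completed through the \emph{other} branch. Then wire the common-ancestor relation $R(\underline{z},\dots)$ so that a single block at the root sees all clauses: a repair of this block selects, for each clause, which literal it is ``witnessed'' by, and one builds the construction so that $q$ is falsified in some repair iff the formula is satisfiable. The leaf labels ($\bot$, unary relations, constants) of the subtrees are realized by adding the corresponding unary facts / constants, consistently, so they never block a homomorphism. The correctness argument is: a satisfying assignment yields a repair falsifying $q$ by choosing the ``wrong'' branch everywhere; conversely a repair falsifying $q$ reads off a consistent choice of literal per variable, hence an assignment, and the two-way non-homomorphism guarantees no unintended homomorphism sneaks in.

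\textbf{Case 2: $q$ violates $\cfactor$ (but satisfies $\cbranch$).} Then there are atoms $\atomAt{R}{x}{q}\precedes{q}\atomAt{R}{y}{q}$ on one root-to-leaf path with $q\nleqhomo{}{}\rewind{q}{y}{x}{R}$. This is the ``path-like'' obstruction, and by Lemma~\ref{lemma:factor-of-treecfg} it exactly says that $\treecfg{q}$ accepts a tree $\tau$ (namely $\rewind{q}{y}{x}{R}$, or an iterate) with $q\nleqhomo{}{}\tau$. The plan here is to lift the path-query hardness reduction of~\cite{DBLP:conf/pods/KoutrisOW21}: along the segment of the tree between $x$ and $y$, one builds a ``ladder'' of key-equal facts whose repairs encode, for each variable, whether to ``shortcut'' (use the backward edge $S_y\to_q S_x$, i.e.\ rewind) or ``go straight.'' One arranges matters so that the rooted tree set realized in a repair is some $\tau\in\treecfg{q}$, and that $\tau$ admits a copy of $q$ iff the corresponding assignment is \emph{not} satisfying — using $q\nleqhomo{}{}\rewind{q}{y}{x}{R}$ to ensure a rewinding repair kills all homomorphisms from $q$. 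Everything off this segment (the other branches of the tree, and the leaf decorations) is made consistent and rigid so it contributes a fixed canonical copy and interferes with nothing. Again one concludes via Proposition~\ref{prop:hardness}.

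\textbf{Main obstacle.} The technical heart is Case~1 and, within it, controlling \emph{global} homomorphisms from $q$ into the constructed instance's repairs: ruling out that a copy of $q$ is formed by mapping parts of $q$ across gadgets in some unintended way (e.g.\ a subtree of $q$ mapping into a ``decoration'' block, or across the clause-selection structure). This is exactly where the precise \emph{two-sided} non-homomorphism $\rootedAt{x}{q}\nleqhomo{x}{y}\rootedAt{y}{q}$ and $\rootedAt{y}{q}\nleqhomo{y}{x}\rootedAt{x}{q}$ must be used, together with a careful choice of which variables of each canonical copy are identified with shared ``gadget'' constants (via the $\qcopy{\cdot}{\cdot}{\cdot}$ notation) and which are kept fresh. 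I expect the bulk of the write-up to be a case analysis verifying that any homomorphism $q\to\rep$ forces a ``branch choice'' at the common ancestor and then propagates down into exactly one of $\rootedAt{x}{q},\rootedAt{y}{q}$, which is where it must get stuck. Matching the gadget structure to the exact hypothesis of item~\eqref{it:conphardness} of Proposition~\ref{prop:hardness} (rather than to raw \textsc{SAT}) is the cleanest route and is presumably what the authors do.
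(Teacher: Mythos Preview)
Two issues stand out.

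First, a misreading: Proposition~\ref{prop:hardness} is a \emph{corollary} of this lemma (its proof reads ``Immediate from Lemma~\ref{lemma:cbranch-conp-hard} and~\ref{lemma:nl-hardness}''), not a tool you can route through. There is no intermediate ``hypothesis of item~\eqref{it:conphardness}'' to match; the reduction has to go directly from an \NP-complete problem.

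Second, and more substantively, your case split on $\lnot\cbranch$ versus $\lnot\cfactor$ is an unnecessary complication that sends Case~2 down a detour. The paper gives a \emph{single} reduction from \textsc{Monotone SAT} that works uniformly. A violation of $\ctwo$ is, by definition, a pair $\atomAt{R}{p}{q},\atomAt{R}{n}{q}$ with \emph{both} $q\nleqhomo{}{}\rewind{q}{p}{n}{R}$ and $q\nleqhomo{}{}\rewind{q}{n}{p}{R}$; no assumption is made on whether $p,n$ are siblings or in ancestor--descendant relation. The gadget is: for each propositional variable $z$, add $\qcopy{\rootedAt{p}{q}}{p}{z}$ and $\qcopy{\rootedAt{n}{q}}{n}{z}$ (their root $R$-facts are key-equal, so a repair chooses one); for each positive literal $z\in C$, add $\qcopy{q\setminus\rootedAt{p}{q}}{r,p}{C,z}$; and for each negative literal $\overline{z}\in\overline{C}$, add $\qcopy{q\setminus\rootedAt{n}{q}}{r,n}{\overline{C},z}$. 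The two-sided non-homomorphism is precisely what kills the unintended matches you worry about (the ``mismatched'' unions are canonical copies of $\rewind{q}{p}{n}{R}$ and $\rewind{q}{n}{p}{R}$), and it is also what forces $r\notin\{p,n\}$ so that the root relation carries the clause-level inconsistency. Your Case~1 intuition is close to this, but the nearest-common-ancestor analysis you outline is never needed. Your Case~2 ``ladder'' plan is simply not required: even when the witnessing atoms satisfy $x\precedes{q}y$, the $\ctwo$ violation already hands you both non-homomorphisms for some pair, and the same \textsc{Monotone SAT} reduction goes through verbatim.
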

\begin{proof}
Since $q$ violates $\ctwo$, there exist  two atoms $R(\underline{p}, \dots)$ and $R(\underline{n}, \dots)$ in $q$ such that there is no homomorphism from $q$ to neither $\rewind{q}{p}{n}{R}$ nor $\rewind{q}{n}{p}{R}$. 

Consider now the root atom $A(\underline{r}, \dots)$. It must be that $r \neq p$, since otherwise, there would be a homomorphism from $q$ to $\rewind{q}{n}{p}{R}$, a contradiction. Similarly, we have that $r \neq n$.  Hence, the root atom is distinct from $R(\underline{p}, \dots)$ and $R(\underline{n}, \dots)$.
We also have that $r \precedes{q} p$ and   $r \precedes{q}  n$.

We present a reduction from $\mathsf{Monotone SAT}$: Given a monotone CNF formula $\varphi$, i.e.,~each clause in $\varphi$ contains either all positive literals or all negative literals, does $\varphi$ has a satisfying assignment? 

Let $\varphi$ be a monotone CNF formula.
We construct an instance $\db$ for $\cqa{q}$ as follows. 
\begin{itemize}
    \item for each variable $z$ in $\varphi$, we introduce the facts $\qcopy{\rootedAt{p}{q}}{p}{z}$ and $\qcopy{\rootedAt{n}{q}}{n}{z}$;
    \item for each positive literal $z$ in clause $C$, we introduce the facts $\qcopy{q \setminus {\rootedAt{p}{q}}}{r, p}{C, z}$;
   \item for each negative literal $z$ in clause $\overline{C}$, we introduce the facts $\qcopy{q \setminus {\rootedAt{n}{q}}}{r, n}{\overline{C}, z}$;
\end{itemize}

Observe that the instance $\db$ has two types of inconsistent blocks. For relation $A$, we have a block for each positive or negative clause, where the primary key position is the clause. For relation $R$, for every variable $z$ we have a block of size two, which corresponds to choosing a true/false assignment for $z$. All the other relations are consistent.

Additionally, for a positive literal $z \in C$, the set of facts $\qcopy{\rootedAt{p}{q}}{p}{z} \cup \qcopy{q \setminus {\rootedAt{p}{q}}}{r, p}{C, z}$ make $q$ true; similarly for a negative literal $z \in \overline{C}$, the facts $\qcopy{\rootedAt{n}{q}}{n}{z} \cup \qcopy{q \setminus {\rootedAt{n}{q}}}{n, p}{\overline{C}, z}$ make $q$ true. Note also that $\qcopy{\rootedAt{n}{q}}{n}{z} \cup \qcopy{q \setminus {\rootedAt{p}{q}}}{r, p}{C, z}$ is a canonical copy of $\rewind{q}{p}{n}{R}$ (and hence cannot satisfy $q$), while $\qcopy{\rootedAt{p}{q}}{p}{z} \cup \qcopy{q \setminus {\rootedAt{n}{q}}}{r, n}{\overline{C}, z}$ is a canonical copy of $\rewind{q}{n}{p}{R}$ (which also cannot satisfy $q$).

Now we argue that $\varphi$ has a satisfying assignment $\chi$ if and only if $\db$ has a repair $\rep$ that does not satisfy $q$.

\framebox{$\implies$}
Assume that $\varphi$ has a satisfying assignment $\chi$. Consider the repair $\rep$ of $\db$ that
\begin{itemize}
\item for each variable $z$, if $\chi(z) = \mathsf{true}$, picks $\qcopy{\rootedAt{n}{q}}{n}{z}$, or otherwise $\qcopy{\rootedAt{p}{q}}{p}{z}$;
\item for each positive clause $C$, picks $\qcopy{q \setminus {\rootedAt{p}{q}}}{r, p}{C, z}$ where $z$ is a positive literal in $C$ with $\chi(z) =  \mathsf{true}$; and
\item for each negative clause $\overline{C}$, picks $\qcopy{q \setminus {\rootedAt{n}{q}}}{r, n}{\overline{C}, \overline{z}}$ where $\overline{z}$ is a negative literal in $\overline{C}$ with $\chi(z) =  \mathsf{false}$.
\end{itemize}


We show that $\rep$ does not satisfy $q$. Indeed, for each positive clause $C$, there is a literal $z \in C$ with $\chi(z) =  \mathsf{true}$, and thus $\qcopy{q \setminus {\rootedAt{p}{q}}}{r, p}{C, z} \subseteq \rep$. 
However, we have $\qcopy{\rootedAt{n}{q}}{n}{z} \subseteq \rep$, and thus $q$ is not satisfied.
Similarly, for each negative clause $\overline{C}$, there is a literal $\overline{z} \in \overline{C}$ with $\chi(z) = \mathsf{false}$, and thus $\qcopy{q \setminus {\rootedAt{n}{q}}}{n, p}{\overline{C}, z} \subseteq \rep$. 
However, we have $\qcopy{\rootedAt{p}{q}}{p}{z} \subseteq \rep$ and hence this part also cannot satisfy $q$.
Hence $\rep$ does not satisfy $q$.

\framebox{$\impliedby$}
Now assume that $\db$ has a repair $\rep$ that does not satisfy $q$. 
Consider the assignment $\chi$ that sets $\chi(z) = \mathsf{true}$ if $\qcopy{\rootedAt{n}{q}}{n}{z} \subseteq \rep$, or otherwise $\chi(z) = \mathsf{false}$. 
We argue that $\varphi$ is satisfied. For each positive clause $C$, there exists some $z \in C$ such that $\qcopy{q \setminus {\rootedAt{p}{q}}}{r, p}{C, z} \subseteq \rep$. Since $\rep$ does not satisfy $q$, it must be that $\qcopy{\rootedAt{p}{q}}{p}{z} \nsubseteq \rep$ and thus $\qcopy{\rootedAt{n}{q}}{n}{z} \subseteq \rep$. By construction, $z$ is true and the clause $C$ is satisfied. Similarly, the negative clauses are all satisfied by the assignment.
\end{proof}

\begin{lemma}
\label{lemma:nl-hardness-subtree}
Let $q$ be a rooted tree query. If there exist two distinct atoms $R(\underline{x}, \dots)$ and $R(\underline{y}, \dots)$ such that $x \precedes{q} y$ and there is no root homomorphism from $\rootedAt{y}{q}$ to $\rootedAt{x}{q}$ (i.e., it does not hold that $\rootedAt{y}{q} \leqhomo{y}{x} \rootedAt{x}{q}$), then $\cqa{q}$ is \NL-hard. 
\end{lemma}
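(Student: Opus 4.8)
The plan is to reduce from the $\NL$-complete problem of reachability in a directed acyclic graph (or equivalently from $\mathsf{2\text{-}SAT}$, but the monotone reachability flavour is cleaner here). Fix the two atoms $R(\underline{x},\dots)$ and $R(\underline{y},\dots)$ with $x\precedes{q} y$ and $\rootedAt{y}{q}\nleqhomo{y}{x}\rootedAt{x}{q}$. First I would isolate the combinatorial core: because there is no root homomorphism from $\rootedAt{y}{q}$ to $\rootedAt{x}{q}$, the subtree $\rootedAt{x}{q}$ is ``too small'' to absorb $\rootedAt{y}{q}$, so on a chain of $R$-blocks we can force a repair to make an irrevocable choice at each block --- either ``commit to $x$'' or ``stay at $y$'' --- and these choices must be threaded consistently along the chain. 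The rooted tree set accepted by $\treecfg{q}$ will exist in a repair iff the chain of choices never gets ``stuck,'' which is exactly a reachability condition.

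Concretely, given a DAG $G$ with a source $s$ and sink $t$, I would build $\db$ as follows. For each vertex $v$ of $G$ introduce a constant, and for each edge $(v,w)$ create a block on relation $R$ with primary key value (essentially) $v$ and two facts: one ``forward'' fact whose non-key positions dangle the canonical copy $\qcopy{\rootedAt{x}{q}}{x}{\cdot}$ of the small subtree rooted at $x$ (together with fresh copies of the remaining children positions of the $R$-atom), and one ``pass'' fact that dangles instead a pointer to the next vertex $w$, playing the role of $\rootedAt{y}{q}$'s position and wiring in a fresh copy of $q\setminus\rootedAt{y}{q}$ so that the backward production $S_y\to_q S_x$ can only be simulated by literally landing on a block that chose the ``forward'' fact. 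The portion of $q$ above the $x$-atom (which is nonempty, and whose root coincides with $q$'s root) is attached once, at $s$; the portion of $q$ strictly below $y$ contributes the ``sink gadget'' at $t$. One then shows: $t$ is reachable from $s$ in $G$ $\iff$ $\db$ has a repair falsifying $q$ $\iff$ $\db$ is a ``no''-instance of $\cqa{q}$. The $\Leftarrow$/$\Rightarrow$ arguments mirror the structure of the proof of Lemma~\ref{lemma:cbranch-conp-hard}: given a non-reaching repair one reads off a ``cut'' in $G$; given a path one reads off a falsifying repair, using crucially that a canonical copy of $\rewind{q}{y}{x}{R}$ (the ``mismatched'' combination) cannot satisfy $q$, which is guaranteed precisely by $\rootedAt{y}{q}\nleqhomo{y}{x}\rootedAt{x}{q}$ together with Lemma~\ref{lemma:branch-to-root-homomorphism}-style reasoning adapted to the ancestor case.

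The main obstacle I anticipate is handling the \emph{other} self-joining atoms and backward productions of $q$ correctly: the CFG $\treecfg{q}$ has backward rules $S_{y'}\to_q S_{x'}$ for \emph{every} ancestor pair of $R$-atoms (and for every repeated relation, not just $R$), so a naive gadget might admit ``shortcut'' derivations that satisfy $q$ regardless of reachability, killing the reduction. The fix is to choose $R(\underline{x},\dots)$ and $R(\underline{y},\dots)$ to be a $\precedes{q}$-\emph{minimal} such pair (minimal $x$, then minimal $y$ among atoms below it witnessing the failure of root homomorphism), and to pad each gadget fact with large enough fresh canonical copies that no unintended homomorphism from $q$ into the instance exists except the ones we design; a homomorphism-counting / structural argument then rules out shortcuts. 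I would also need a short lemma verifying that the constructed $\db$ is a legitimate instance for $\cqa{q}$ --- i.e.\ every relation name of $q$ appears with its correct arity and the primary keys behave as intended --- but this is bookkeeping analogous to the construction in Lemma~\ref{lemma:cbranch-conp-hard}. Finally, $\NL$-hardness under the appropriate (first-order, or logspace) reductions follows since $\mathsf{REACH}$ on DAGs is $\NL$-complete and the construction is clearly first-order.
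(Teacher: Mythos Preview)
Your high-level approach---reduce from directed reachability on DAGs, using the ancestor pair of $R$-atoms to build a chain of blocks whose repair choices encode a path---is the same as the paper's. Two points deserve correction, though.

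First, your choice of a $\precedes{q}$-\emph{minimal} (i.e., topmost) violating pair is inverted relative to what the argument needs. The paper picks the pair of \emph{consecutive} $R$-atoms that violates the root-homomorphism condition and occurs \emph{lowest} in the tree. This is not cosmetic: the ``deepest'' choice guarantees that for every $R$-atom $R(\underline{z},\dots)$ with $y\precedes{q}z$ one has $\rootedAt{z}{q}\leqhomo{z}{y}\rootedAt{y}{q}$, and this is exactly what rules out the ``shortcut'' embeddings you worry about. With your topmost choice that closure property can fail, and then in the direction ``path $\Rightarrow$ falsifying repair'' a valuation of $q$ could thread $\rootedAt{y}{q}$ across two consecutive edge-gadgets without yielding the forbidden homomorphism $\rootedAt{y}{q}\leqhomo{y}{x}\rootedAt{x}{q}$. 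Your proposed remedy of ``padding with large fresh canonical copies'' does not obviously survive this, since the gadgets must share the constant at the $y$/$x$ interface; the structural minimality is doing real work that padding alone does not replace.

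Second, your gadget description is slightly off and more elaborate than needed. Blocks are keyed by \emph{vertices}, not edges: at vertex $u$ the $R$-block contains one fact per outgoing edge $(u,v)$ coming from a copy of $\rootedAt{x}{q}\setminus\rootedAt{y}{q}$ with $x\mapsto u$, $y\mapsto v$, together with one ``terminating'' fact coming from a copy of $\rootedAt{y}{q}$ with $y\mapsto u$; the part $q\setminus\rootedAt{x}{q}$ is attached (consistently) at every vertex, not just at $s$. The CFG and backward-production language is unnecessary for the hardness side---the paper's argument is a direct analysis of where a satisfying valuation can land---and Lemma~\ref{lemma:branch-to-root-homomorphism} concerns the $\vardisjoint{q}$ case, so it is not the right hook here.
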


\begin{proof}

We may assume without loss of generality two things $(i)$ there is no atom $R(\underline{z}, \dots)$ such that $z \notin \{x,y\}$, $x \precedes{q} z \precedes{q} y$ (we then say that the two $R$-atoms are consecutive), and $(ii) $ for any $y \precedes{q} z$, $z \neq y$, we have $\rootedAt{z}{q} \leqhomo{z}{y} \rootedAt{y}{q}$. Indeed, we can pick $R(\underline{x}, \dots)$ and $R(\underline{y}, \dots)$ to be the pair of consecutive $R$-atoms that violates the root homomorphism condition and occurs lowest in the rooted tree. Such a pair must always exists, since the root homomorphism property is transitive, i.e., if $\rootedAt{y}{q} \leqhomo{y}{z} \rootedAt{z}{q}$ and $ \rootedAt{z}{q} \leqhomo{z}{x} \rootedAt{x}{q}$, then  we also have that $\rootedAt{y}{q} \leqhomo{y}{x} \rootedAt{x}{q}$.


We present a reduction from the complement of \textsf{REACHABILITY} problem, which is \NL-hard: Given a directed acyclic graph $G = (V, E)$ and $s, t \in V$, is there a directed path from $s$ to $t$ in $G$?

We construct an instance $\db$ for $\cqa{q}$ as follows. First, we introduce two new constants $s'$ and $t'$. Then:
\begin{itemize}
\item for each $u \in V \cup \{s'\}$, introduce $\qcopy{q \setminus \rootedAt{x}{q}}{x}{u}$;
\item for every edge $(u, v) \in E \cup \{(s', s), (t, t')\}$, introduce $\qcopy{\rootedAt{x}{q} \setminus \rootedAt{y}{q}}{x, y}{u, v}$;
\item for every vertex $u \in V$, introduce $\qcopy{\rootedAt{y}{q}}{y}{u}$.
\end{itemize}

Note that the above construction guarantees that only $R$ has inconsistent blocks.

We now argue that there is a directed path $(u_1, u_2, \dots, u_k)$ with $(u_i, u_{i+1}) \in E$, $u_1 = s$ and $u_k = t$ in $G$ if and only if there is a repair of $\db$ that does not satisfy $q$.

\framebox{$\implies$}
Assume that there exists a directed path $(u_1, u_2, \dots, u_k)$ with $(u_i, u_{i+1}) \in E$, $u_1 = s$ and $u_k = t$ in $G$. 
Denote $u_0 = s'$ and $u_{k+1} = t'$.
Let $\rep$ be the repair that picks $\qcopy{\rootedAt{x}{q} \setminus \rootedAt{y}{q}}{x, y}{u_i, u_{i+1}}$ for every $1 \leq i \leq k-1$, and $\qcopy{\rootedAt{y}{q}}{y}{u}$ for any other vertex $u$. Suppose for contradiction that $\rep$ satisfies $q$ with a valuation $\theta$. 
It is not possible that $\theta(q) \subseteq \qcopy{\rootedAt{y}{q}}{y}{u} $ for any $u \notin V$ since the size does not fit. 

We argue that we must have $\theta(x) = u_i$ and $\theta(y) = u_{i+1}$ for some $0 \leq i < k$. 
If $\theta(x) = u_i \in \{u_0, u_1, \dots,u_k\}$, then we must have $\theta(y) = u_{i+1}$ since $\qcopy{\rootedAt{x}{q} \setminus \rootedAt{y}{q}}{x, y}{u, v}$ is a canonical copy.
Suppose for contradiction that $\theta(x) \notin \{u_0, u_1, \dots, u_k\}$. 
It is not possible that $\theta(x) = u_{k+1} = t'$ since by construction, there is no rooted tree set rooted at $t'$.
Note that there is no atom $R(\underline{z}, \dots)$ such that $z \notin {x, y}$, $x \precedes{q} z \precedes{q} y$.
Hence $\theta(x)$ cannot fall on the path connecting any $u_{i}$ and $u_{i+1}$, and
$\theta(\rootedAt{x}{q})$ must be contained in some $\qcopy{\rootedAt{x}{q} \setminus \rootedAt{y}{q}}{x, y}{u_i, u_{i+1}}$.
Then, there must be an atom $R(\underline{z}, \dots)$ such that (i) $x \precedes{q} z$,
(ii) $z \vardisjoint{q} y$ and (iii) $\theta(\rootedAt{x}{q})$ is contained in $\qcopy{\rootedAt{z}{q}}{z}{\theta(x)}$, which is impossible since the sizes do not fit.

By construction, there is a canonical copy of $\rootedAt{y}{q}$ rooted at $u_{i+1}$. If this canonical copy is contained in  $\qcopy{\rootedAt{x}{q} \setminus \rootedAt{y}{q}}{x, y}{u_{i+1}, u_{i+2}}$, then there is a root homomorphism from  $\rootedAt{y}{q}$ to $\rootedAt{x}{q} \setminus \rootedAt{y}{q}$, and so  from  $\rootedAt{y}{q}$ to $\rootedAt{x}{q}$, a contradiction. Otherwise, there exists some atom $R(\underline{z}, \dots)$ such that $(i)$ $y \precedes{q} z$ and $(ii)$ $\rootedAt{y}{q} \setminus \rootedAt{z}{q}$ has a root homomorphism to $\rootedAt{x}{q} \setminus \rootedAt{y}{q}$. Recall now that from our initial assumption we must have that  $\rootedAt{z}{q} \leqhomo{z}{y} \rootedAt{y}{q}$. This implies that we can now generate a root homomorphism from $\rootedAt{y}{q}$ to $\rootedAt{x}{q}$, a contradiction.


\framebox{$\impliedby$}
Assume that there is no directed path from $s$ to $t$ in $G$. Consider any repair $\rep$ of $\db$. Since $G$ is acyclic, there exists a maximal sequence $u_0, u_1, \dots, u_k$ with $k\geq 1$ such that $u_0 = s'$, $u_1 = s$, $\qcopy{\rootedAt{x}{q} \setminus \rootedAt{y}{q}}{x, y}{u_i, u_{i+1}} \subseteq \rep$ for $0 \leq i < k$ and $\qcopy{\rootedAt{y}{q}}{y}{u_k} \subseteq \rep$. Then, the following set of facts satisfies $q$:
\begin{align*}
& \qcopy{q \setminus \rootedAt{x}{q}}{x}{u_{k-1}} \cup  \\
& \qcopy{\rootedAt{x}{q} \setminus \rootedAt{y}{q}}{x, y}{u_{k-1}, u_k} \cup \\
& \qcopy{\rootedAt{y}{q}}{y}{u_k}.
\end{align*}

This shows that $\cqa{q}$ is \NL-hard since \NL\ is closed under complement.
\end{proof}

\begin{lemma}
\label{lemma:nl-hardness}
$\cqa{q}$ is \NL-hard for each $q$ in $\rtbcq$ that violates $\cone$. 
\end{lemma}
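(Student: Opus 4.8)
The plan is to deduce \NL-hardness from the two hardness results already in hand, Lemma~\ref{lemma:cbranch-conp-hard} and Lemma~\ref{lemma:nl-hardness-subtree}, by splitting on which conjunct of the decomposition $\cone = \cprefix \wedge \cbranch$ the query $q$ fails.

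First I would dispose of the case that $q$ violates $\cbranch$. Since $\ctwo = \cfactor \wedge \cbranch$, the query $q$ then also violates $\ctwo$, so Lemma~\ref{lemma:cbranch-conp-hard} already gives that $\cqa{q}$ is \coNP-hard. Because \NL\ is contained in \PTIME, hence in \coNP, and the reduction of Lemma~\ref{lemma:cbranch-conp-hard} is computable in logarithmic space (it adds a bounded number of canonical copies per variable, clause, and literal of the input formula), composing it with the inclusion of \NL\ in \coNP\ shows that every problem in \NL\ reduces to $\cqa{q}$; hence $\cqa{q}$ is \NL-hard in this case.

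The substantive case is that $q$ satisfies $\cbranch$ but violates $\cprefix$. By definition of $\cprefix$, there are atoms $\atomAt{R}{x}{q} \precedes{q} \atomAt{R}{y}{q}$ in $q$, with $r$ the root variable, such that $q \nleqhomo{r}{r} \rewind{q}{y}{x}{R}$. The key step is to show that this very pair of atoms satisfies the hypothesis of Lemma~\ref{lemma:nl-hardness-subtree}, namely $\rootedAt{y}{q} \nleqhomo{y}{x} \rootedAt{x}{q}$; granting this, the atoms $R(\underline{x}, \dots)$ and $R(\underline{y}, \dots)$ are distinct because $x \precedes{q} y$ forces $x \neq y$, and Lemma~\ref{lemma:nl-hardness-subtree} immediately yields \NL-hardness of $\cqa{q}$. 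I would prove the key step by contraposition: suppose $h \colon \rootedAt{y}{q} \to \rootedAt{x}{q}$ is a homomorphism with $h(y) = x$, and let $f$ be the isomorphism of Definition~\ref{definition:rewinding} that sends $\rootedAt{x}{q}$ to its fresh copy inside $\rewind{q}{y}{x}{R}$ with $f(x) = y$. Glue them into the map $g$ on the variables of $q$ that is the identity outside $\rootedAt{y}{q}$ and equals $f \circ h$ on $\rootedAt{y}{q}$. Then $g$ is a homomorphism from $q$ to $\rewind{q}{y}{x}{R}$: each atom outside $\rootedAt{y}{q}$ is mapped to itself and occurs verbatim in $\rewind{q}{y}{x}{R}$; each atom of $\rootedAt{y}{q}$ is carried by $h$ into $\rootedAt{x}{q}$ and then by $f$ into the fresh copy $f(\rootedAt{x}{q}) \subseteq \rewind{q}{y}{x}{R}$; the two pieces agree at the seam variable $y$ since $g(y) = f(h(y)) = f(x) = y$; and $g$ fixes $r$ because $x$ is a proper ancestor of $y$, so $y \neq r$. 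This contradicts $q \nleqhomo{r}{r} \rewind{q}{y}{x}{R}$, establishing $\rootedAt{y}{q} \nleqhomo{y}{x} \rootedAt{x}{q}$.

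The main obstacle I anticipate is the bookkeeping in this contraposition argument, specifically checking that the two halves of $g$ are consistent at the seam vertex $y$ and that, thanks to $f$ renaming everything except $x \mapsto y$ to fresh variables, the image of $\rootedAt{y}{q}$ under $g$ lands entirely inside the fresh copy and cannot collide with the untouched part $q \setminus \rootedAt{y}{q}$. Once that is in place, the rest is a direct appeal to Lemmas~\ref{lemma:cbranch-conp-hard} and~\ref{lemma:nl-hardness-subtree}.
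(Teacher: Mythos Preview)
Your proof is correct and follows essentially the same strategy as the paper: split into a case handled by Lemma~\ref{lemma:cbranch-conp-hard} and a case handled by Lemma~\ref{lemma:nl-hardness-subtree}. The only cosmetic difference is the axis of the case split---you split on which conjunct of $\cone=\cprefix\land\cbranch$ fails, while the paper picks a single witnessing pair $(\atomAt{R}{x}{q},\atomAt{R}{y}{q})$ for $\lnot\cone$ and splits on whether $x\vardisjoint{q}y$ or $x\precedes{q}y$---and you spell out the contraposition step (from $\rootedAt{y}{q}\leqhomo{y}{x}\rootedAt{x}{q}$ to $q\leqhomo{r}{r}\rewind{q}{y}{x}{R}$) that the paper leaves implicit.
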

\begin{proof}
Assume that $q$ violates $\cone$. 
Then there exist two distinct atoms $R(\underline{x}, \dots)$ and $R(\underline{y}, \dots)$ in $q$ such that there is no root homomorphism from $\rootedAt{y}{q}$ to $\rootedAt{x}{q}$ or from $\rootedAt{x}{q}$ to $\rootedAt{y}{q}$.
If $x \vardisjoint{q} y$,  Lemma~\ref{lemma:branch-to-root-homomorphism} implies that $\ctwo$ is also violated, so  $\cqa{q}$ is \coNP-hard from Lemma~\ref{lemma:cbranch-conp-hard}. Otherwise, $\cqa{q}$ is \NL-hard by Lemma~\ref{lemma:nl-hardness-subtree}.
\end{proof}

We conclude with the desired lower bounds.

\begin{proposition}
\label{prop:hardness}
For every $q$ in $\rtbcq$,
\begin{enumerate}
\item\label{it:conphardness} if $q$ violates $\ctwo$, then $\cqa{q}$ is \coNP-hard; and
\item if $q$ violates $\cone$, then $\cqa{q}$ is \NL-hard.
\end{enumerate}
\end{proposition}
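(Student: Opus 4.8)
The plan is short: this proposition is nothing more than the conjunction of the two lemmas just proved in this section, so I would simply cite them. Concretely, item~(\ref{it:conphardness}) is verbatim Lemma~\ref{lemma:cbranch-conp-hard}, and the second item is verbatim Lemma~\ref{lemma:nl-hardness}; hence the proof I would write is the one-liner ``Immediate from Lemmas~\ref{lemma:cbranch-conp-hard} and~\ref{lemma:nl-hardness}.'' For orientation, I record below the substance of those two arguments and where the actual difficulty lies.

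For item~(\ref{it:conphardness}), the reduction behind Lemma~\ref{lemma:cbranch-conp-hard} starts from a violation of $\ctwo$: two $R$-atoms $R(\underline{p},\dots)$ and $R(\underline{n},\dots)$ admitting no homomorphism from $q$ into $\rewind{q}{p}{n}{R}$ nor into $\rewind{q}{n}{p}{R}$. One first shows the root atom sits strictly above both $p$ and $n$, then reduces from \textsf{Monotone SAT}: each propositional variable $z$ gets a size-two $R$-block $\{\qcopy{\rootedAt{p}{q}}{p}{z},\, \qcopy{\rootedAt{n}{q}}{n}{z}\}$, and each positive (resp.\ negative) clause gets a canonical copy of $q\setminus\rootedAt{p}{q}$ (resp.\ $q\setminus\rootedAt{n}{q}$) glued at the clause vertex. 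The non-homomorphism hypothesis is exactly what guarantees that an ``inconsistent'' pairing produces a canonical copy of one of the two rewindings, which cannot satisfy~$q$.

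For the second item, Lemma~\ref{lemma:nl-hardness} first splits on whether the two $R$-atoms witnessing the failure of $\cone$ are incomparable or comparable under $\precedes{q}$. In the incomparable case, Lemma~\ref{lemma:branch-to-root-homomorphism} turns the failure of $\cone$ into a failure of $\ctwo$, so item~(\ref{it:conphardness}) already yields \coNP-hardness, hence \NL-hardness. In the comparable case $x\precedes{q}y$ with $\rootedAt{y}{q}\nleqhomo{y}{x}\rootedAt{x}{q}$, one invokes Lemma~\ref{lemma:nl-hardness-subtree}, whose reduction is from the complement of \textsf{REACHABILITY} on DAGs: vertices carry copies of $q\setminus\rootedAt{x}{q}$, edges carry copies of $\rootedAt{x}{q}\setminus\rootedAt{y}{q}$, every vertex also carries a copy of $\rootedAt{y}{q}$, and the only inconsistent blocks are $R$-blocks.

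The genuinely delicate point --- already dealt with inside Lemma~\ref{lemma:nl-hardness-subtree}, not here --- is the soundness of that last reduction: one must show that if an $s$--$t$ path exists then the associated repair truly falsifies $q$, which means excluding a satisfying valuation that ``slides'' the image of $\rootedAt{x}{q}$ along an edge gadget. That argument relies on the normalization that $R(\underline{x},\dots)$ and $R(\underline{y},\dots)$ are chosen as a \emph{consecutive} violating pair occurring as low as possible in $q$ (so that the root-homomorphism property holds for everything strictly below $y$), combined with cardinality comparisons between canonical copies. For Proposition~\ref{prop:hardness} itself, though, no new work is required beyond quoting Lemmas~\ref{lemma:cbranch-conp-hard} and~\ref{lemma:nl-hardness}.
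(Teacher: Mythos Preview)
Your proposal is correct and matches the paper's own proof verbatim: the paper also writes ``Immediate from Lemma~\ref{lemma:cbranch-conp-hard} and~\ref{lemma:nl-hardness}.'' The additional commentary you provide on the content of those lemmas is accurate but not needed for the proposition itself.
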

\begin{proof}
Immediate from Lemma~\ref{lemma:cbranch-conp-hard} and~\ref{lemma:nl-hardness}.
\end{proof}

\section{Extending the Trichotomy}\label{sec:berge}

In this section, we extend the complexity classification for rooted tree queries to larger classes of Boolean conjunctive queries. 
We postpone most proofs to Appendix~\ref{appendix:berge}.

\subsection{From $\rtbcq$ to $\ibcq$}

We define $\ibcq$, a subclass of $\bcq$ that extends $\rtbcq$.

\begin{definition}[$\ibcq$]\label{def:ibcq}
$\ibcq$ is the class of Boolean conjunctive queries~$q$ satisfying the following conditions:
\begin{enumerate}
\item
every atom in $q$ is of the form $R(\underline{x},y_{1},\ldots,y_{n})$ where $x$ is a variable and $y_{1},\ldots,y_{n}$ are symbols (variables or constants) such that no variable occurs twice in the atom; and
\item
if $R(\underline{x},y_{1},\ldots,y_{n})$ and $S(\underline{u},v_{1},\ldots,v_{m})$ are distinct atoms of~$q$, then $x\neq u$.
Note that $R$ and $S$ need not be distinct.
\end{enumerate}
\end{definition}

For a query $q$ in $\bcq$, we define $\qgraph{q}$ as the undirected graph whose vertices are the atoms of~$q$; two atoms are adjacent if they have a variable in common. The connected components of~$q$ are the connected components of $\qgraph{q}$. Note that queries in $\ibcq$, unlike $\rtbcq$, can have more than one connected component. 
The following lemma implies that the complexity of $\cqa{q}$ is equal to the highest complexity of $\cqa{q'}$ over every connected component~$q'$ of~$q$.

\begin{lemma} \label{lemma:conn-hard}
    Let $q$ be a minimal query in $\bcq$ with connected components $q_1$, $q_2$, $\dots$, $q_n$. Then:
    \begin{enumerate}
  \item for every $1 \leq i \leq n$, there exists a first order reduction from the problem $\cqa{q_i}$ to $\cqa{q}$; and
  \item for every database instance $\db$, $\db$ is a ``yes''-instance of the problem $\cqa{q}$ if and only if for every $1 \leq i \leq n$, $\db$ is a ``yes''-instance of $\cqa{q_i}$.
  \end{enumerate}
\end{lemma}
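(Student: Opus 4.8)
The plan is to treat the two items separately, getting item~2 by a direct logical manipulation and item~1 by a ``gadget'' reduction that forces the components other than $q_i$ to be satisfied for free.

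\textbf{Item 2 (and the underlying observation).} Since the connected components $q_1,\dots,q_n$ have pairwise disjoint sets of variables, the first-order sentence $q$ is logically equivalent to $q_1\land\dots\land q_n$ (the existential quantifiers distribute over the conjunction). Because the repairs of a database do not depend on the query, for every repair $\rep$ we have $\rep\models q$ if and only if $\rep\models q_i$ for every $i$. Unfolding the definition of $\cqa{\cdot}$ and swapping the two universal quantifiers (``for every repair'' and ``for every $i$'') yields item~2. Note that minimality is not needed here.

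\textbf{Item 1, the reduction.} Fix $i$. The reduction maps an input $\db$ of $\cqa{q_i}$ to $\db' \defeq \db \cup C$, where $C \defeq \bigcup_{j\neq i}\db_j$ and each $\db_j$ is a \emph{consistent} database, built over fresh constants pairwise disjoint across $j$ and disjoint from $\adom{\db}$, satisfying $q_j$; concretely one may take $\db_j$ to be a canonical copy of $q_j$, which is automatically consistent for the query classes of interest (for $q\in\ibcq$ every atom has a simple, constant-free key, so distinct atoms have distinct keys). If some $q_j$ has no consistent model, then $\cqa{q_j}$—and hence $\cqa{q}$—is the trivial empty problem and the reduction is vacuous. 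Adjoining a fixed finite set of facts over fresh domain elements is a first-order transformation, so this is a first-order reduction; correctness is what remains. First, since $C$ is consistent and, by freshness, no block of $\db'$ contains both a $\db$-fact and a $C$-fact (here one uses that keys in $C$ are constant-free, or one first renames the non-$\constants{q}$ constants of $\db$ away), the repairs of $\db'$ are exactly the sets $\rep\cup C$ for $\rep$ a repair of $\db$, and each such repair satisfies every $q_j$ with $j\neq i$ because it contains $\db_j$. Hence $\rep\cup C\models q$ iff $\rep\cup C\models q_i$.

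\textbf{Item 1, the crux.} It remains to show $\rep\cup C\models q_i$ iff $\rep\models q_i$; ``$\Leftarrow$'' is trivial. For ``$\Rightarrow$'', take a homomorphism $h$ from $q_i$ to $\rep\cup C$. Since $q_i$ is connected and no constant is shared between $\rep$ and $C$ (nor between distinct $\db_j$'s), the image $h(q_i)$ lies entirely within $\rep$ or entirely within a single $\db_j$. In the latter case, renaming the fresh constants of $\db_j$ back to the variables of $q_j$ turns $h$ into a homomorphism witnessing $q_i\leqhomo{}{}q_j$; but then $q$ is equivalent to the query obtained from $q$ by deleting the atoms of the (nonempty) component $q_i$, which has strictly fewer atoms, contradicting minimality of $q$. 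Therefore $h(q_i)\subseteq\rep$, i.e.\ $\rep\models q_i$. Chaining the equivalences: $\db'$ is a ``yes''-instance of $\cqa{q}$ iff every repair $\rep\cup C$ satisfies $q$ iff every repair $\rep$ of $\db$ satisfies $q_i$ iff $\db$ is a ``yes''-instance of $\cqa{q_i}$.

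The main obstacle is exactly this last claim, $\rep\cup C\models q_i \Leftrightarrow \rep\models q_i$: one must rule out a homomorphism from $q_i$ that ``cheats'' by landing, wholly or partly, inside the gadget $C$, and this is where the three hypotheses conspire—freshness of $C$'s constants confines the image of the \emph{connected} query $q_i$ to a single block of facts, and \emph{minimality} of $q$ forbids $q_i\leqhomo{}{}q_j$. The remaining points—existence of a consistent witness $\db_j$ for each $q_j$ (immediate for $\ibcq$) and the clean splitting of blocks between $\db$ and $C$—are routine bookkeeping.
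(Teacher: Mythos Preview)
Your proof is correct and follows essentially the same approach as the paper: for item~1 you adjoin canonical copies of the other components and use connectivity plus minimality to rule out a homomorphism from $q_i$ landing in some $D[q_j]$, exactly as the paper does. You are more explicit than the paper about the bookkeeping (freshness of constants, consistency of the canonical copies, the block-splitting), and you give a direct argument for item~2 where the paper simply cites~\cite{DBLP:conf/pods/KoutrisOW21}; neither difference is material.
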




\begin{proposition} \label{prop:berge}
If~$q$ is a connected minimal conjunctive query in $\ibcq\setminus\rtbcq$,  then $\cqa{q}$ is \LSPACE-hard (and not in \FO); if $q$ is also Berge-acyclic, then $\cqa{q}$ is \coNP-hard.
\end{proposition}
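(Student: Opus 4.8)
The plan is to analyze a connected minimal query $q\in\ibcq\setminus\rtbcq$ by examining exactly how it fails to be a rooted tree query, and to exhibit in each case a gadget construction that reduces a hard problem to $\cqa{q}$. Since $q$ is connected and in $\ibcq$, every atom has a simple, single-variable primary key, and distinct atoms have distinct primary-key variables; so we may still speak of the ``atom $\atomAt{R}{x}{q}$ with key variable $x$'' and build the directed graph $\directedqgraph{q}$ whose vertices are the key variables and which has an edge $x\to y$ whenever $y$ occurs at a non-key position of the atom with key $x$. The failure of $q$ to be in $\rtbcq$ means one of two structural phenomena occurs in $\directedqgraph{q}$: (i) some variable has in-degree $\geq 2$, i.e.\ a variable $z$ occurs at non-key positions of two distinct atoms (``confluence''/DAG case), or (ii) $\directedqgraph{q}$ contains a directed cycle (the ``cyclic'' case); note a variable cannot be forced to be a ``root'' of two components because $q$ is connected. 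First I would carry out a normalization: using minimality, argue that no atom is redundant, so the gadgets below cannot be satisfied in unintended ways.

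For the \LSPACE-hardness (equivalently \LSPACE-completeness of the boundary, hence not in \FO) the plan is to reduce from (the complement of) \textsf{REACHABILITY}, in the spirit of Lemma~\ref{lemma:nl-hardness-subtree}, but now exploiting the non-tree structure directly rather than a self-join. In case (i), let $z$ be a variable occurring at non-key positions of atoms with key variables $x_1\neq x_2$; the two directed paths from the connected query into $z$ give us a ``diamond,'' and the standard CQA first-order-inexpressibility gadget — a single block of size two whose two choices correspond to the two ways of routing into $z$ — lets us encode reachability along a chain of such diamonds. Concretely I would build, for a DAG $G$ with source $s$ and sink $t$, a database using canonical copies $\qcopy{\cdot}{\cdot}{\cdot}$ of the relevant sub-queries of $q$, with one inconsistent block per vertex of $G$ of the form ``take the edge forward'' vs.\ ``stop here,'' exactly as in the proof of Lemma~\ref{lemma:nl-hardness-subtree}; the confluence at $z$ is what makes the two alternatives mutually exclusive for satisfying $q$. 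In case (ii) a directed cycle $x_1\to x_2\to\cdots\to x_k\to x_1$ in $\directedqgraph{q}$ plays the role that a self-loop/short self-join plays for path queries: unrolling the cycle gives arbitrarily long ``paths'' that $q$ can match, and this is precisely the mechanism used to show \LSPACE-hardness for cyclic path-like queries; I would adapt that reduction. In all subcases the correctness proof has the same two-part shape as in Lemma~\ref{lemma:nl-hardness-subtree}: a reachability witness yields a falsifying repair, and absence of reachability forces every repair to contain a canonical copy of $q$.

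For the stronger claim that $\cqa{q}$ is \coNP-hard when $q$ is additionally Berge-acyclic, the plan is to reduce from \textsf{Monotone SAT} as in Lemma~\ref{lemma:cbranch-conp-hard}. Berge-acyclicity rules out the cyclic case (ii) and in fact rules out any shared variable between two atoms other than through a single common variable forming no cycle in the incidence graph; so the only way $q\in\aibcq\setminus\rtbcq$ can fail to be a tree query is that its incidence graph, while acyclic, is not a rooted relation tree in the directed sense — equivalently, $\directedqgraph{q}$ has a variable of in-degree $\geq 2$ but the incidence graph is still a forest, which happens exactly when two atoms share a non-key variable $z$ that is a \emph{leaf-like} shared sink. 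I would use the two atoms $R(\underline{x_1},\ldots,z,\ldots)$ and $S(\underline{x_2},\ldots,z,\ldots)$ as the analogue of the incomparable subtrees $\rootedAt{p}{q},\rootedAt{n}{q}$ in Lemma~\ref{lemma:cbranch-conp-hard}: a variable of $\varphi$ gets a block whose two facts fix $z$ to either ``the $x_1$-side value'' or ``the $x_2$-side value,'' and a clause gets a canonical copy of $q$ minus one of the two shared-sink subqueries, so that the clause is satisfied iff the block chose the matching side. The argument that no unintended homomorphism satisfies $q$ uses minimality together with the fact that, because $z$ is genuinely shared, a repair that makes the ``wrong'' choice on every variable of a clause leaves $q$ with no valuation — this is where Berge-acyclicity is essential, since it guarantees the shared variable is the \emph{only} coincidence and hence the gadget does not accidentally glue back together.

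The main obstacle I anticipate is the case analysis and bookkeeping in the \LSPACE-hardness argument: one must cover (a) the pure DAG/confluence case, (b) the directed-cycle case, and (c) hybrids, and in each case verify that the canonical copies glued along $G$ do not admit a ``shortcut'' homomorphism of $q$ that would spuriously satisfy the query in a repair corresponding to a non-reachable instance. Controlling these shortcuts is exactly the kind of delicate size-counting and ancestor-structure argument carried out in the proof of Lemma~\ref{lemma:nl-hardness-subtree} (``the sizes do not fit''), and extending it from the single-relation self-join setting to the multi-atom $\ibcq$ setting — where several distinct relation names and a genuine DAG/cycle structure interact — is the technically heaviest part; I would isolate it as a separate lemma characterizing precisely which sub-queries of $q$ can have a homomorphism into the gadget pieces, and then read off both the \LSPACE\ and \coNP\ reductions from that characterization.
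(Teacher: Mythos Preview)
Your approach is genuinely different from the paper's, and considerably harder. The paper does \emph{not} build direct reductions from \textsf{REACHABILITY} or \textsf{Monotone SAT}. Instead, it observes that if $q\in\ibcq\setminus\rtbcq$ is connected, then some non-key variable $z$ occurs in two atoms with distinct key variables $x\neq y$; it then passes to the self-join-free version $\sjf{q}$ and invokes the attack-graph machinery of Theorem~\ref{thm:sjf-theorem}. Because every atom in $\ibcq$ has a simple singleton key and distinct atoms have distinct key variables, one gets $R_0^{+,\sjf{q}}=\{x\}$ and $S_0^{+,\sjf{q}}=\{y\}$, so the shared variable $z$ lies outside both closures and the two atoms mutually attack in $\sjf{q}$. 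A cycle in the attack graph already gives \LSPACE-hardness of $\cqa{\sjf{q}}$; if $q$ is Berge-acyclic, the paper shows that $x\vardetermines{q}y$ and $y\vardetermines{q}x$ cannot both hold (otherwise the witnessing FD chains would close up into a Berge cycle), so at least one of the mutual attacks is strong and $\cqa{\sjf{q}}$ is \coNP-hard. The Bridging Lemma (Lemma~\ref{lemma:bridging-lemma}) then transfers both hardness results from $\sjf{q}$ back to~$q$, using minimality of~$q$.

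What this buys the paper is precisely the elimination of everything you flag as the ``main obstacle'': there is no case split into confluence vs.\ directed-cycle, no gadget gluing along a DAG, and no delicate shortcut-homomorphism bookkeeping---all of that is absorbed into the existing self-join-free classification. Your plan is not wrong in spirit, but it re-derives from scratch results that are already packaged in Theorem~\ref{thm:sjf-theorem}, and the parts you leave open (controlling unintended homomorphisms in the general DAG/cycle setting) are exactly where a from-scratch proof would be fragile. If you want to keep a direct argument, the cleanest fix is to notice that your confluence/cycle cases are unified by the single observation ``two atoms share a non-key variable,'' and then to appeal to the Bridging Lemma plus the known attack-graph results rather than hand-rolling new reductions.
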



\begin{proof}[Proof of Theorems~\ref{thm:trichotomy:ext:no-berge} and~\ref{thm:trichotomy:ext}]
Let $q$ be a query in $\ibcq$. Then the minimal query $q^*$ of $q$ is also in $\ibcq$.
If every connected component of $q^*$ is in $\rtbcq$ and satisfies $\cone$, then $\cqa{q}$ is in \FO.
Otherwise, there exists some connected component $q'$ of $q^*$ that is either not in $\rtbcq$, or violates $\cone$,
and $\cqa{q}$ is \LSPACE-hard or \NL-hard by Lemma~\ref{lemma:conn-hard}, Proposition~\ref{prop:berge}, and Theorem~\ref{thm:trichotomy}. 
Assume that $q$ is also Berge-acyclic. If some connected component $q'$ of $q^*$ is not in $\rtbcq$, then $\cqa{q}$ is \coNP-complete; or otherwise, $\cqa{q}$ exhibits a trichotomy by Theorem~\ref{thm:trichotomy}.
\end{proof}

Lemma~\ref{lemma:bridging-lemma} (adapted from~\cite{DBLP:journals/corr/abs-1903-12469}) is essential to the proof of Proposition~\ref{prop:berge}, but is of independent interest.
It relates the complexity of CQA on queries with self-joins to that on self-join-free queries.

Given a query~$q$ in~$\bcq$, a \emph{self-join-free version of~$q$}, denoted $\sjf{q}$, is any self-join-free Boolean conjunctive query obtained from~$q$ by (only) renaming relation names. 
For example, a self-join-free version of $\{R(\underline{x},y), R(\underline{y},x)\}$ is $\{R(\underline{x},y), S(\underline{y},x)\}$.


\begin{lemma}[Bridging Lemma]
Let $q$ be a minimal query in $\bcq$ and $\mathcal{C}$ a complexity class. If $\cqa{\sjf{q}}$ is $\mathcal{C}$-hard, then $\cqa{q}$ is $\mathcal{C}$-hard. 
\label{lemma:bridging-lemma}
\end{lemma}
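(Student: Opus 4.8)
The plan is to reduce $\cqa{\sjf{q}}$ to $\cqa{q}$ by a first-order (in fact, very simple) transformation on database instances, exploiting the fact that $q$ is minimal. Write $\sjf{q}$ for the self-join-free version obtained by renaming occurrences of each repeated relation name $R$ into fresh distinct names $R^{(1)},\ldots,R^{(k_R)}$; so $\sjf{q}$ uses a richer schema than $q$. The core idea: given an instance $\db'$ over the schema of $\sjf{q}$, I would build an instance $\db$ over the schema of $q$ by ``forgetting the superscripts'', i.e.\ replacing every fact $R^{(i)}(\underline{\vec a},\vec b)$ of $\db'$ by the fact $R(\underline{\vec a},\vec b)$ (and, to keep track of which copy a fact came from, decorating the non-key positions or the constants so that facts stemming from different copies do not accidentally coincide or become key-equal). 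One must be careful that key-equal facts in $\db$ come exactly from key-equal facts in $\db'$ within a single copy, so that the repair structure is preserved; this is why one tags constants by the copy index, turning a constant $c$ used in copy $i$ into a fresh constant $c_i$. This map is clearly first-order computable.

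\textbf{Correctness.} I would then prove that $\db'$ is a ``yes''-instance of $\cqa{\sjf{q}}$ if and only if $\db$ is a ``yes''-instance of $\cqa{q}$. The $(\Rightarrow)$ direction is the easy one: any repair $\rep$ of $\db$ lifts to a repair $\rep'$ of $\db'$ by re-attaching superscripts consistently (the tagging of constants makes the inverse map well-defined on facts, and the block structure matches), and a satisfying valuation for $\sjf{q}$ on $\rep'$ projects to a satisfying valuation for $q$ on $\rep$ by dropping superscripts. The $(\Leftarrow)$ direction is where minimality of $q$ is essential. Given a repair $\rep'$ of $\db'$ and its projection $\rep$, suppose $\rep$ satisfies $q$ via a valuation $\theta$. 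Each atom $R(\underline{\vec x},\vec y)$ of $q$ is mapped by $\theta$ to a fact of $\rep$, which came from a unique copy-tagged fact of $\rep'$, say with superscript $i$; this induces a map from atoms of $q$ to ``colours'' (copy indices), equivalently a homomorphism-like assignment. The obstruction is that $\theta$ might send an atom $R(\underline{\vec x},\vec y)$ of $q$ to a fact originating from a copy $R^{(i)}$ that is \emph{not} the copy $R(\underline{\vec x},\vec y)$ sits in inside $\sjf{q}$, so $\theta$ need not be a valuation for $\sjf{q}$ on $\rep'$. Here one uses that the combination of $\theta$ with the colouring yields a homomorphism from $q$ into a ``recolouring'' of $q$; since $q$ is minimal, it is a core, so every endomorphism of $q$ is an automorphism, and one argues that the colouring must therefore be (up to an automorphism of $q$) the identity colouring — whence a genuine satisfying valuation of $\sjf{q}$ on $\rep'$ can be recovered.

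\textbf{Main obstacle.} The delicate point — and the step I expect to take the most care — is exactly this last argument: extracting, from a satisfying valuation of $q$ on the projected repair, a satisfying valuation of $\sjf{q}$ on the original repair. One must set up the tagging of $\db$ carefully enough that the ``colouring'' of $q$'s atoms induced by any satisfying valuation is forced to be consistent (all atoms with relation name $R$ that lie in copy $i$ of $\sjf{q}$ get colour $i$), and then invoke the core property of the minimal query $q$ to rule out ``diagonal'' valuations that mix colours. A clean way to do this is to further enrich the construction: besides tagging constants by copy index, also ensure that the projected facts retain enough structure (e.g.\ via the positions of the tagged constants) that a valuation witnessing $q$ on $\rep$ cannot identify variables or permute atoms in a way that $q$'s automorphism group does not already allow. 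Once the colouring is pinned down, transporting the valuation back through the superscript-forgetting bijection on $\rep'$ and checking that it satisfies every atom of $\sjf{q}$ is routine. Since the reduction is first-order and $\mathcal{C}$-hardness is preserved under first-order (indeed logspace) reductions for all the classes $\mathcal{C}$ we care about ($\LSPACE$, $\NL$, $\PTIME$, $\coNP$), $\cqa{q}$ inherits the $\mathcal{C}$-hardness of $\cqa{\sjf{q}}$.
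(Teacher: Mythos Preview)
Your high-level plan---a first-order reduction from $\cqa{\sjf{q}}$ to $\cqa{q}$ that tags constants so that the provenance of each fact can be recovered, and then uses minimality of $q$ (the core property: every endomorphism is an automorphism) to handle the backward direction---matches the paper's approach exactly. You also correctly identify the backward direction as the delicate step.

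There is, however, a concrete flaw in the specific tagging you propose. Tagging a constant $c$ by the \emph{copy index} $i$ of the relation $R^{(i)}$ it occurs in breaks joins: if a variable $y$ of $q$ is shared between two atoms (say the $i$-th and $j$-th $R$-atoms, or an $R$-atom and an $S$-atom), then a valuation $\mu$ of $\sjf{q}$ with $\mu(y)=b$ will produce facts in which $b$ becomes $b_i$ in one and $b_j$ in the other. In the resulting instance $\db$ there is no single constant to which a valuation of $q$ can send $y$, so even the easy direction (``$\sjf{\rep}$ satisfies $\sjf{q}$ $\Rightarrow$ $\rep$ satisfies $q$'') fails.

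The paper's fix is simple and clean: tag each constant not by the copy index of the atom but by the \emph{query variable} (or constant) sitting at that position. Concretely, for an atom $N(\alpha_1,\ldots,\alpha_n)$ of $\sjf{q}$ with $\pi(N)=R$, a fact $N(f_1,\ldots,f_n)$ is sent to $R(\pair{f_1}{\alpha_1},\ldots,\pair{f_n}{\alpha_n})$. Since a shared variable $y$ gets the same tag $y$ regardless of which atom it sits in, joins are preserved. The second component of every constant now records a variable of $q$, so any valuation $\nu$ of $q$ on $\rep$ induces a map $h\colon\alpha\mapsto(\text{second component of }\nu(\alpha))$, and the structure of $\rep$ forces $h$ to send atoms of $q$ to atoms of $q$, i.e.\ $h$ is an endomorphism. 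Minimality then gives that $h$ is an automorphism, and composing with $h^{-1}$ yields the desired valuation of $\sjf{q}$---exactly the argument you sketch, but now it goes through. Your closing remark about tagging ``via the positions'' is essentially this idea; you just need to commit to tagging by query variable rather than by copy index from the start.
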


The Bridging Lemma is illustrated by Example~\ref{ex:bridging-lemma}.

\begin{example}
\label{ex:bridging-lemma}
For $q_1 = \{R(\underline{x}, y, z), R(\underline{z}, x, y)\}$, we have 
$\sjf{q_1} = \{R_1(\underline{x}, y, z), R_2(\underline{z}, x, y)\}$.
By Theorem~\ref{thm:sjf-theorem}~\cite{KoutrisW15}, $\cqa{\sjf{q_1}}$ is \LSPACE-complete, and thus $\cqa{q_1}$ is \LSPACE-hard by Lemma~\ref{lemma:bridging-lemma}.

For $q_2 = \{R(\underline{x}, z), R(\underline{y}, z)\}$, we have $\sjf{q_2} = \{R_1(\underline{x}, z), R_2(\underline{y}, z)\}$. 
Although by Theorem~\ref{thm:sjf-theorem}~\cite{KoutrisW15}, $\cqa{\sjf{q_2}}$ is \coNP-complete, $\cqa{q_2}$ is in \FO~because $q_2 \equiv q_2'$ where $q_2' = \{R(\underline{x}, z)\}$. 
Lemma~\ref{lemma:bridging-lemma} does not apply here because $q_2$ is not minimal.
\end{example}

\subsection{Open Challenges}\label{sec:beyond}

So far, we have established the \FO-boundary of $\cqa{q}$ for all queries~$q$ in $\ibcq$, and a fine-grained complexity classification for all Berge-acyclic queries in $\ibcq$, which include all rooted tree queries. 
We briefly discuss the remaining syntactic restrictions.

The complexity classification of $\cqa{q}$ for queries $q$ in $\ibcq$ that are not Berge-acyclic is likely to impose new challenges.
In particular, Figueira et al.~\cite{DBLP:conf/icdt/FigueiraPSS23} showed that for $q_1$ in Example~\ref{ex:bridging-lemma} (that is not Berge-acyclic), the complement of $\cqa{q_1}$ is complete for \textsf{Bipartite Matching} under \textsc{LOGSPACE}-reductions.


The restriction imposed by $\ibcq$ that every variable occurs at most once at a primary-key position allows for an elegant graph representation.
We found that dropping this requirement imposes serious challenges.
The following Proposition~\ref{prop:single-pk} hints at the difficulty  of having  to ``correlate two rooted tree branches'' that share the same primary-key variable.


\begin{proposition}
\label{prop:single-pk}
Consider the following queries:
\begin{itemize}
\item $q_1 = \{R(\underline{u}, x_1), R(\underline{x_1}, x_2), S(\underline{u}, y_1), S(\underline{y_1}, y_2)\};$
\item $q_2 = q_1 \cup \{X(\underline{x_2}, x_3)\};$ and
\item $q_3 = q_1 \cup \{X(\underline{x_2}, x_3), Y(\underline{y_2}, y_3)\}.$
\end{itemize}
Then we have $\cqa{q_1}$ is in \FO, $\cqa{q_2}$ is in \NL-hard $\cap$ \LFPL, and $\cqa{q_3}$ is \coNP-complete.
\end{proposition}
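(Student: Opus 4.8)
The plan is to treat the three queries separately, using the machinery already developed for $\ibcq$ together with the Bridging Lemma and the trichotomy criteria $\cone$, $\ctwo$.

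First, observe that none of $q_1,q_2,q_3$ lies in $\ibcq$: the variable $u$ occurs at a primary-key position in two distinct atoms $R(\underline{u},x_1)$ and $S(\underline{u},y_1)$, violating condition~(2) of Definition~\ref{def:ibcq}. So the existing classification theorems do not apply directly, and each case must be argued by hand.

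\textbf{The case $\cqa{q_1}\in\FO$.}
I would give an explicit first-order rewriting. The key structural fact is that in $q_1$ the two ``prongs'' $R(\underline{u},x_1)\wedge R(\underline{x_1},x_2)$ and $S(\underline{u},y_1)\wedge S(\underline{y_1},y_2)$ share \emph{only} the primary-key variable $u$, and both prongs are ``downward-closed'' in the sense that $x_2$ and $y_2$ are leaf variables (no atom constrains them). Because $u$ sits at a key position, in any repair the block $R(\underline{u},*)$ contributes exactly one fact, and likewise for $S(\underline{u},*)$. One then shows that $q_1$ is true in every repair iff there is a constant $c$ such that (i) the block $R(\underline{c},*)$ is nonempty and for the unique repair-choice $R(\underline{c},d)$ the block $R(\underline{d},*)$ is nonempty --- but since any single fact suffices and $x_2$ is unconstrained, this collapses to: \emph{there exists $c$ with $R(\underline{c},*)$ nonempty and, for every fact $R(\underline{c},d)\in\db$, $R(\underline{d},*)$ is nonempty}; and (ii) the analogous condition for $S$ at the same $c$. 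This is a fixed first-order sentence of the shape
\[
\exists c\;\bigl(\exists d\,R(\underline{c},d)\wedge\forall d\,(R(\underline{c},d)\rightarrow\exists e\,R(\underline{d},e))\bigr)\wedge\bigl(\exists d\,S(\underline{c},d)\wedge\forall d\,(S(\underline{c},d)\rightarrow\exists e\,S(\underline{d},e))\bigr),
\]
so $\cqa{q_1}\in\FO$. (The only subtlety is checking that the two conjuncts may independently be verified at the same $c$, which holds because the prongs interact nowhere except at $u$; I would spell this out via the frugal-repair idea of Section~\ref{sec:certaintrace-ptime} restricted to each prong.)

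\textbf{The case $\cqa{q_3}$ is \coNP-complete.}
Membership in \coNP\ is immediate. For hardness I would reduce from \textsf{Monotone SAT} along the lines of Lemma~\ref{lemma:cbranch-conp-hard}. The point is that adding the two terminating atoms $X(\underline{x_2},x_3)$ and $Y(\underline{y_2},y_3)$ makes the two prongs genuinely ``incomparable'': neither prong's bottom subtree has a homomorphism into the other's (an $R$-path capped by $X$ vs.\ an $S$-path capped by $Y$ cannot map either way, since relation names are preserved). Concretely, for each propositional variable $z$ I would create a length-two $R$-chain and a length-two $S$-chain sharing a common key constant, so that any repair of the corresponding block of the \emph{shared} key atom ``chooses'' to complete either the $R$-prong or the $S$-prong, but not both; gadgets for positive and negative clauses then force a satisfying assignment exactly when some repair falsifies $q_3$. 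This mirrors the $\db$ of Fig.~\ref{tbl:instance} but with the role of the two $R$-subtrees played by the $R$-capped and $S$-capped prongs.

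\textbf{The case $\cqa{q_2}$ is \NL-hard $\cap$ \LFPL.}
Here the two prongs are \emph{asymmetric}: the $R$-prong is capped by $X$ (three $R$-ish atoms deep) while the $S$-prong is a bare length-two path. For the \LFPL\ upper bound I would adapt the fixpoint algorithm of Fig.~\ref{fig:algo}: since a bare $S$-path is homomorphically ``dominated'' by any longer compatible structure, the situation is $\cbranch$-like rather than $\neg\cbranch$-like, so a frugal/universal repair exists and the dynamic-programming computation of the set $B$ goes through; one checks the analogue of Lemma~\ref{lemma:correctness} for this two-pronged query and concludes expressibility in \LFPL\ exactly as in Lemma~\ref{prop:certaintrace-ptime}. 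For \NL-hardness I would reduce from (the complement of) \textsf{Reachability} as in Lemma~\ref{lemma:nl-hardness-subtree}: the $R$-chain with its $X$-cap plays the role of the self-joining $R$-atoms $\atomAt{R}{x}{q}\precedes{q}\atomAt{R}{y}{q}$ for which no root homomorphism exists back up the chain, while the $S$-prong is a rigid ``anchor'' at $u$ that keeps the reduction sound. Finally one must rule out $\cqa{q_2}\in\FO$, which follows from the \NL-hardness (under, say, first-order reductions, \NL\ $\not\subseteq$ \FO).

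\textbf{Main obstacle.}
The delicate point is the $\cqa{q_2}$ case: because $u$ is a shared key variable, the two prongs are not independent, so neither the path-query toolbox of~\cite{DBLP:conf/pods/KoutrisOW21} nor the rooted-tree toolbox of this paper applies verbatim. I expect the bulk of the work to be re-proving the frugal-repair lemmas (Lemmas~\ref{lemma:comparable}--\ref{lemma:frugal-no-tree}) in the presence of a second, shorter prong hanging off the same key, and, symmetrically, verifying that the \textsf{Reachability} gadget cannot be ``short-circuited'' through the $S$-prong --- essentially a careful size/relation-name argument of the kind carried out in the proof of Lemma~\ref{lemma:nl-hardness-subtree}.
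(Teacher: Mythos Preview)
Your treatment of $q_1$ and $q_2$ matches the paper almost exactly: the explicit \FO\ rewriting you wrote down for $q_1$ is precisely the one proved correct in the appendix, and for $q_2$ the paper does exactly what you sketch---an \NL-hardness reduction from (co-)\textsf{Reachability} obtained by decorating the path-query gadget with a copy of an $SS$-path at every vertex, together with a modified fixpoint algorithm in which the backward production at the $R$-self-join is guarded by the \FO\ predicate $\varphi_S(c)$ and correctness is re-proved via a refined frugal repair (your ``main obstacle'' paragraph correctly anticipates this work).

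There is a genuine gap in your plan for $q_3$. Your rationale---``adding $X$ and $Y$ makes the two prongs genuinely incomparable: neither prong's bottom subtree has a homomorphism into the other's''---misapplies the $\cbranch$ intuition. The condition $\cbranch$ only ever compares two atoms with the \emph{same} relation name; the $R$-prong and the $S$-prong are never compared, so their mutual non-embeddability is irrelevant. More concretely, your proposed gadget speaks of ``the corresponding block of the shared key atom'' choosing between the $R$-prong and the $S$-prong, but there is no such block: $R(\underline{z},*)$ and $S(\underline{z},*)$ are distinct blocks in distinct relations, and every repair picks one fact from each. So you cannot simply replay the $\neg\cbranch$ reduction of Lemma~\ref{lemma:cbranch-conp-hard} (or the instance of Fig.~\ref{tbl:instance}) with the two $R$-subtrees replaced by the $R$-prong and the $S$-prong.

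The paper's reduction for $q_3$ still uses \textsf{Monotone SAT}, but the mechanism is different and hinges on the self-joins \emph{within} each prong. For each propositional variable $z$, the block $R(\underline{z},*)$ is made inconsistent between a fact leading to a full $RRX$-completion and one leading only to $RX$; symmetrically $S(\underline{z},*)$ chooses between $SSY$ and $SY$. A priori there are four repair choices at $z$, but the combination $(RRX,SSY)$ is ruled out in any falsifying repair because it would satisfy $q_3$ rooted at $z$; the remaining usable choices encode the truth value. A positive clause $C$ is encoded by a consistent $RRX$-path from $C$ together with an inconsistent block $S(\underline{C},*)$ whose facts point to the literals $z$; the query rooted at $C$ is then falsified iff the chosen $z$ has its $S$-block set to the ``long'' $SSY$ side (forcing its $R$-block to the ``short'' $RX$ side), which encodes $z=\mathsf{true}$. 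Negative clauses are symmetric via the $R$-side. The hardness therefore comes from the correlation between the $R$- and $S$-choices at the shared primary-key constant, not from any $\cbranch$-style incomparability between the prongs; your plan would need to be reworked along these lines.
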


The restrictions that no atom contains repeated variables, and that no constant occurs at a primary-key position ease the technical treatment, but  it is likely that they can be dropped at the price of some technical involvement. On the other hand, all our techniques fundamentally rely on the restriction that primary keys are simple.

\section{Conclusion}
\label{sec:conclusion}
We established a fine-grained complexity classification of the problem $\cqa{q}$ for all rooted tree queries~$q$. 
We then lifted our complexity classification to a larger class of~queries.
A notorious open problem in consistent query answering is Conjecture~\ref{conj:dichotomy}, which conjectures that for every query $q$ in $\bcq$, $\cqa{q}$ is either in \PTIME\ or \coNP-complete. 
Despite our progress, this problem remains open even under the restriction that all primary keys are simple. 

\smallskip
\noindent \textbf{Acknowledgements.} The authors thank the anonymous reviewers for their constructive feedback and comments. 
This work is supported by the National Science Foundation under grant IIS-1910014 and the Anthony Klug NCR Fellowship.


  \bibliographystyle{abbrv}
  \bibliography{reference}
  \clearpage
\appendix

\section{Missing Proofs in Section~\ref{sec:trichotomy}}
\label{app:trichotomy}

\begin{proof}[Proof of Lemma~\ref{lemma:branch-to-root-homomorphism}]
We denote
$$p = \rewind{q}{y}{x}{R} = \formula{q \setminus \rootedAt{y}{q}} \cup f(\rootedAt{x}{q}),$$
for some isomorphism $f$ that maps every variable in $\rootedAt{x}{q}$ to a fresh variable, except for $x$, which we have $f(x) = y$.

Assume first that $\rootedAt{y}{q} \leqhomo{y}{x} \rootedAt{x}{q}$, witnessed by the homomorphism $h$ with $h(y) = x$. 
It is easy to verify that the homomorphism $g: \queryvars{q} \rightarrow \queryvars{p}$ with
$$
g(z) = \begin{cases}
z & \text{ if $z \in \queryvars{q \setminus \rootedAt{y}{q}}$}, \\
f(h(z)) & \text{ otherwise}
\end{cases}
$$
is a homomorphism from $q$ to $p$.

Assume there is a homomorphism $h: \queryvars{q} \rightarrow \queryvars{p}$ from $q$ to $p$. 
Hence 
$$h(q) = h(q \setminus \rootedAt{y}{q}) \cup h(\rootedAt{y}{q}) \subseteq \formula{q \setminus \rootedAt{y}{q}} \cup f(\rootedAt{x}{q}).$$
Note that $q$ is minimal, i.e., there is no automorphism $\alpha$ such that $\alpha(q) \subsetneq q$.
If $h(y) = y$, since $q$ is minimal, we have $h(q \setminus \rootedAt{y}{q}) = q \setminus \rootedAt{y}{q}$, and we have $h(\rootedAt{y}{q}) \subseteq f(\rootedAt{x}{q})$. Thus $\rootedAt{y}{q} \leqhomo{y}{x} \rootedAt{x}{q}$, witnessed by the homomorphism $g = f^{-1} \circ h $ with $g(y) = f^{-1}(h(y)) = f^{-1}(y) = x$, as desired.

Suppose for contradiction that $h(y) \neq y$. In this case, we have $h(q \setminus \rootedAt{y}{q}) \cap f(\rootedAt{x}{q}) \neq \emptyset$.

We argue that $y = f(x) \precedes{p} h(y)$. 
Case (i) Assume $h(y) \precedes{p} y=f(x)$ holds. 
Then $h$ maps the unique path of nodes from $r$ to $y$ in $q$
to the unique path from $h(r)$ to $h(y)$ in $p$. While we have $r = h(r)$ or $r \precedes{p} h(r)$,
but since $h(y) \precedes{p} y$, this is not possible because the path from $h(r)$ to $h(y)$ in $p$
is strictly shorter than the path from $r$ to $y$ in $q$.
Case (ii) Assume $h(y) \vardisjoint{p} y=f(x)$ holds. Let $y_0 = y$, and for each $i \geq 1$, $y_i = h(y_{i-1})$.
We argue that variables $y_0$, $y_1$, $\dots$ are all distinct, thereby
reaching a contradiction to the finite size of $q$.
Assume first that $y_1$ is a left sibling of $y_0$ in $q$: 
for the greatest common ancestor $y^*$ of $y_1$ and $y_0$, there
is an atom $R(\underline{y^*}, .., y_l, .., y_r)$ such that $y_l$ and $y_r$ are ancestors of
$y_1$ and $y_0$. The arguments for the case where $y_1$ is a right sibling of $y_0$
in $q$ is similar.
Note that $y_1$ appears in both $p$ and $q$ and its subtree is not affected by the
rewinding operation since $y_1 \vardisjoint{p} y_0$. Since $y_1$ is a left
sibling of $y_0$ and that the children of rooted trees are ordered,  
$h(y_1)$ is a left sibling of $h(y_0)$, that is $y_2$
is a left sibling of $y_1$ in $q$, and this process continues. Since each $y_
{i+1}$ is a left sibling of $y_{i}$, the variables need to be distinct, or
otherwise there is some $y_{j+1}$ is a right sibling of $y_{j}$, a contradiction.

Let $\atomAt{T}{z}{q}$ be the greatest common ancestor of $\atomAt{R}{x}{q}$ and $\atomAt{R}{y}{q}$ in $q$ and let $u$ and $v$ be variables in $\atomAt{T}{z}{q}$ such that $u \precedes{q} x$ and $v \precedes{q} y$ and $u \vardisjoint{q} v$. Hence $z$ appears in both $q$ and $p$. 
Hence $z \precedes{p} h(z)$ but $h(z) \neq z$. 
We have 
$$\card{\rootedAt{u}{q}} + \card{\rootedAt{v}{q}} + 1 \leq \card{\rootedAt{z}{q}} \leq \card{\rootedAt{h(z)}{p}},$$
because the homomorphism maps $\rootedAt{z}{q}$ to the subtree of $p$, rooted at $h(z)$.

We show that $v \precedes{q} h(z)$. Since $z \precedes{q} h(z)$ and $v$ is the immediate
child of $z$, we can have either $v \precedes{q} h(z)$ or $v \vardisjoint{q} h(z)$. Suppose for
contradiction that $v \vardisjoint{q} h(z)$, then $h(z) \notin \{u, v\}$. Then, $\rootedAt{h(z)}{p} = \rootedAt{h(z)}{q}$ since the rewinding leaves $\rootedAt{h(z)}{q}$ intact. But that implies
$h(\rootedAt{z}{q}) \subseteq \rootedAt{h(z)}{q}$ with $z \precedes{q} h(z)$, a contradiction.

Therefore, we have 
$$\card{\rootedAt{u}{q}} + \card{\rootedAt{v}{q}} + 1 \leq \card{\rootedAt{h(z)}{p}} \leq \card{\rootedAt{v}{p}} \leq \card{\rootedAt{v}{q}} - \card{\rootedAt{y}{q}} + \card{\rootedAt{x}{q}},$$
where the second inequality follows by construction of rewinding that replaces
$\rootedAt{y}{q}$ with $\rootedAt{x}{q}$.

This yields
$$0 \leq \card{\rootedAt{u}{q}} - \card{\rootedAt{x}{q}} \leq - \card{\rootedAt{y}{q}} - 1 < 0,$$
a contradiction.
\end{proof}

\section{Missing Proofs in Section~\ref{sec:certaintrace-ptime}}\label{app:certaintrace-ptime}

We first show that the formula in Fig.~\ref{fig:algo} propagate on root homomorphism.

\begin{lemma}
\label{lemma:formula-via-homo}
Let $q$ be a rooted tree query and $\db$ a database instance. 
Then for constants $c, d_1,d_2,\dots,d_n\in\adom{\db}$ where $\vec{d} = \langle d_1, d_2,\dots, d_n \rangle$ and any two atoms $\atomAt{R}{x}{q}$ and $\atomAt{R}{y}{q}$ with $\rootedAt{y}{q} \leqhomo{y}{x} \rootedAt{x}{q}$, the following statements hold:
\begin{enumerate}
\item if $\factformula{R}{c}{\vec{d}}{x}$ is true, then $\factformula{R}{c}{\vec{d}}{y}$ is true; and
\item if $\pair{c}{x} \in B$, then $\pair{c}{y} \in B$.
\end{enumerate}
\end{lemma}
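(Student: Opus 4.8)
The plan is to prove items~(1) and~(2) \emph{together}, by induction on $\card{\rootedAt{y}{q}}$, the size of the source subtree. Throughout I will use that $\mytotalorder{q}$ is a total preorder on the $R$-atoms of $q$ (Proposition~\ref{prop:totalorder}), so I take $q$ to satisfy $\cbranch$ — the standing hypothesis of Section~\ref{sec:certaintrace-ptime}; without it the statement is in fact false (e.g.\ on $q = S(R(R(U)),R(U))$). The case where $R$ is unary is trivial: then $\rootedAt{x}{q} = \rootedAt{y}{q} = R$ as strings, and both $\pair{c}{x}$ and $\pair{c}{y}$ enter $B$ exactly when $R(\underline{c})\in\db$; so assume $\atomAt{R}{x}{q} = R(\underline{x},x_1,\dots,x_n)$ and $\atomAt{R}{y}{q} = R(\underline{y},y_1,\dots,y_n)$ with $n\geq 1$.

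The first step is a reformulation of $\factformulapred$. Since every backward production strictly climbs the finite chain of $R$-ancestors, unfolding the recursive definition in Fig.~\ref{fig:algo} yields, by a side induction on the number of $R$-atoms strictly above $x$,
$$\factformula{R}{c}{\vec{d}}{x}\text{ is true}\iff\text{some }R\text{-atom }R(\underline{v},v_1,\dots,v_n)\text{ with }v=x\text{ or }\atomAt{R}{v}{q}\precedes{q}\atomAt{R}{x}{q}\text{ satisfies }\bigwedge_{1\leq i\leq n}\pair{d_i}{v_i}\in B.$$
For item~(1), assume the left-hand side and fix such a witness $v$. Then $\atomAt{R}{v}{q}\mytotalorder{q}\atomAt{R}{x}{q}$ by Definition~\ref{def:totalorder}, while the hypothesis $\rootedAt{y}{q}\leqhomo{y}{x}\rootedAt{x}{q}$ gives $\atomAt{R}{x}{q}\mytotalorder{q}\atomAt{R}{y}{q}$; transitivity yields $\atomAt{R}{v}{q}\mytotalorder{q}\atomAt{R}{y}{q}$. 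Unwinding Definition~\ref{def:totalorder} once more, either $\atomAt{R}{v}{q}\precedes{q}\atomAt{R}{y}{q}$ — then $\bigwedge_i\pair{d_i}{v_i}\in B$ already makes $\factformula{R}{c}{\vec{d}}{v}$ true, and this is a backward disjunct of $\factformula{R}{c}{\vec{d}}{y}$, so we are done — or $\rootedAt{y}{q}\leqhomo{y}{v}\rootedAt{v}{q}$. In the latter case the witnessing homomorphism sends $\atomAt{R}{y}{q}$ to $\atomAt{R}{v}{q}$, hence $y_i\mapsto v_i$ and $\rootedAt{y_i}{q}\leqhomo{y_i}{v_i}\rootedAt{v_i}{q}$ for each $i$; since $\card{\rootedAt{y_i}{q}}<\card{\rootedAt{y}{q}}$, item~(2) of the induction hypothesis applied to $(v_i,y_i)$ gives $\pair{d_i}{y_i}\in B$ for every $i$ (this is immediate from the initialization step when $y_i$ is labeled $\bot$ or a constant), so the forward disjunct of $\factformula{R}{c}{\vec{d}}{y}$ fires.

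For item~(2): if $\pair{c}{x}$ entered $B$ via the initialization step then $x$ is a leaf atom, putting us back in the trivial unary case. Otherwise it entered via the iterative rule, so $R(\underline{c},\vec{e})\in\db$ for some $\vec{e}$ and $\factformula{R}{c}{\vec{e}}{x}$ holds for every $\vec{e}$ with $R(\underline{c},\vec{e})\in\db$; by item~(1), which is available at the current value of $\card{\rootedAt{y}{q}}$, $\factformula{R}{c}{\vec{e}}{y}$ holds for each such $\vec{e}$ as well, so the iterative rule applied to $\atomAt{R}{y}{q}$ (same relation name, same arity) adds $\pair{c}{y}$ to $B$.

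I expect the main obstacle to be getting the induction to bottom out. The naive choice — induction on $\card{\rootedAt{x}{q}}$, the target — does not work, because the witness $v$ produced by the reformulation may be a \emph{strict} $R$-ancestor of $x$, and then $\rootedAt{v}{q}$ is strictly \emph{larger} than $\rootedAt{x}{q}$, so the recursion in item~(1) fails to descend. The trick is to push that case through the total preorder $\mytotalorder{q}$, rephrasing it as a root homomorphism $\rootedAt{y}{q}\leqhomo{y}{v}\rootedAt{v}{q}$ and then handling it componentwise on the children $(v_i,y_i)$; progress is then measured by $\card{\rootedAt{y}{q}}$, which genuinely shrinks. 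This is also precisely the point at which $\cbranch$ is indispensable, via the transitivity of $\mytotalorder{q}$.
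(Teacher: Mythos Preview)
Your proof is correct and follows the same overall scheme as the paper: a simultaneous induction on items~(1) and~(2), with the induction parameter measuring $\rootedAt{y}{q}$ (you use $\card{\rootedAt{y}{q}}$, the paper uses the height of $\atomAt{R}{y}{q}$), and with $\cbranch$ as an unstated standing hypothesis of Section~\ref{sec:certaintrace-ptime}. The one organizational difference is in how the backward disjunct is discharged: the paper, upon hitting some $\atomAt{R}{u}{q}\precedes{q}\atomAt{R}{x}{q}$ with $\factformula{R}{c}{\vec{d}}{u}$ true, re-enters its own case analysis with $u$ in place of $x$ and chases up the chain of $R$-ancestors of $x$ via an auxiliary termination argument; you instead unfold $\factformulapred$ once into a flat disjunction over all $R$-ancestors $v$ of $x$ and then invoke transitivity of $\mytotalorder{q}$ (Proposition~\ref{prop:totalorder}) in one shot. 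This is a mild streamlining that also makes the reliance on $\cbranch$ more visible.
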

\begin{proof}
We show both (1) and (2) by an induction on the height $k$ of the atom $\atomAt{R}{y}{q}$ in $q$.

\begin{itemize}
\item Basis $k = 0$. In this case, $y$ is a leaf variable of $q$ and (1) holds vacuously.
Assume that the label of $y$ is $L$, then there is an atom $L(\underline{y})$ in $q$. Then there must be an atom $L(\underline{x})$ in $q$. 
From $\pair{c}{x} \in B$, we have $L(\underline{c}) \in \db$, and thus $\pair{c}{y} \in B$ by the initialization step.

\item Inductive step. Assume that both (1) and (2) holds if the height of $\rootedAt{y}{q}$ is less than $k$. Consider the case where the height of $\rootedAt{y}{q}$ is $k$. 

First we show (1) holds. Assume that $\factformula{R}{c}{\vec{d}}{x}$ holds. Let $\atomAt{R}{x}{q} = R(\underline{x}, x_1, x_2, \dots, x_n)$ and $\atomAt{R}{y}{q} = R(\underline{y}, y_1, y_2, \dots, y_n)$. 
Consider two cases.

\begin{itemize}
\item Case (I) that the following formula is true
\begin{equation}
\label{eq:forward-holds}
\bigwedge_{1 \leq i \leq n} \pair{d_i}{x_i} \in B.
\end{equation}
To show $\factformula{R}{c}{\vec{d}}{y}$ holds, it suffices to show 
$$\bigwedge_{1 \leq i \leq n} \pair{d_i}{y_i} \in B.$$
Consider any $y_i$. If $y_i$ is a leaf variable with label $\bot$, then $\pair{d_i}{y_i} \in B$ by the initialization step.
Otherwise, there is an atom $\atomAt{T}{y_i}{q}$ in $q$.
Since $\rootedAt{y}{q} \leqhomo{y}{x} \rootedAt{x}{q}$, there is some atom $\atomAt{T}{x_i}{q}$ in $q$ such that $\rootedAt{y_i}{q} \leqhomo{y_i}{x_i} \rootedAt{x_i}{q}$ and $\pair{d_i}{x_i} \in B$, by Equation~(\ref{eq:forward-holds}). Since the height of $\atomAt{T}{y_i}{q}$ is less than $k$, by the inductive hypothesis for (2), we have $\pair{d_i}{y_i} \in B$.

\item Case (II) that there is some atom $\atomAt{R}{u}{q} \precedes{q} \atomAt{R}{x}{q}$, such that $\factformula{R}{c}{\vec{d}}{u}$ is true.

If $\atomAt{R}{u}{q} \precedes{q} \atomAt{R}{y}{q}$, then $\factformula{R}{c}{\vec{d}}{y}$ holds. Otherwise, we must have $\atomAt{R}{u}{q} \vardisjoint{q} \atomAt{R}{y}{q}$. Indeed, if not, we would have $\atomAt{R}{y}{q} \precedes{q} \atomAt{R}{u}{q} \precedes{q} \atomAt{R}{x}{q}$, but $\rootedAt{y}{q} \leqhomo{y}{x} \rootedAt{x}{q}$, a contradiction. 

We argue that $\rootedAt{y}{q} \leqhomo{y}{u} \rootedAt{u}{q}$. If not, then by $\cbranch$, we have $\rootedAt{u}{q} \leqhomo{u}{y} \rootedAt{y}{q} \leqhomo{y}{x} \rootedAt{x}{q}$, but $\atomAt{R}{u}{q} \precedes{q} \atomAt{R}{x}{q}$, a contradiction. 

Note that we just established $\factformula{R}{c}{\vec{d}}{u}$ is true and $\rootedAt{y}{q} \leqhomo{y}{u} \rootedAt{u}{q}$ for $\atomAt{R}{u}{q} \precedes{q} \atomAt{R}{x}{q}$. 
If Case (I) holds when $\factformula{R}{c}{\vec{d}}{u}$ is true, then $\factformula{R}{c}{\vec{d}}{y}$ is true, as desired. 
Otherwise, by the previous argument in Case (II), either $\factformula{R}{c}{\vec{d}}{y}$ is true as desired, or there is another atom $\atomAt{R}{w}{q}$ such that $\atomAt{R}{w}{q} \precedes{q} \atomAt{R}{u}{q} \precedes{q} \atomAt{R}{x}{q}$ and $\rootedAt{y}{q} \leqhomo{y}{w} \rootedAt{w}{q}$. Since there are only finitely many $R$-atoms in $q$, this process must terminate and show that $\factformula{R}{c}{\vec{d}}{y}$ is true.   
\end{itemize}

For (2), assume that $\pair{c}{x} \in B$. For every fact $R(\underline{c}, \vec{d}) \in \db$, $\factformula{R}{c}{\vec{d}}{x}$ holds. By (1), $\factformula{R}{c}{\vec{d}}{y}$ holds for every $R(\underline{c}, \vec{d}) \in \db$. Hence $\pair{c}{y} \in B$.
\end{itemize}

The proof is now complete.
\end{proof}

\begin{proof}[Proof of Lemma~\ref{lemma:chain-property}]
The lemma follows from Lemma~\ref{lemma:formula-via-homo} if $\rootedAt{y}{q} \leqhomo{y}{x} \rootedAt{x}{q}$. Assume that $\atomAt{R}{x}{q} \precedes{q} \atomAt{R}{y}{q}$, and both (1) and (2) are straightforward by definition of $\factformula{R}{c}{\vec{d}}{y}$.
\end{proof}

\section{Missing Proofs in Section~\ref{sec:berge}}
\label{appendix:berge}


\begin{proof}[Proof of Lemma~\ref{lemma:conn-hard}]
For item (1), let $\db$ be an instance for $\cqa{q_i}$ and construct an instance $\db' = \db \cup \bigcup_{j \neq i}D[q_j] $ for $\cqa{q}$, where $D[q_j]$ is a canonical copy of $q_j$. Clearly, $\db'$ can be constructed in \FO. Next, we show that $\cqa{q_i}$ is true on $\db$ if and only if $\cqa{q}$ is true on $\db'$.

Assume that $\db$ is a ``yes''-instance for $\cqa{q_i}$. Let $\rep'$ be any repair of $\db'$. 
We have $\bigcup_{j \neq i }D[q_j] \subseteq \rep'$, since each $D[q_j]$ is consistent. 
Thus $\rep = \rep' \setminus \bigcup_{j \neq i }D[q_j]$ is a repair of $\db$, and we have that $\rep$ satisfies $q_i$. Then $\rep'$ satisfies $q$, since $\rep$ also contains $D[q_j]$, which satisfies $q_j$ for $j \neq i$. 

Assume that $\db'$ is a ``yes''-instance for $\cqa{q}$. 
Let $\rep$ be any repair of $\db$. It remains to show that $\rep$ satisfy $q_i$.
Let $\rep' = \rep \cup \bigcup_{j \neq i} D[q_j]$. 
Hence $\rep'$ is a repair of $\db'$, and and there is a valuation $\mu$ such that $\mu(q) \subseteq \rep'$.
It suffices to show that $\mu(q_i) \subseteq \rep$. Suppose not, since each $q_i$ is connected and each $D[q_j]$ is connected, we must have $\mu(q_i) \subseteq D[q_j]$ for some $j \neq i$, which implies a homomorphism from $q_i$ to $q_j$, contradicting that $q$ is minimal. 
Thus $\mu(q_i) \subseteq \rep$, as desired.
\smallskip

Item (2) is proved in Lemma B.1 in~\cite{DBLP:conf/pods/KoutrisOW21}.	
\end{proof}

\begin{proof}[Proof of Bridging Lemma]
For each atom $N(\underline{\vec{x}}, \vec{y})$ in $\sjf{q}$, we denote $\pi(N) = R$ if $R(\underline{\vec{x}}, \vec{y})$ in $q$.

We present a reduction from $\cqa{\sjf{q}}$ to $\cqa{q}$ in \FO.

Let $\sjf{\db}$ be an instance for $\cqa{\sjf{q}}$ and $N(\alpha_1, \alpha_2, \dots, \alpha_n)$ an atom in $\sjf{q}$. Consider a mapping $\phi$ from facts to facts that for any fact $f = N(f_1, f_2, \dots, f_n)$ in $\sjf{\db}$, $$\phi(f) = \pi(N)(\langle f_1, \alpha_1 \rangle, \langle f_2, \alpha_2 \rangle, \dots, \langle f_n, \alpha_n \rangle)$$ where each $\langle f_i, \alpha_i \rangle$ is a fresh constant such that $\langle f_i, \alpha_i \rangle = \langle f_j, \alpha_j \rangle$ if and only if $f_i = f_j$ and $\alpha_i = \alpha_j$. Let $\db = \{\phi(f) \mid f \in \sjf{\db}\}$. 

We first show that $\phi$ is bijective from $\sjf{\db}$ to $\db$. By construction, $\phi$ is onto. Suppose $\phi$ is not injective, then there exist two distinct facts $f = R(f_1, f_2, \dots, f_n)$ and $g = S(g_1, g_2, \dots, g_m)$ from atoms $R(\alpha_1, \alpha_2, \dots, \alpha_n)$ and $S(\beta_1, \beta_2, \dots, \beta_m)$ in $\sjf{q}$ such that $\phi(f) = \phi(g)$. We then have $m = n$, $\pi(R) = \pi(S)$ and for each $1 \leq i \leq n$, $\langle f_i, \alpha_i \rangle = \langle g_i, \beta_i \rangle $, implying that $\pi(R)(\alpha_1, \alpha_2, \dots, \alpha_n) = \pi(S)(\beta_1, \beta_2, \dots, \beta_m)$ in $q$, but $q$ is minimal, a contradiction.

We show that $\sjf{\db}$ is a ``yes''-instance for $\cqa{\sjf{q}}$ if and only if $\db$ is a ``yes''-instance for $\cqa{q}$. Towards this end, let $\sjf{\rep}$ be a repair of $\sjf{\db}$, and consider the set $\rep = \{\phi(f) \mid f \in \sjf{\rep}\}$. It is easy to see that $\rep$ is a repair of $\db$. Hence it is sufficent to show that $\sjf{\rep}$ satisfies $\sjf{q}$ if and only if $\rep$ satisfies $q$.


Therefore, $\sjf{\rep}$ satisfies $\sjf{q}$ if and only if there exists a valuation $\mu$ such that $\mu(\sjf{q}) \subseteq \sjf{\rep}$. That is, for every fact $f = R(\mu(\alpha_1), \mu(\alpha_2), \dots, \mu(\alpha_n))$ in $\sjf{\rep}$, $\phi(f) \in \rep$. Since $\phi$ is bijective, this is equivalent to 
$\phi(\mu(q)) \subseteq \rep$. This shows that $\rep$ satisfies $q$.
The other direction follows similarly since $\phi$ is bijective.
\end{proof}

\myparagraph{Attacks}
Let $q$ be a self-join-free Boolean CQ. 
For every atom $F \in q$, we define $F^{+,q}$ as the set of all variables in $q$ that are functionally determined by $\mathsf{key}(F)$ with respect to all functional dependencies of the form $\mathsf{key}(G) \rightarrow \vars{G}$ with $G\in q\setminus\{F\}$. 
Following~\cite{KoutrisW15}, the \emph{attack graph} of $q$ is a directed graph whose vertices are the atoms of~$q$. There is a directed edge, called \emph{attack}, from $F$ to $G$ ($F\neq G$), written $F \attacks{{q}} G$, if there exists a path between $F$ and $G$ in the query such that  every two adjacent atoms share a variable not in $F^{+,q}$. The attack is called {\em weak} if 
$\mathsf{key}(F) \rightarrow \mathsf{key}(G)$, otherwise it is called {\em strong}. It was proved in~\cite{KoutrisW15} that for a self-join-free Boolean CQ $q$, $\cqa{q}$ is \coNP-hard if and only if there exist two atoms $F \neq G$ that attack each other and at least one of the attacks is strong.

\smallskip

We can now prove the proposition.

\begin{proof}[Proof of Proposition~\ref{prop:berge}]
Let $q$ be a minimal connected query in $\ibcq$.

Assume that $q$ is not a rooted tree query.
Then, there exist two atoms $R(\underline{x}, \dots, z, \dots)$ and $S(\underline{y}, \dots, z, \dots)$ with $x \neq y$ (and possibly $R = S$) in $q$. Consider now $\sjf{q}$, and let $R_0$ and $S_0$ be the corresponding atoms of $R$ and $S$ in $\sjf{q}$.

Since $q$ satisfies (1) and (3), so does $\sjf{q}$, and we have $R_0^{+,\sjf{q}} = \{x\}$ and $S_0^{+,\sjf{q}} = \{y\}$.  Hence  $R_0 \attacks{\sjf{q}} S_0$, and similarly, $S_0 \attacks{\sjf{q}} R_0$. 
By~\cite{KoutrisW15} $\cqa{\sjf{q}}$ is \LSPACE-hard (due to this cycle in the attack graph of $\sjf{q}$), and so is $\cqa{q}$ by Lemma~\ref{lemma:bridging-lemma}.

Next we additionally assume that $q$ is Berge-acyclic, that is, $q\in\aibcq$.
We thus have either  $x \not\vardetermines{q} y$ or $y \not\vardetermines{q} x$.
Indeed, otherwise there exist atoms $R_0, R_1, \dots, R_n$ and $S_0, S_1, \dots, S_m$ and variables 
$x_0, x_1,x_2,\dots, x_{n+1}$, $y_0, y_1, \dots, y_{m+1}$ in $\sjf{q}$ where $x_0 = x$, $x_{n+1} = y$, $y_0 = y$, $y_{m+1} = x$ such that $\sjf{q}$ contains atoms $R_i(\underline{x_i}, \dots, x_{i+1}, \dots)$ for every $0 \leq i \leq n$ and $S_i(\underline{y_i}, y_{i+1})$ for every $0 \leq i \leq m$. Then, 
$$\{x_0, R_0, x_1, R_1, \dots, R_{n}, x_{n+1} (=y=y_0), S_{0}, y_1, S_1, \dots, S_{m}, y_{m+1} (= x = x_0) \}$$ 
is a Berge-cycle in $\sjf{q}$, a contradiction to that $\sjf{q}$ (and $q$) are Berge-acyclic.
This implies that at least one of the two attacks is strong. 
Hence, applying the result from~\cite{KoutrisW15}, $\cqa{\sjf{q}}$ is \coNP-hard (due to this strong cycle in the attack graph of $\sjf{q}$), and so is $\cqa{q}$ by Lemma~\ref{lemma:bridging-lemma}.
\end{proof}

We define two first-order formula $\varphi_R(u)$ and $\varphi_S(u)$:
\begin{align*}
\varphi_R(u) & = \exists y R(\underline{u}, y) \land \forall y \formula{R(\underline{u}, y) \rightarrow \exists z R(\underline{y}, z)} \\
\varphi_S(u) & = \exists y S(\underline{u}, y) \land \forall y \formula{S(\underline{u}, y) \rightarrow \exists z S(\underline{y}, z)} 
\end{align*}

\begin{lemma}[\cite{DBLP:conf/pods/KoutrisOW21}]
\label{lemma:easy}
For $q = R(\underline{c}, y), R(\underline{y}, z)$, the \FO-rewriting of the query $\cqa{q}$  is $\varphi_R(c)$.
\end{lemma}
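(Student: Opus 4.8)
The plan is to argue directly from the structure of repairs. First I would observe that since $c$ is a constant occupying the primary-key position of the first atom $R(\underline{c},y)$, any valuation witnessing $q$ in a repair $\rep$ must map that atom to a fact of the form $R(\underline{c},d)$; because $\rep$ is consistent, it contains at most one fact with relation $R$ and key $c$, so the value of $y$ is forced to the unique $d$ with $R(\underline{c},d)\in\rep$ (if any). Hence $\rep\models q$ if and only if (i) the block $R(\underline{c},*)$ is nonempty, so $\rep$ contains a fact $R(\underline{c},d)$, and (ii) the block $R(\underline{d},*)$ of $\db$ is nonempty, so $\rep$ contains some fact $R(\underline{d},e)$ witnessing the second atom (recall that every nonempty block of $\db$ has a representative in every repair).

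Next I would lift this to a characterization of $\cqa{q}$ on $\db$. For the ``only if'' direction: if the block $R(\underline{c},*)$ is empty, then every repair contains no $R$-fact with key $c$ and thus falsifies $q$; and if some $R(\underline{c},d)\in\db$ has an empty block $R(\underline{d},*)$, then the repair that selects $R(\underline{c},d)$ from the $c$-block and arbitrary facts from all other blocks falsifies $q$ by the previous paragraph. For the ``if'' direction: if $R(\underline{c},*)$ is nonempty and every $R(\underline{c},d)\in\db$ has a nonempty block $R(\underline{d},*)$, then an arbitrary repair $\rep$ selects some $R(\underline{c},d)$, and since $R(\underline{d},*)$ is nonempty it also selects some $R(\underline{d},e)$, so $\rep\models q$. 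Therefore $\cqa{q}$ holds on $\db$ if and only if $R(\underline{c},*)$ is nonempty and for every $R(\underline{c},d)\in\db$ the block $R(\underline{d},*)$ is nonempty.

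Finally I would note that this condition is literally expressed by $\varphi_R(c)=\exists y\,R(\underline{c},y)\land\forall y\,\formula{R(\underline{c},y)\rightarrow\exists z\,R(\underline{y},z)}$: the first conjunct states that $R(\underline{c},*)$ is nonempty, and the second states that every fact $R(\underline{c},d)\in\db$ admits a witness $R(\underline{d},e)$. Since $\varphi_R(c)$ is a fixed first-order sentence, this establishes that it is an \FO-rewriting of $\cqa{q}$. There is no genuine obstacle here; the lemma is elementary, and the only point requiring care is that a constant in a primary-key position forces the value of $y$ in any repair, so that a repair has no ``branching choice'' among $R$-facts with key $c$ beyond which single one it retains.
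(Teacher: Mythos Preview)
Your argument is correct. The paper does not actually prove this lemma; it merely cites it from~\cite{DBLP:conf/pods/KoutrisOW21} and states it without proof. Your direct elementary argument---forcing $y$ to the unique value $d$ with $R(\underline{c},d)\in\rep$, and then case-splitting on whether the block $R(\underline{d},*)$ is empty---is exactly the standard justification and suffices.
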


\begin{lemma}
\label{lem:ex1}
Let $q = \{R(\underline{u}, x_1), R(\underline{x_1}, x_2), S(\underline{u}, y_1), S(\underline{y_1}, y_2)\}$.
Then $\cqa{q}$ is in \FO.
\end{lemma}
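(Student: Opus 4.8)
The query~$q$ is \emph{not} in $\ibcq$, since the variable~$u$ occurs at a primary-key position in the two distinct atoms $R(\underline{u},x_1)$ and $S(\underline{u},y_1)$; hence Theorem~\ref{thm:trichotomy} does not apply directly and a bespoke argument is needed. The plan is to exploit that the relation names~$R$ and~$S$ are disjoint, so that~$q$ decomposes into two independent path-shaped branches, each analyzable with the tools already developed for $\rtbcq$. We may assume the input instance~$\db$ uses only $R$- and $S$-facts, and write $\db=\db_R\uplus\db_S$ for its partition into $R$-facts and $S$-facts. Since $R\neq S$, the blocks of~$\db$ are exactly the blocks of~$\db_R$ together with those of~$\db_S$, so every repair of~$\db$ has the form $\rep_R\cup\rep_S$ with $\rep_R$ a repair of~$\db_R$ and $\rep_S$ an independently chosen repair of~$\db_S$. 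Unwinding the definition of query satisfaction, $\rep_R\cup\rep_S\models q$ if and only if $A(\rep_R)\cap B(\rep_S)\neq\emptyset$, where $A(\rep_R)=\{c\mid R(\underline{c},d_1),R(\underline{d_1},d_2)\in\rep_R\text{ for some }d_1,d_2\}$ and $B(\rep_S)$ is defined symmetrically using~$S$.

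Next I would consider the path queries $q_R=\{R(\underline{u},x_1),R(\underline{x_1},x_2)\}$ and $q_S=\{S(\underline{u},y_1),S(\underline{y_1},y_2)\}$, both with root variable~$u$. They belong to $\rtbcq$, and since each is a path query it has no two atoms lying in different branches, so $\cbranch$ holds vacuously and the frugal-repair machinery of Section~\ref{sec:certaintrace-ptime} is available. By inspection of $\treecfg{q_R}$, the rooted tree sets it accepts are exactly the $R$-paths of length at least~$2$; hence $\starttwo{q_R}{\rep_R}=A(\rep_R)$ for every repair $\rep_R$ of~$\db_R$, and symmetrically $\starttwo{q_S}{\rep_S}=B(\rep_S)$. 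Set $\mathcal{A}=\{c\in\adom{\db}\mid \db\models\varphi_R(c)\}$ and $\mathcal{B}=\{c\in\adom{\db}\mid\db\models\varphi_S(c)\}$; by Lemma~\ref{lemma:easy}, applied at each active-domain element, $\mathcal{A}$ is exactly the set of constants lying in $A(\rep_R)$ for \emph{every} repair $\rep_R$ of~$\db_R$, and likewise for~$\mathcal{B}$.

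The heart of the argument is the claim that $\db$ is a ``yes''-instance of $\cqa{q}$ if and only if $\mathcal{A}\cap\mathcal{B}\neq\emptyset$. The ``if'' direction is immediate: a constant $c\in\mathcal{A}\cap\mathcal{B}$ lies in $A(\rep_R)\cap B(\rep_S)$ for every repair $\rep_R\cup\rep_S$ of~$\db$. For ``only if'' I argue contrapositively. Let $\rep_R^*$ be a frugal repair of~$\db_R$ with respect to~$q_R$. By Lemma~\ref{lem:key}, $\starttwo{q_R}{\rep_R^*}\subseteq\starttwo{q_R}{\rep_R}$ for every repair $\rep_R$ of~$\db_R$, so $\starttwo{q_R}{\rep_R^*}\subseteq\mathcal{A}$; and since $\rep_R^*$ is itself a repair, $\mathcal{A}\subseteq\starttwo{q_R}{\rep_R^*}$, whence $A(\rep_R^*)=\starttwo{q_R}{\rep_R^*}=\mathcal{A}$. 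Symmetrically, a frugal repair $\rep_S^*$ of~$\db_S$ with respect to~$q_S$ satisfies $B(\rep_S^*)=\mathcal{B}$. If $\mathcal{A}\cap\mathcal{B}=\emptyset$, then $\rep_R^*\cup\rep_S^*$ is a repair of~$\db$ with $A(\rep_R^*)\cap B(\rep_S^*)=\emptyset$, hence one that does not satisfy~$q$. This proves the claim, so $\exists u\,(\varphi_R(u)\wedge\varphi_S(u))$ is a first-order rewriting of $\cqa{q}$, and thus $\cqa{q}$ is in~\FO.

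The main obstacle — and the reason this lemma is delicate, as witnessed by the fact that extending~$q$ to~$q_2$ or~$q_3$ pushes $\cqa{}$ out of \FO — is that, a priori, the constant~$c$ witnessing~$q$ in a repair may depend on the repair, so $\cqa{q}$ is not obviously a ``certain-trace''-type statement. Two facts working together resolve this: the $R$-part and the $S$-part of a repair of~$\db$ may be chosen independently (because $R\neq S$), and on each side the frugal repair realizes the $\subseteq$-smallest possible set of valid starting constants, which coincides with the set of ``certain'' starting points. Everything else — the identification of $\treecfg{q_R}$-accepted trees with $R$-paths, and the first-order definability of $\mathcal{A}$ and~$\mathcal{B}$ via $\varphi_R,\varphi_S$ — is routine.
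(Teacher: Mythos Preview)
Your proof is correct and reaches the same first-order rewriting $\exists u\,(\varphi_R(u)\wedge\varphi_S(u))$ as the paper, but the route differs. For the contrapositive direction, the paper gives a direct, elementary construction of a falsifying repair: assuming no constant~$c$ satisfies both $\varphi_R(c)$ and $\varphi_S(c)$, it picks for each~$c$ a ``dead-end'' $R$-fact when $\varphi_R(c)$ fails, and otherwise a dead-end $S$-fact; a short case check shows this repair falsifies~$q$. You instead invoke the frugal-repair machinery: you take a frugal repair $\rep_R^*$ of~$\db_R$ with respect to~$q_R$ and use Lemma~\ref{lem:key} to conclude $A(\rep_R^*)=\mathcal{A}$ exactly, and symmetrically for~$S$; the falsifying repair is then $\rep_R^*\cup\rep_S^*$. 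Your argument is more conceptual---it pinpoints \emph{why} the independent-branches decomposition works, namely that each side admits a repair realizing the $\subseteq$-minimal start set---and would generalize more readily to longer branches or other $\rtbcq$ components sharing a root. The paper's argument is lighter-weight and self-contained, avoiding any appeal to the CFG or frugal-repair apparatus. Both are valid; for this particular~$q$ the paper's direct construction is arguably preferable on grounds of economy, since the frugal repair of~$\db_R$ here amounts to exactly the same dead-end choice anyway.
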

\begin{proof}
Let $\db$ be a database instance.
We show that $\db$ is a ``yes''-instance for $\cqa{q}$ if and only if $\db$ satisfies the formula 
$$\exists c : \varphi_R(c) \land \varphi_S(c).$$

\framebox{$\impliedby$} This direction is straightforward. 
Let $\rep$ be any repair of $\db$. By Lemma~\ref{lemma:easy}, $\rep$ contains a path of $RR$ starting in $c$ and a path $SS$ starting in $c$. Hence $\rep$ satisfies $q$.

\framebox{$\implies$} 
Assume that $\db$ does not satisfy $\rep$. We construct a falsifying repair $\rep$ of $\db$ as follows: For each constant $c \in \adom{\db}$:
\begin{itemize}
\item if $R(\underline{c}, *)$ is empty and $S(\underline{c}, *)$ is nonempty, pick an arbitrary fact from $S(\underline{c}, *)$;
\item if $R(\underline{c}, *)$ is nonempty and $S(\underline{c}, *)$ is empty, pick an arbitrary fact from $R(\underline{c}, *)$;
\item Assume that $R(\underline{c}, *)$ and $S(\underline{c}, *)$ are nonempty. If $\varphi_R(c)$ is false,
we pick $R(\underline{c}, d_1)$ such that $R(\underline{d_1}, *)$ is empty; or otherwise $\varphi_S(c)$ is false,
we pick $S(\underline{c}, d_2)$ such that $S(\underline{d_2}, *)$ is empty.
\end{itemize} 

We argue that $\rep$ is a falsifying repair. Suppose for contradiction that $\rep$ satisfies $q$ and $u$ is mapped to $c$.
Assume that $R(\underline{c}, d_1), S(\underline{c}, d_2) \in \rep$. 
If $\varphi_R(c)$ is false, then we would have picked a fact $R(\underline{c}, d_1)$ such that $R(\underline{d_1}, *)$ is empty, a contradiction, and so is the other case where $\varphi_S(c)$ is false.
\end{proof}

\begin{lemma}
\label{lem:ex2}
Let $q = \{R(\underline{u}, x_1), R(\underline{x_1}, x_2), X(\underline{x_2}, x_3), S(\underline{u}, y_1), S(\underline{y_1}, y_2)\}$.
Then $\cqa{q}$ is in \NL-hard and in \LFPL.
\end{lemma}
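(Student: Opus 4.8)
The plan is to prove the two bounds separately. Write $q = q_R \cup q_S$ with $q_R = \{R(\underline{u}, x_1), R(\underline{x_1}, x_2), X(\underline{x_2}, x_3)\}$ and $q_S = \{S(\underline{u}, y_1), S(\underline{y_1}, y_2)\}$. Thus $q_R, q_S \in \rtbcq$, $\queryvars{q_R} \cap \queryvars{q_S} = \{u\}$, the relation names occurring in $q_R$ and in $q_S$ are disjoint, and for a repair $\rep$ we have $\rep \models q$ iff some constant $c$ simultaneously admits, in $\rep$, a (depth‑two) $RRX$‑path from $c$ and an $SS$‑path from $c$. Note that $q_R$ satisfies $\ctwo$ but violates $\cone$: there is no root‑homomorphism from $\rootedAt{x_1}{q_R}$ to $\rootedAt{u}{q_R}$, since the $X$‑labelled vertex cannot be mapped onto an $R$‑labelled vertex. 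Hence, by Theorem~\ref{thm:trichotomy}, $\cqa{q_R}$ is \NL-hard and in \LFPL, while $\cqa{q_S}$ is in \FO.

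For the \NL-hardness of $\cqa{q}$ I would reduce from $\cqa{q_R}$. Given an instance $\db'$ of $\cqa{q_R}$, which we may assume contains only $R$‑ and $X$‑facts, put $\db := \db' \cup \{S(\underline{c}, s_c), S(\underline{s_c}, t_c) \mid c \in \adom{\db'}\}$ with all the $s_c, t_c$ fresh and pairwise distinct. The added $S$‑facts are mutually key‑disjoint and key‑disjoint from $\db'$, so the repairs of $\db$ are exactly the sets $\rep' \cup \{S(\underline{c}, s_c), S(\underline{s_c}, t_c) \mid c \in \adom{\db'}\}$ for $\rep'$ a repair of $\db'$; moreover in every such repair the constants admitting an $SS$‑path are precisely the elements of $\adom{\db'}$ (from $s_c$ there is the fact $S(\underline{s_c}, t_c)$ but no $S$‑fact keyed by $t_c$, and $\db'$ has no $S$‑facts). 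Since $R$‑ and $X$‑facts involve only constants of $\adom{\db'}$, for every repair $\rep$ of $\db$ we get $\rep \models q$ iff $\rep$ contains an $RRX$‑path rooted at some constant of $\adom{\db'}$ iff the repair $\rep \cap \db'$ of $\db'$ satisfies $q_R$. Hence $\db \in \cqa{q}$ iff $\db' \in \cqa{q_R}$, and since $\db' \mapsto \db$ is a first‑order reduction, $\cqa{q}$ is \NL-hard.

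For membership in \LFPL\ I would give a fixpoint algorithm in the spirit of Figure~\ref{fig:algo}. First compute, in \FO, the unary relations $B_S := \{c \mid \exists y\, S(\underline{c}, y) \land \forall y(S(\underline{c}, y) \to \exists z\, S(\underline{y}, z))\}$ (every repair has an $SS$‑path from $c$) and $B_1 := \{c \mid \exists e\, R(\underline{c}, e) \land \forall e(R(\underline{c}, e) \to \exists f\, X(\underline{e}, f))\}$ (every repair has an $R$‑step out of $c$ landing on a vertex carrying an $X$‑fact). Then compute the least set $\mathsf{Cert}$ of constants such that $c \in \mathsf{Cert}$ whenever $c \in B_S$, $R(\underline{c}, *) \neq \emptyset$, and for every $b$ with $R(\underline{c}, b) \in \db$ either $b \in \mathsf{Cert}$, or $R(\underline{b}, *) \neq \emptyset$ and for every $e$ with $R(\underline{b}, e) \in \db$ at least one of $X(\underline{e}, *) \neq \emptyset$, $(b \in B_S \land e \in B_1)$, $e \in \mathsf{Cert}$ holds. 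This operator is monotone in $\mathsf{Cert}$ (with $B_S, B_1$ precomputed in \FO), so $\mathsf{Cert}$ is \LFPL-expressible. I claim $\db \in \cqa{q}$ iff $\mathsf{Cert} \neq \emptyset$. Soundness ($\Leftarrow$) is an induction on the stage at which a constant enters $\mathsf{Cert}$: if $c$ enters, then in any repair $\rep$ the chosen fact $R(\underline{c}, b) \in \rep$ is handled by one of the three cases applied to the chosen $e$ with $R(\underline{b}, e) \in \rep$ — $X(\underline{e}, *) \neq \emptyset$ together with $c \in B_S$ gives $\rep \models q$ with root $c$; $b \in B_S \land e \in B_1$ gives $\rep \models q$ with root $b$; and $b \in \mathsf{Cert}$ or $e \in \mathsf{Cert}$ at an earlier stage gives $\rep \models q$ by the induction hypothesis.

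The main obstacle is the completeness direction: assuming $\mathsf{Cert} = \emptyset$, one must exhibit a single repair that falsifies $q$. Unrolling the definition, every constant $c$ either lies outside $B_S$ (so its $SS$‑path can be broken by a local choice in $S(\underline{c}, *)$), or has no $R$‑fact, or possesses an $R$‑successor that is ``defeatable'' in the precise sense above. The plan is to fix, block by block: each $S$‑block so that $SS$‑paths survive exactly from the constants of $B_S$; each $R$‑block $R(\underline{c}, *)$ with $c \in B_S$ to a defeatable successor; and each $R$‑block used as the ``middle'' relation to a successor carrying no $X$‑fact. The delicate point — and the reason $q$ falls just outside the reach of our general theorems, cf.\ Proposition~\ref{prop:single-pk} — is to verify that these demands never conflict on a single $R$‑block, a block possibly being required at once to defeat $q$ ``as a root'' and ``as a middle relation'' for another candidate; one argues, using the $\mathsf{Cert} = \emptyset$ hypothesis and the fact that $B_S$ is a fixed first‑order set, that a single choice always satisfies both demands, for otherwise some constant would be forced into $\mathsf{Cert}$. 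Once such a repair is produced, no constant can serve as the shared primary‑key value $u$, so the repair falsifies $q$; this completes the proof that $\db \in \cqa{q}$ iff $\mathsf{Cert} \neq \emptyset$, hence $\cqa{q} \in \LFPL$.
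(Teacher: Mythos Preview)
Your \NL-hardness argument is correct and essentially identical to the paper's: the paper reduces directly from \textsf{REACHABILITY} by adding an $SS$-path from every vertex of the graph, while you reduce from $\cqa{q_R}$ (already \NL-hard by Theorem~\ref{thm:trichotomy}) via the very same construction.

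For \LFPL\ membership, your fixpoint is in the right spirit---like the paper, you restrict the ``rewind'' of the $RRX$ computation to positions where $\varphi_S$ (your $B_S$) holds---and your soundness induction goes through. The gap is exactly where you flag it: completeness. You correctly isolate the difficulty (a single $R$-block may be asked simultaneously to defeat $q$ as a root and as a middle step for another root), but your proposed construction is not well-defined: which blocks count as ``used as the middle relation'' depends on the choices already made, so the two passes you describe cannot be carried out independently, and ``for otherwise some constant would be forced into $\mathsf{Cert}$'' is a hope, not an argument.

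The paper resolves this with a concrete device that avoids any sequential dependence: to each fact $R(\underline{c},d)$ it assigns an intrinsic \emph{frugal index} in $\{0,1,2\}$ (index~$0$ if $\varphi_S(c)\wedge f_1$ holds, index~$1$ if only $f_2$ holds, index~$2$ otherwise), and the refined frugal repair picks from each block a fact of maximum index. This is a single globally coherent choice per block. One then proves that this repair contains an ``extended $RRX$-path'' from $c$---an $R^{n}X$ path, $n\geq 2$, with $SS$-paths available at each of the first $n{-}1$ vertices---only if $\pair{c}{u}\in N$, which yields completeness. Note also that your operator is not literally the paper's: you use $e\in B_1$ and $e\in\mathsf{Cert}$ where the paper recurses on $\pair{e}{x_1}$; these are close but not identical, so even granting the paper's frugal-repair argument you would still need to check that it transfers to your specific fixpoint.
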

\begin{proof}
The \NL-hardness proof follows by modifying the proof of Lemma 7.1 in~\cite{DBLP:conf/pods/KoutrisOW21} to also add a copy of $SS$ path starting in every vertex $v \in \{s'\} \cup V$.

Let $\db$ be a database instance. We revise the algorithm in Fig.~\ref{fig:algo} for $\cqa{RRX}$ as follows:
\begin{itemize}
\item Initialize $N = \{\pair{c}{x_2} \mid X(\underline{c}, *) \neq \emptyset\}$
\item while $N$ is not fixed:
\item \hspace{1cm} add $\pair{c}{x_1}$ to $N$ if 
$$\exists d R(\underline{c}, d) \land \forall d \formula{R(\underline{c}, d) \rightarrow f_2(R(\underline{c}, d), R(\underline{x_1}, x_2))},$$ where 
$$f_2(R(\underline{c}, d), R(\underline{x_1}, x_2)) = \pair{d}{x_2} \in N \lor \underbrace{\formula{\varphi_S(c) \land f_1(R(\underline{c}, d), R(\underline{u}, x_1))}}_{\text{rewinding to $R(\underline{u}, x_1)$ allowed only when $\varphi_S(c)$ is true}}$$
\item \hspace{1cm} add $\pair{c}{u}$ to $N$ if 
$$\varphi_S(c) \land \exists d R(\underline{c} ,d) \land \forall d \formula{R(\underline{c}, d) \rightarrow f_1(R(\underline{c}, d), R(\underline{u}, x_1))},$$ where 
$$f_1(R(\underline{c}, d), R(\underline{u}, x_1)) = \pair{d}{x_1} \in N.$$
\end{itemize}

Our algorithm essentially first computes the set $N$ and then checks $\exists c : \pair{c}{u} \in N$. 
Notice that this algorithm is in \LFPL.
To show correctness, we argue that there exists a constant $c$ such that $\pair{c}{u} \in N$ if and only if $\db$ is a ``yes''-instance for $\cqa{q}$.

To this end, we need to define a refined notion of frugal repairs that take into account of $\varphi_S(u)$ constructed as follows:
\begin{itemize}
\item pick an arbitrary fact from every nonempty block $X(\underline{c}, *)$;
\item for every nonempty block $S(\underline{c}, *)$, if $\varphi_S(c)$ is true, pick an arbitrary fact; or otherwise, pick $S(\underline{c}, d)$ such that $S(\underline{d}, *)$ is empty; and 
\item for every fact $R(\underline{c}, d)$ in the block $R(\underline{c}, *)$, we define the frugal index of $R(\underline{c}, d)$ to be 
\begin{itemize}
\item $0$, if $\varphi_S(c) \land f_1(R(\underline{c}, d), R(\underline{u}, x_1))$ is true;
\item $1$, if $\varphi_S(c) \land f_1(R(\underline{c}, d), R(\underline{u}, x_1))$ is false and $f_2(R(\underline{c}, d), R(\underline{x_1}, x_2))$ is true; or 
\item $2$, otherwise. 
\end{itemize}
We then pick the fact $R(\underline{c}, d)$ from each nonempty block $R(\underline{c}, *)$ with the largest frugal index.
\end{itemize}

\begin{definition}
An extended $RRX$-path in a repair $\rep$ is a sequence of facts $R(\underline{c_0}, c_1), R(\underline{c_1}, c_2), \dots, R(\underline{c_{n-1}}, c_{n}), X(\underline{c_n}, c_{n+1})$ in $\rep$ for some $n \geq 2$ such that for every $0 \leq i \leq n-2$, there exists an $SS$-path in $\rep$ starting in $c_i$.
\end{definition}


The following claim concludes the proof.

\begin{claim}
\label{cl:newfrugal}
The following statements are equivalent:
\begin{enumerate}
\item $\pair{c}{u} \in N$;
\item there exists an extended $RRX$-path in $\rep^*$ starting in $c$; and
\item for every repair $\rep$ in $\db$, there exists an extended $RRX$-path in $\rep$ starting in $c$.
\end{enumerate}
\end{claim}
\begin{proof}
\framebox{(3) $\implies$ (2)} Straightforward.
\framebox{(2) $\implies$ (1)} 
Let $R(\underline{c_0}, c_1), R(\underline{c_1}, c_2), \dots, R(\underline{c_{n-1}}, c_{n}), X(\underline{c_n}, c_{n+1})$ be an extended $RRX$-path in $\rep^*$ for some $n \geq 2$ and $c_0 =c$.
Since for every $0 \leq i \leq n-2$, there exists an $SS$-path in $\rep^*$ starting in $c_i$, by construction of $\rep^*$, we have that $\varphi_S(c_i)$ is true for every $0 \leq i \leq n-2$.

We have that $\pair{c_n}{x_2} \in N$. Then for $R(\underline{c_{n-1}}, *)$, we have that $f_2(R(\underline{c_{n-1}}, c_n), R(\underline{x_1}, x_2))$ is true. Then by the choice of the frugal repair $\rep^*$, the frugal index of $R(\underline{c_{n-1}}, c_n)$ is at most 1, and thus $f_2(R(\underline{c_{n-1}}, c'_n), R(\underline{x_1}, x_2))$ is true for every fact $R(\underline{c_{n-1}}, c'_n)$ in the block $R(\underline{c_{n-1}}, *)$. Hence $\pair{c_{n-1}}{x_1} \in N$.

Then, we must have $f_1(R(\underline{c_{n-2}}, c_{n-1}, R(\underline{u}, x_1))$ is true, because $\pair{c_{n-1}}{x_1} \in N$. Notice that $\varphi_S(c_{n-2})$ is true, the frugal index of $R(\underline{c_{n-2}}, c_{n-1})$ is $0$. Then by construction of $\rep^*$, for every fact $R(\underline{c_{n-2}}, c_{n-1}')$ in block $R(\underline{c_{n-2}}, *)$, $f_1(R(\underline{c_{n-2}}, c_{n-1}, R(\underline{u}, x_1))$ is true. Hence $\pair{c_{n-2}}{u} \in N$. 
Additionally, for every fact $R(\underline{c_{n-2}}, c_{n-1}')$ in block $R(\underline{c_{n-2}}, *)$, $\varphi_S(c_{n-2}) \land f_1(R(\underline{c_{n-2}}, c_{n-1}, R(\underline{u}, x_1))$ is true, and thus $f_2(R(\underline{c_{n-2}}, c'_{n-1}, R(\underline{x_1}, x_2))$ is also true. This gives $\pair{c_{n-2}}{x_1} \in N$.

This argument may continue, until we yield that $\pair{c_0}{u} = \pair{c}{u} \in N$.

\framebox{(1) $\implies$ (3)} 
Assume that $\pair{c}{u}$ in $N$.
Let $\rep$ be any repair of $\db$. We inductively construct an extended $RRX$-path in $\rep$ starting in $c$.
Assume that $R(\underline{c_0}, c_1) \in \rep$ with $c_0 = c$. 
Hence $\varphi_S(c_0)$ is true, and that $\pair{c_1}{x_1} \in N$.
Let $R(\underline{c_1}, c_2) \in \rep$, and we have $f_2(R(\underline{c_1}, c_2), R(\underline{x_1}, x_2))$ is true.
If $\pair{c_2}{x_2} \in N$, let $X(\underline{c_2}, c_3) \in \rep$ and then the proof is complete since $R(\underline{c_0}, c_1), R(\underline{c_1}, c_2), X(\underline{c_2}, c_3)$ is an extended $RRX$-path in $\rep$.
Otherwise, we have that $\varphi_S(c_1) \land f_1(R(\underline{c_1}, c_2), R(\underline{u}, x_1))$ is true. 
Therefore, $\varphi_S(c_1)$ is true and $\pair{c_2}{x_1} \in N$.
This process thus continues, until we have produced facts $R(\underline{c_0}, c_1), R(\underline{c_1}, c_2), \dots, R(\underline{c_{n-1}}, c_n)$, such that $\varphi_S(c_i)$ is true for $0 \leq i \leq n-2$ and $\pair{c_n}{x_2} \in N$, which gives a fact $X(\underline{c_n}, c_{n+1}) \in \rep$. Hence there exists an extended $RRX$-path in $\rep$ starting in $c_0 = c$.
\end{proof}

Now, assume that $\pair{c}{u} \in N$. Then by Claim~\ref{cl:newfrugal}, every repair $\rep$ of $\db$ contains an extended $RRX$-path starting in $c$, which satisfies $q$. Assume that $\pair{c}{u} \notin N$. Then by Claim~\ref{cl:newfrugal}, there is no extended $RRX$-path in $\rep^*$, and thus $\rep^*$ does not satisfy $q$. 
\end{proof}

\begin{lemma}
\label{lem:ex3}
Let $q = \{R(\underline{u}, x_1), R(\underline{x_1}, x_2), X(\underline{x_2}, x_3), S(\underline{u}, y_1), S(\underline{y_1}, y_2), Y(\underline{y_2}, y_3)\}$.
Then $\cqa{q}$ is \coNP-complete.
\end{lemma}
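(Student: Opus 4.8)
Membership in \coNP\ is immediate, since $\cqa{q'}$ is in \coNP\ for every $q'$ in $\bcq$: a nondeterministic machine guesses a repair of the input database $\db$ and verifies in polynomial time that it falsifies $q$. All the work is in the lower bound, and the plan is to give a polynomial-time reduction from the complement of \textsf{3SAT}; equivalently, to reduce \textsf{3SAT} to the \NP\ problem ``does some repair of $\db$ falsify $q$?''.

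Given a $3$-CNF formula $\varphi$ over variables $z_1,\dots,z_n$ with clauses $\psi_1,\dots,\psi_m$, I would build a database $\db$ over the schema $\{R,S,X,Y\}$ whose only inconsistent blocks are $R$-blocks and $S$-blocks. A key difference from $q_2$ is that here \emph{both} the $RRX$-branch and the $SSY$-branch are nontrivial in isolation (for $q_2$ the $SS$-branch was \FO-rewritable), so both relations can carry meaningful choices. The blocks are to be arranged so that each repair of $\db$ encodes (i) a truth assignment to $z_1,\dots,z_n$, stored redundantly along both the $R$-chains and the $S$-chains, and (ii) for each clause, a choice of one of its literals as an intended satisfying witness. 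The two branches of $q$ emanate from a single primary-key variable $u$, and a single $R$-fact $R(\underline{c},d)$ can serve as either $R(\underline{u},x_1)$ or $R(\underline{x_1},x_2)$ in a homomorphism -- the ``rewinding'' ambiguity of Definition~\ref{definition:rewinding}, which is exactly what makes the self-joins on $R$ and on $S$ decisive here. Using this, the gadgets should be built so that, for a constant $c$ attached to a clause-literal occurrence, the $RRX$-branch rooted at $u=c$ detects the truth value the repair assigns to that literal, while the $SSY$-branch rooted at the same $u=c$ detects whether that occurrence's clause selected this occurrence as its witness; a canonical copy of $q$ rooted at $c$ then exists in a repair precisely when the selected witness is contradicted by the assignment. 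Auxiliary copies of $q$ are added to force the redundant $R$-copy and $S$-copy of each variable's value to coincide. With such a construction, a repair falsifies $q$ if and only if its encoded witness selection is sound with respect to its encoded assignment, which happens for some repair if and only if $\varphi$ is satisfiable; hence $\db$ is a ``yes''-instance of $\cqa{q}$ if and only if $\varphi$ is unsatisfiable.

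The routine direction is that a satisfying assignment of $\varphi$ yields a falsifying repair: set each variable gadget according to the assignment and each clause gadget to point at a true literal, then check that every canonical copy of $q$ is broken. The hard part, and where most care is needed, is the converse -- showing that an \emph{arbitrary} repair falsifying $q$ must encode a satisfying assignment -- because one has to rule out \emph{unintended} homomorphic images of $q$. Since the $R$-path and the $S$-path of $q$ share only the root $u$, a copy of $q$ in a repair could a priori have its two length-two self-joining paths land in fragments of different gadgets or straddle gadget boundaries; the plan is to forbid this by taking all fresh constants of the canonical copies pairwise distinct and by making the relation names and path lengths rigid enough that the only embeddings of the $R,R,X$ path (and symmetrically of the $S,S,Y$ path) are the intended ones. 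Once spurious embeddings are excluded, the equivalence ``the repair satisfies $q$ iff some selected witness is contradicted iff the encoded assignment falsifies $\varphi$'' follows, completing the reduction. The construction is in the spirit of -- and can be bootstrapped from -- the \coNP-hardness reductions used earlier for queries such as $\{R(\underline{x},z),S(\underline{y},z)\}$ and for path queries in~\cite{DBLP:conf/pods/KoutrisOW21}, with the shared primary-key variable $u$ serving as the join point at which two otherwise-independent branches are cross-checked.
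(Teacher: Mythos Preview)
Your plan is in the right spirit but remains a sketch, and it takes a noticeably harder route than the paper. The paper reduces from \textsf{Monotone SAT}, not general \textsf{3SAT}, and this is the key simplification you are missing. With monotone clauses there is no need to encode negation inside a clause gadget; positive and negative clauses are handled by \emph{structurally different} gadgets. Concretely, for each variable~$z$ the paper creates two incompatible copies rooted at~$z$: $z^{+}$ carries a \emph{short} $R$-branch ($RX$) and a \emph{long} $S$-branch ($SSY$), whereas $z^{-}$ carries a \emph{long} $R$-branch ($RRX$) and a \emph{short} $S$-branch ($SY$). A positive clause~$C$ carries its own $RRX$-branch and a single $S$-edge $S(\underline{C},z)$ into the variable; thus $q$ holds at~$C$ iff the variable contributes the short $SY$, i.e., iff $z^{-}$ (false) was chosen. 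Dually a negative clause $\overline{C}$ carries its own $SSY$-branch and a single $R$-edge into the variable. Consistency of the two ``halves'' of a variable is enforced automatically by the root~$z$ itself: choosing the long $R$-branch \emph{and} the long $S$-branch at~$z$ would satisfy $q$ at~$z$, so a falsifying repair must pick one short and one long, which is exactly $z^{+}$ or~$z^{-}$.

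Against this, two aspects of your plan are underspecified enough to count as gaps. First, ``auxiliary copies of $q$ \dots\ to force the redundant $R$-copy and $S$-copy of each variable's value to coincide'': adding canonical copies of~$q$ to~$\db$ does not force anything; what you need is that any \emph{disagreement} between the two encodings \emph{creates} an embedding of~$q$, and you have not said how to achieve that (the paper gets it for free from the short/long dichotomy at the variable root). Second, handling mixed-polarity clauses in general \textsf{3SAT} requires the same gadget to test a literal and its negation; you assert that the $RRX$-branch ``detects the truth value'' of a literal at a clause-occurrence constant, but you give no construction, and making this work symmetrically for both polarities while excluding unintended embeddings is exactly where such reductions usually fail. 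I would switch to \textsf{Monotone SAT} and adopt the short/long path idea; the resulting construction is short and the ``no unintended embeddings'' verification becomes routine.
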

\begin{proof}
For \coNP-hardness, we present a reduction from \textsf{Monotone SAT}: given a monotone CNF formula $\varphi$, does $\varphi$ have a satisfying assignment?

Let $\varphi$ be a monotone CNF formula. We construct a database $\db$ for $\cqa{q}$ as follows:

\begin{itemize}
\item for each variable $z$ in $\varphi$, introduce a copy of 
$$z^- = \qcopy{\{R(\underline{u}, x_1), R(\underline{x_1}, x_2), X(\underline{x_2}, x_3), S(\underline{u}, y_1), Y(\underline{y_1},y_2)\}}{u}{z}$$
 and a copy of 
$$z^+ = \qcopy{\{R(\underline{u}, x_1), X(\underline{x_1}, x_2), S(\underline{u}, y_1), S(\underline{y_1},y_2), Y(\underline{y_2}, y_3)\}}{u}{z};$$
\item for each positive literal $z$ in a positive clause $C$, introduce a copy of 
$$C_z = \qcopy{\{R(\underline{u}, x_1), R(\underline{x_1}, x_2), X(\underline{x_2}, x_3), S(\underline{u}, y_1)\}}{u, y_1}{C, z},$$ 
\item for each negative literal $\overline{z}$ in a negative clause $\overline{C}$, introduce a copy of
$$\overline{C}_z = \qcopy{\{R(\underline{u}, x_1), S(\underline{u}, y_1), S(\underline{y_1}, y_2), Y(\underline{y_2}, y_3)\}}{u, x_1}{\overline{C}, z}.$$ 
\end{itemize}

Note that by construction, the only inconsistencies happen at each block $R(\underline{z}, *)$ and $S(\underline{z}, *)$, because each $R$-block may choose either an $R$-edge or an $RX$-edge; and each $S$-block may choose either an $S$-edge or an $SY$-edge. Surprisingly, there are $4$ possible repairs for each variable $z$, but it is sufficient to encode a Boolean choice between $z = 0$ and $z = 1$. An example gadget is shown in Fig.~\ref{fig:cfo-hardness}.

We show that $\varphi$ has a satisfying assignment if and only if there is a repair of $\db$ that does not satisfy $q$.

\framebox{$\implies$}
Assume that $\sigma$ is a satisfying assignment to $\varphi$. We now construct a repair $\rep$ of $\db$ as follows. For each variable $z$, if $\sigma(z) = 1$, then we pick $z^+ \subseteq \rep$, or otherwise we pick $z^- \subseteq \rep$; and for each positive clause $C$, there must be some literal $z \in C$ with $\sigma(z) = 1$, we pick $C_z \subseteq \rep$; and for each negative clause $\overline{C}$, there must be literal $\overline{z} \in \overline{C}$ with $\sigma(z) = 0$, we pick $\overline{C}_z \subseteq \rep$.

We argue that $\rep$ does not satisfy $q$. Indeed, by construction, for each clause $C$, we have 
$C_z \subseteq \rep$, but for the variable $z$, we picked $z^+ \subseteq \rep$, and they cannot satisfy the query $q$. Similarly, every negative clause $\overline{C}$ cannot satisfy $q$. Each variables are picked so that they also do not satisfy $q$. We conclude that $\rep$ does not satisfy $q$.

\framebox{$\impliedby$}
Assume that $\rep$ is a repair of $\db$ that does not satisfy $q$. Consider the assignment $\sigma$ that assigns $\sigma(z) = 1$, for every variable $z$ with $z^+ \subseteq \rep$, and assigns $\sigma(z) = 0$ for every $z^- \subseteq \rep$, and assign all other variables arbitrarily.

We now argue that $\sigma$ is a satisfying assignment. Let $C$ be any positive clause and assume $C_z \subseteq \rep$. Then the repair must choose the path $SSY$ starting in $z$ (or otherwise $\rep$ satisfies $q$ rooted at the clause constant $C$), and consequently the path $RX$ starting in $z$ (or otherwise $\rep$ satisfies $q$ rooted at the variable constant $z$). Hence $z^+ \subseteq \rep$, and we set $\sigma(z) = 1$. Hence every positive clause $C$ is satisfied. Let $\overline{C}$ be any positive clause and assume $\overline{C}_z \subseteq \rep$. Then the repair must similarly choose the path $RRX$ and $SY$ starting in $z$, or otherwise $\rep$ satisfies $q$. Then $z^- \subseteq \rep$, and we set $\sigma(z) = 0$. Hence every negative clause $\overline{C}$ is satisfied. 
\end{proof}

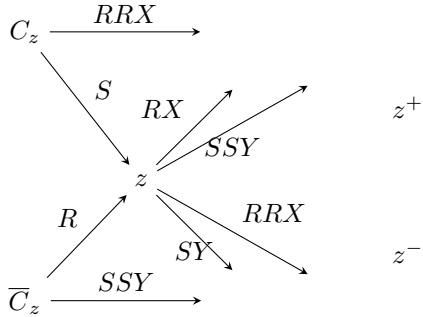
\begin{figure}[h]
\centering
\begin{tikzpicture}[->,>=stealth,auto=left, scale=0.8,vnode/.style={circle,black,inner sep=1pt,scale=1},el/.style = {inner sep=3/2pt}]
\node[] (c)  {${C}_z$};
\node[] (cend) [right=2cm of c] {};
\node[] (z) [below right=1.5cm and 1cm of c] {$z$};
\node[] (oc) [below=3cm of c] {$\overline{C}_z$};
\node[] (ocend) [right=2cm of oc] {};

\node (rx) [above right=1cm and 1cm of z] {};
\node (ssy) [above right=1cm and 2cm of z] {};
\node (rrx) [below right=1cm and 2cm of z] {};
\node (sy) [below right=1cm and 1cm of z] {};

\node (zminus) [below right=0.5cm and 3cm of z] {$z^-$};
\node (zplus) [above right=0.5cm and 3cm of z] {$z^+$};

\path[->]
(c) edge node {$RRX$} (cend)
(c) edge node {$S$} (z)
(oc) edge node {$SSY$} (ocend)
(oc) edge node {$R$} (z)
(z) edge node {$RX$} (rx)
(z) edge node {$RRX$} (rrx)
(z) edge node[below] {$SY$} (sy)
(z) edge node[below] {$SSY$} (ssy)
;
\end{tikzpicture}
\caption{The gadget used in Lemma~\ref{lem:ex3}}
\label{fig:cfo-hardness}
\end{figure}

\begin{proof}[Proof of Proposition~\ref{prop:single-pk}]
Follows from Lemma~\ref{lem:ex1},~\ref{lem:ex2} and~\ref{lem:ex3}.
\end{proof}

\end{document}